\theoremstyle{definition}
\newtheorem{thm}{Theorem}[section]
\newtheorem{lem}[thm]{Lemma}
\theoremstyle{definition}
\newtheorem{rem}[thm]{Remark}
\def\R{{\mathbb R}}
\def\Z{{\mathbb Z}}
\def\C{{\mathbb C}}
\def\id{\mathop{\widehat{\mathrm{id}}}\nolimits}
\newcommand{\kket}[1]{| #1 )}
\definecolor{Forest}{HTML}{288c66}
\newcommand{\rd}[1]{#1}
\newcommand{\bv}[1]{{\boldsymbol #1}}
\newcommand{\bka}[1]{\left ( {#1} \right)}
\newcommand{\bkb}[1]{\left\{ {#1} \right\}}
\newcommand{\bkc}[1]{\left[ {#1} \right]}
\newcommand{\abs}[1]{\left| {#1} \right|}
\newcommand{\dif}{\mathrm{d}}
\newcommand{\bvec}[1]{\boldsymbol{#1}}
\newcommand{\pr}{\widehat{\mathrm{pr}}}
\newcommand{\alit}[1]{\begin{align}\begin{split} #1 \end{split}\end{align}}
\newcommand{\Shift}[2]{f_{#1}^{#2}}
\newcommand{\ShiftOp}[2]{\widehat{\Delta}_{#1}^{#2}}	
\newcommand{\PermOp}[1]{\widehat{\Pi}_{#1}}						
\newcommand{\CheckOp}[2]{\widehat{\Xi}_{#1}^{#2}}			
\newcommand{\StrictCheckOp}[2]{\widehat{\Upsilon}_{#1}^{#2}}			
\newcommand{\TransOp}[2]{\widehat{T}_{#1}^{#2}}										
\newcommand{\BetaTerm}[1]{\widehat{U}_{#1}}										
\newcommand{\SubOp}[1]{\widehat{P}_{#1}}											
\newcommand{\AbstOp}{\widehat{A}}															
\newcommand{\DiffEnergy}[2]{\mathcal{D}_{#1}^{#2}E}		
\newcommand{\dP}{+}				
\newcommand{\dM}{-}				
\newcommand{\dPM}{\pm}		
\newcommand{\sctZero}[1]{\subsection{#1}}							
\newcommand{\sctOne}[1]{\subsubsection{#1}}						
\newcommand{\BOp}[1]{\widehat{B}_{#1}}
\newcommand{\GOp}{\widehat{G}_0}
\newcommand{\Pcan}{P_{\mathrm{can}}}
\begin{document}
\title{Emergent centrality in rank-based supplanting process}
\author{Kenji\ Shimomura$^1$, Yasuhiro\ Ishitsuka$^2$, and Hiroki\ Ohta$^3$}

\affiliation{
$^1$Center for Gravitational Physics and Quantum Information,\\
Yukawa Institute for Theoretical Physics, Kyoto University, Kyoto, 606–8502, Japan
\\
$^2$Institute of Mathematics for Industry, Kyushu University, Fukuoka, 819–0395, Japan \\
$^3$Department of Human Sciences, Obihiro University of Agriculture and Veterinary Medicine, Hokkaido, 080-8555, Japan}
\date{\today}


\begin{abstract}
  We propose a stochastic process of interacting many agents, which is inspired by rank-based supplanting dynamics commonly observed in a group of Japanese macaques. 
  In order to characterize the breaking of permutation symmetry with respect to agents' rank in the stochastic process, we introduce a rank-dependent quantity, {\em overlap centrality}, which quantifies how often a given agent overlaps with the other agents.
  We give a sufficient condition in a wide class of the models such that overlap centrality shows perfect correlation in terms of the agents' rank in zero-supplanting limit.
  We also discuss a singularity of the correlation in the case of interaction induced by a Potts energy.
\end{abstract}

\maketitle

\section{Introduction}
One of promising candidates for going a step further in studying a many-body system is to construct a lattice model of interacting elements or agents which describes a many-body system. 
This strategy is not only applied to equilibrium systems \cite{Baxter} but also 
nonequilibrium systems \cite{Odor}. 
One of such attentions has been put on nonequilibrium lattice models such as driven lattice gas \cite{DLG},
ASEP \cite{ASEP}, ABC model \cite{ABC}, zero range process \cite{ZRP1,ZRP2}, etc., where emergent macroscopic property such as phase transition is one of the topics to be elucidated.
Recent development on active matter has focused on experimentally realizable systems such as colloidal or biological systems showing various phase transition  
such as flocking transition, lane formation, or motility-induced phase separation \cite{Vicsek,Lane1,Bacteria}. In the framework of statistical physics, it is of interest to look for an analytically tractable and minimal model for such phenomena \cite{V-Solon,Lane2,Hydro}.

Apart from such model-based studies, in the context of network theory,
the concept of centrality plays one of important roles in studying a given network induced by a many-body system consisting of inhomogeneous agents. 
Centrality has been particularly used in the literature of social network analysis to characterize 
which element on a network is the most influential. 
Depending on the purpose of network analysis, 
various measures of centrality such as degree centrality, closeness centrality, PageRank, eigenvector centrality, etc., \ have been proposed and found to be useful to characterize network structures \cite{Freeman, Bonacich, Barrat,Newman}.

As an example of inhomogeneous agents, 
primate species often live through interacting with members in a group \cite{primatebook}.
It has been reported in a primate species that the individuals, which we call agents, with high rank in social dominance tend to have high rank also in the eigenvector centrality of adjacency matrix for a graph composed from agents positions \cite{pecentrality1,pecentrality2}. 
In particular, Japanese macaques form a group living together and each agent in a group has its rank along the linear social dominance in the group,
leading to rank-dependent repulsion between two agents \cite{supplanting,Nakagawa}. 
This is so-called {\it supplanting} phenomenon which we mainly focus on.

In this paper, \rd{we focus on one of the intersections between lattice models and the network theory from the viewpoint of inhomogeneous agents.} We propose a new type of nonequilibrium lattice model, which 
is inspired by a supplanting phenomenon occurring between two agents in a group of Japanese macaques. 
The main objective of this paper is to show that a new type of macroscopic correlation appears when supplanting process, which is \rd{a rank-based} interaction with broken detailed balance, is added to an equilibrium system. It turns out that this problem can be mapped 
to computing a type of centrality, which we call {\em overlap centrality}, for a complete graph derived 
naturally from the correlations of agents' positions.

This paper consists of five sections. 
In Section \ref{Model}, 
we introduce a class of models breaking permutation symmetry that we study in this paper.
In Section \ref{Potts}, 
focusing on the case of the model where the interaction induced by the Potts energy is assumed, we provide a brief review of the equilibrium properties, introduce overlap centrality, and compute it by exact diagonalization of transition matrix. 
In Section \ref{Rig}, 
we provide the proof for the main result that  overlap centrality characterizing how often a given agent overlaps with the other agent shows perfect correlation with respect to the ranking of the given agent in zero-supplanting limit. 
This result holds rather generally, which is not limited to the case where the Potts energy is assumed as the source of an equilibrium interaction. Further, a conjecture about the existence of a singularity of the correlation is discussed for the case of the Potts energy.
In Section \ref{CR} as concluding remarks, 
we summarize the results and some subjects of future considerations.

\section{Model}\label{Model}
Let $N \ge 2$ be the number of agents, and $L \ge 3$ be the length of the one-dimensional lattice $X \coloneqq \Z/L\Z = \bkb{0,1,\ldots,L-1}$.
Let us denote by $i\in\{1,2,\cdots,N\}$ an agent, 
and by integer $x_i\in X$ the position of agent $i$.
We also regard the number $i$ identifying an agent as \textit{rank} of that agent. We say that 
rank $i$ is \textit{higher} (\textit{lower}) than $j$ if $i<j$ ($i>j$) ; for example, rank $i$ is higher than rank $i+1$.
Let us also write the collection of elements $a_i$ labelled by $1 \le i \le n$ as $(a_i)_{i=1}^{n}$ or the bold symbol $\bv{a}$. In particular we write a set of positions of agents as $\bvec{x} = (x_i)_{i=1}^{N}$. Hereafter, we call $\bv{x} = (x_i)_{i=1}^{N}$ a \textit{configuration}. We consider a hopping map $f_i^{\pm}$ such that $f_i^{\pm}\bvec{x}\coloneqq (x_j\pm\delta(i,j))_{j=1}^N$, where 
$\delta(i,j)$ is the Kronecker delta.
Note that the periodic boundary condition in terms of positions is automatically assumed by definition of $X$.

\subsection{Equilibrium dynamics}\label{sct:Equilbrium_dynamics}
We consider a general class of energy function 
$E(\bm{x}) = E(x_1, x_2, \dots, x_N)$
which is permutation symmetric in the following sense:
\begin{equation}\label{sym}
    E(x_1, x_2, \dots, x_N) = E(x_{\sigma(1)}, x_{\sigma(2)}, \dots, x_{\sigma(N)}),
\end{equation}
for any permutation $\sigma \in \mathfrak{S}_N$ of $N$ elements, where $\mathfrak{S}_N$ is the symmetric group of order $N$.
In addition, let $\beta$ be a parameter determining the magnitude of the energy including the sign $\pm$.

As an example of the models belonging to the above class, one can consider the following $L$-state  Potts energy $E(\bv{x})$ on the complete graph
where each agent connects with all agents \cite{Potts}:
\begin{align}\label{eq:PottsEnergy}
E(\bv{x}) = -\frac{2(L-1)\log (L-1)}{L-2}\dfrac{1}{2N}\sum_{i=1}^N\sum_{j=1}^N\delta(x_i,x_j).
\end{align}
This case means that an agent interacts with the other agents only if they have overlaps. In this sense, this Potts model on the complete graph is equivalent to the agents with an on-site interaction in one dimension, which may be a simple model to describe interacting agents. The coefficient of \eqref{eq:PottsEnergy} is adjusted so that the phase transition point $\beta_c$ in the equilibrium state is equal to 1, which we will discuss in more detail in Section \ref{Computation of partition function at equilibrium}.

Let us consider a Markov process with discrete time $t$, where during one time step between $t$ and $t+1$, only one of the following possible transitions may occur. The transition probability $T_0(\bv{x}\to f_i^{\pm}\bv{x})$ from each configuration $\bv{x}$  to the configuration $f_i^{\pm}\bv{x}$ for any agent $i$ is
\begin{align} \label{eq0:T0coeff}
T_0(\bv{x}\to f_i^{\pm}\bv{x}) =
\dfrac{1}{2N}\dfrac{1}{1+\exp \left(\beta \mathcal{D}_i^{\pm}E(\bv{x})\right) },
\end{align}
where 
\begin{align}
  &\mathcal{D}_i^{\pm}E(\bv{x}) := E(f_i^{\pm }\bv{x})-E(\bv{x}).
\end{align}

This leads to that the joint probability $P_t(\bv{x})$ of configuration $\bv{x}$ at time $t$ 
satisfies the following master equation:
\begin{align}
  P_{t+1}(\bv{x}) = &\sum_{\bv{x}'\neq\bv{x}}P_t(\bv{x}')T_0(\bv{x}'\to\bv{x})\nonumber\\
  &+ P_t(\bv{x}) \Big( 1-\sum_{\bv{x}'\neq \bv{x}}T_0(\bv{x}\to\bv{x}') \Big),
  \label{mas}
\end{align} 
where the summation over $\bv{x'}$ is done
for all of the possible configurations such that $T_0(\bv{x}\to\bv{x}')$ and $T_0(\bv{x}'\to\bv{x})$ are defined above.

The Gibbs distribution
\begin{align}
  P_{\mathrm{can}}(\bv{x}):=\dfrac{1}{Z_{N}(\beta)}\exp\left( -\beta E(\bv{x}) \right), \label{eq:GibbsDistr}
\end{align}
where $Z_{N}(\beta):=\sum_{\bv{x}}\exp(-\beta E(\bv{x}))$,
is the stationary solution $P_{\mathrm{st}}(\bv{x})$ of the master equation (\ref{mas}), satisfying
\begin{align}
 \sum_{\bv{x}'}P_{\mathrm{st}}(\bv{x}')T_0(\bv{x}'\to\bv{x})
=P_{\mathrm{st}}(\bv{x})\sum_{\bv{x}'\neq \bv{x}}T_0(\bv{x}\to\bv{x}'),
\end{align} 
because the Gibbs distribution satisfies the detailed balance condition:
\begin{align}
P_{\mathrm{can}}(\bv{x})T_0(\bv{x}\to\bv{x'})
=P_{\mathrm{can}}(\bv{x'})T_0(\bv{x'}\to\bv{x}),
\end{align} for any pair $\bv{x},\bv{x'}$ realized by the above dynamics.

\begin{figure}
\includegraphics[width=6cm,clip]{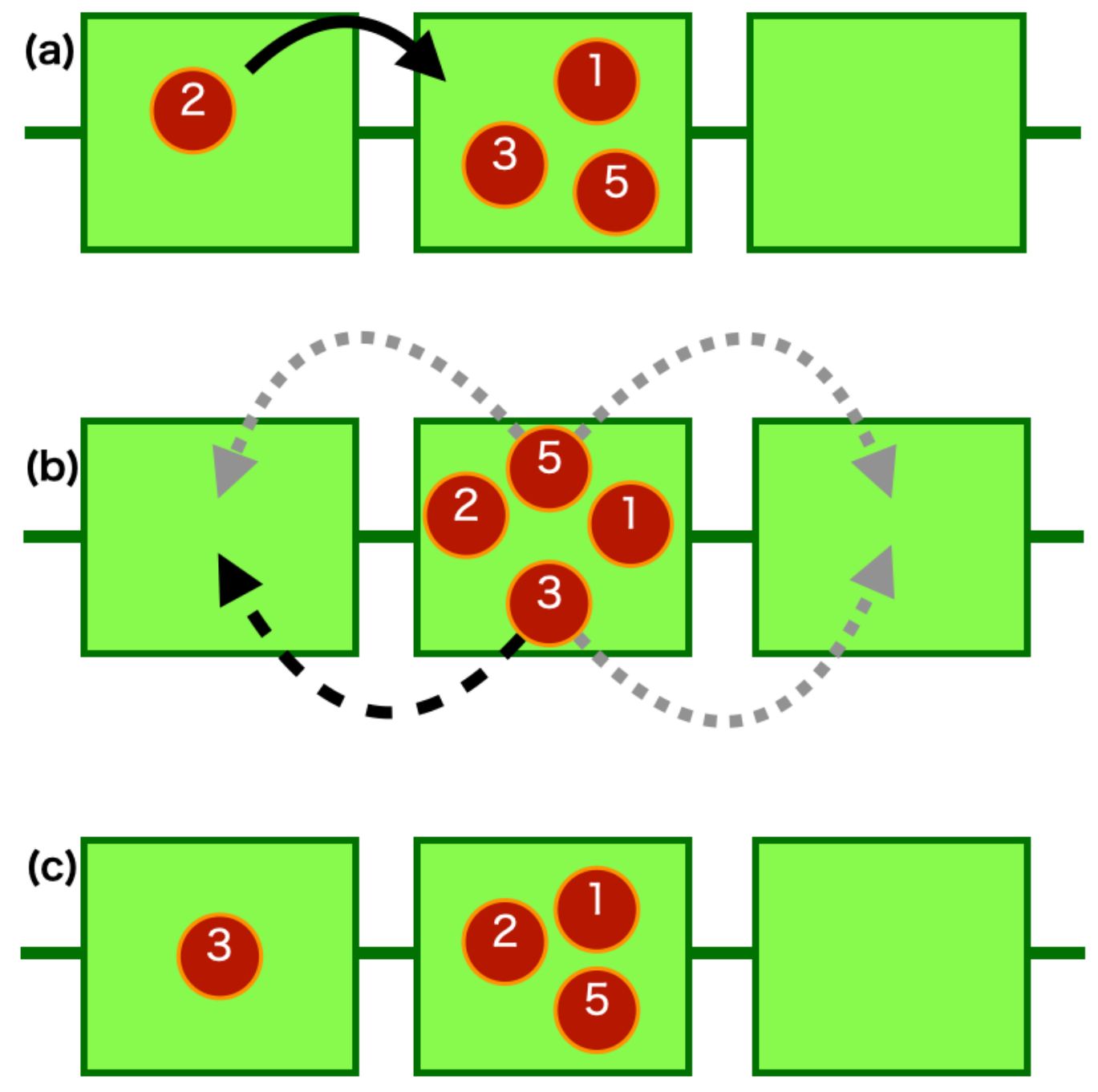}
\caption{(Color online) Schematic illustration of a transition step from a configuration described by (a) to a configuration described by (c) in the model. From (a) to (b), agent $2$ hops to the right site, and from (b) to (c), agent $3$ among two possibly supplanted agents, which are $3$ and $5$, is supplanted by agent $2$ and hops to the left site.  Four arrows in (b) means that, in the above transition, two agents $3$ and $5$ in $S(\bm{x}, 2, +)$ could be supplanted, and the direction of supplanting process could be either to the left or to the right.}
\label{pic}
\end{figure}

\subsection{Broken detailed balance by supplanting}\label{sct:BDB_by_supplanting}
Next, we consider to add a supplanting process to the equilibrium dynamics introduced above, which breaks the detailed balance condition.

Let us imagine a process in which an agent $i$ hops to a position $y_i=x_i\pm 1$ from position $x_i$ in accordance with the equilibrium transition probability $T_0$, and then an agent at $y_i$, which has a rank $j$ such that $i<j$, is stochastically forced to hop to the position $y_i +1$ for $d=+$, or $y_i -1$ for $d=-$.
In this process, an agent with a higher rank $i$ supplants another agent with a lower rank $j$. This is why we call such a process the \textit{supplanting process}. Note that the configuration $\bvec{x}$ turns to be the configuration $f_j^d f_i^\pm \bvec{x}$ through the whole supplanting process.
See Fig.\ \ref{pic} for a graphical reference of the process. 
For convenience, let us introduce the following set 
\begin{equation}\label{eq:DefOfS}
    S(\bm{x}, i, \dPM) \coloneqq \{ i < j \le N \mid
    x_j = x_i \dPM 1 \},
\end{equation} which is a set of every agent whose rank is lower than $i$ at position $x_i \dPM 1$. That is, those agents could be supplanted by the agent $i$ when the agent $i$ hops to position $x_i\pm 1$.

Suppose that every agent in $S(\bvec{x},i,\pm)$ has the same chance to be chosen as the supplanted one, and that the direction $d$ of hopping by supplanting is determined with equal probability $1/2$.
Explicitly, we define the transition probability from $\bv{x}$ to $f_j^{d}f_i^{\pm}\bv{x}$ for $d\in\{+,-\}$ and $j\in S(\bm{x}, i, \dPM)$:
\begin{align} \label{eq0:Tcoeff}
  T(\bv{x}\to f_j^{d}f_i^{\pm }\bv{x})=
  \dfrac{p}{2}\frac{1}{1+p\# S(\bm{x}, i, \dPM)}
  T_0(\bv{x}\to f_i^{\pm}\bv{x}),
\end{align} where $p\in \mathbb{R}_+\coloneqq[0,\infty)$ is the parameter of supplanting rate and $\#S$ is the number of all the elements of a set $S$.
When $p\to 0$, supplanting rarely occurs, and when $p\to\infty$, supplanting almost always occurs.
Note that at most one agent is supplanted in a single transition regardless of the value of $p$.

On the other hand, for the case of $j\notin S(\bm{x}, i, \dPM)$, it holds that
\begin{align}
  T(\bv{x}\to f_j^{d}f_i^{\pm}\bv{x})=0.
 \end{align}
Further, the probability of transition $T(\bv{x}\to f_i^{\pm}\bv{x})$ is modified from $T_0$ as
\begin{align} \label{eq:Tnonmove}
  T(\bv{x}\to f_i^{\pm}\bv{x}) = \frac{1}{1+p\# S(\bm{x}, i, \dPM)} T_0(\bv{x}\to f_i^{\pm}\bv{x}).
\end{align}
Totally, the following holds:
\begin{align}\label{eq:T0andT}
  &T_0(\bv{x}\to f_i^{\pm}\bv{x}) \nonumber \\
  &= T(\bv{x}\to f_i^{\pm}\bv{x})  +\sum_{\substack{j\in S(\bm{x}, i, \dPM) \\ d = \pm}}T(\bv{x}\to f_j^{d}f_i^{\pm}\bv{x}). 
\end{align}

Then, the master equation for the joint probability $P_t(\bv{x})$
governing the above stochastic process is as follows:
\begin{align}\label{eq:mas2}
   P_{t+1}(\bv{x})= &\sum_{\bv{x}'}P_t(\bv{x}')T(\bv{x}'\to\bv{x})\nonumber\\
   &+P_t(\bv{x})(1-\sum_{\bv{x}'\neq \bv{x}}T(\bv{x}\to\bv{x}')),
\end{align} where the summation over $\bv{x'}$ is done
for all of the possible configurations such that $T(\bv{x}'\to\bv{x})$ and 
$T(\bv{x}\to\bv{x}')$ are defined.
Since $T(\bm{x} \to f_i^{\pm} \bm{x})$ is positive for any $\bm{x}$ and $i$ with finite $\beta, p, N, L$, and $E$, any state can reach any state in this stochastic process; 
that is, the stochastic process defined above is an irreducible Markov process. 

In the presence of $p>0$, the supplanting process does not hold the detailed balance condition because of the asymmetric property in terms of agents rank; when a supplanting occurs in one step, i.e., an agent supplants another agent, the reverse process never occurs in any single step.
Thus, obviously the stationary solution $P_{\mathrm{st}}(\bv{x})$ is no longer the Gibbs distribution of a given energy function.
Note that in the limit of $p\to 0$, 
the detailed balance condition in terms of a 
given energy function is recovered.
Thus, we can also regard $p$ as the strength of violation of detailed balance condition.

\section{The case of the Potts energy}\label{Potts}
In this section, we focus on the 
case of the Potts energy defined by (\ref{eq:PottsEnergy}). 
We briefly review the known equilibrium properties and compute nonequilibrium stationary distribution by exact diagonalization of transition matrix corresponding to the master equation (\ref{eq:mas2}). Further, we introduce 
overlap centrality and its correlation
 with agents' rank, which are calculated using the computed stationary distribution.

\subsection{Computation of partition function at equilibrium}\label{Computation of partition function at equilibrium}
As a preliminary, we consider the equilibrium case with $p=0$.
The equilibrium ferromagnetic Potts model on the complete graph has two phases; one is the ordered phase for stronger interaction, and another is the disordered phase for weaker interaction, which are separated by a first-order transition point if $L\ge 3$ \cite{Potts}. 
In the context of this paper, 
the ordered state can be regarded as the condensate state of agents, that is, the state where all agents are located at the same position.

Let us look into more detail for the computation of the above results. When $p=0$ and interaction strength $\beta$ is positive ($\beta>0$), corresponding to the case of attractive interaction, 
the partition function $Z_N(\beta):=\sum_{\bv{x}}
\exp\big({-}\beta E(\bv{x}) \big)$ for $N\to\infty$ can be explicitly expressed, by which the equilibrium transition point is computed exactly.
Concretely, by performing the Stratonovich--Hubbard transformation \cite{mean-field_Potts2013, mean-field_Potts2020} with
\begin{align}
    &\exp\bkc{\frac{K\beta}{2N}\sum_{x\in X}\bka{\sum_{i=1}^N \delta(x_i,x)}^2}\notag\\
    &= \prod_{x\in X}\sqrt{\frac{NK\beta}{2\pi}}\int_\R\dif q\exp\bkc{-\frac{NK\beta}{2}q^2+K\beta q\sum_{i=1}^N\delta(x_i,x)},
\end{align}
we obtain 
\begin{align}
  &Z_N(\beta)= \bka{\frac{NK\beta}{2\pi}}^{L/2}\int_{\mathbb{R}^{L}}\dif^{L}q 
  \exp \left(-N\beta f_\beta(\bv{q}) \right),\\
  &\rd{f_\beta(\bv{q})=\frac{K}{2}\sum_{x\in X}q_{x}^{2}-\beta^{-1}\log\bka{\sum_{x\in X} \exp\bka{K\beta q_{x}}},}
\end{align}
where
\begin{align}
    K
    = \frac{2(L-1)\log(L-1)}{L-2}.
\end{align}
For $N\gg 1$, the minimal value of $f_\beta(\bvec{q})$ as a function of the order parameter $\bvec{q}$ behaves effectively as the free energy density of the Potts model as follows: 
\begin{align}
    \lim_{N\to\infty}\frac{-\beta^{-1}\log Z_N(\beta)}{N}
    = \min_{\bvec{q}\in\R^L}f_\beta(\bvec{q}).
\end{align}

Taking 
\rd{$\frac{\partial f_\beta}{\partial q_{x}}(\bvec{q})=0$} to minimize $f_\beta(\bvec{q})$, 
we obtain the stationary condition 
\rd{\begin{align}\label{eq:stationary_cond2}
    q_y\exp(-K\beta q_y)
    = \Big(\sum_{x\in X}\exp(K\beta q_x)\Big)^{-1}
\end{align}
for each $y \in X$.} 
From \eqref{eq:stationary_cond2}, we see that
\rd{\begin{align}\label{eq:ab_constraint}
    q_x\exp(-K\beta q_x)
    =q_y \exp(-K\beta q_y),
\end{align}
for any $x, y \in X$.
This indicates that there is a constant $c$ such that $q_x \exp(-K \beta q_x) = c$,
which has at most two real solutions $a, b$ as an equation for $q_x$.
Then, we have $q_x \in \{a, b\}$ for each $x \in X$.}
From \eqref{eq:stationary_cond2}, we also have
\rd{\begin{align}\label{eq:q_constraint}
    \sum_{x\in X}q_x = 1.
\end{align}}

Thus, a necessary condition for the order parameter $\bvec{q}$ to minimize $f_\beta(\bvec{q})$ is described below. Keeping with \eqref{eq:ab_constraint} and \eqref{eq:q_constraint}, one of the following conditions (i) and (ii) is satisfied:
\begin{enumerate}[(i)] 
\item It holds that $\bm{q} = \tilde{\bvec{q}}^{(0)} \coloneqq \dfrac{1}{L} (1, 1, \dots, 1).$
\item There exist an integer $n\in\bkb{1,\dots,L-1}$ and two distinct real numbers $a_n = a_n(\beta), b_n = b_n(\beta)$ satisfying 
\begin{gather} \label{eq:eq_of_ab}
    a_n \exp(-K\beta a_n) = b_n \exp(-K\beta b_n), \\
    na_n+(L-n)b_n = 1. \label{eq:eq_of_ab2}
\end{gather}
Moreover, $n$ components of $\bvec{q}$ are $a_n$ and remaining $(L-n)$ components of $\bvec{q}$ are $b_n$. 
\end{enumerate}
For example, if $n=1$, the solutions are described by \rd{$\bvec{q} = \tilde{\bvec{q}}^{(\alpha)}$ for $1 \le \alpha \le L$,} where
\rd{\begin{align}
    \tilde{q}^{(\alpha)}_x \coloneqq \begin{dcases}
        a_1(\beta) & (\text{if $x \equiv \alpha \mod L$}) \\
        b_1(\beta) & (\text{if $x \not\equiv \alpha \mod L$})
    \end{dcases}.
\end{align}}

In Ref.\ \cite{Ellis}, it has been shown that the set of the global minimum points $\bvec{q}$ of $f_\beta(\bvec{q})$ corresponds to the case of $\bvec{q} = \tilde{\bvec{q}}^{(0)}$ or $n=1$, depending on $\beta$. Concretely, the set is described as
\begin{align}
    \begin{dcases}
        \bkb{\tilde{\bvec{q}}^{(1)}(\beta),\tilde{\bvec{q}}^{(2)}(\beta),\ldots,\tilde{\bvec{q}}^{(L)}(\beta)} & (\text{if}\ 0<\beta<1) \\
        \bkb{\tilde{\bvec{q}}^{(0)},\tilde{\bvec{q}}^{(1)}(1),\ldots,\tilde{\bvec{q}}^{(L)}(1)} & (\text{if}\ \beta=1) \\
        \bkb{\tilde{\bvec{q}}^{(0)}} & (\text{if}\ \beta>1). \\
    \end{dcases}
\end{align}
Note that the equations \eqref{eq:eq_of_ab} and \eqref{eq:eq_of_ab2} with $n=1$ determine the value $a_1 \neq b_1$ uniquely, and the resulting functions $a_1(\beta), b_1(\beta)$ are differentiable in the region $0 < \beta < 1$.

The expectation value of energy density is also expressed as
\begin{align}\label{eq:energy_density}
    \lim_{N\to\infty}\frac{\braket{E}_{\text{can}}}{N}
    = 
    \begin{dcases}
        \frac{\partial}{\partial\beta}\beta f_\beta(\tilde{\bvec{q}}^{(1)}(\beta)) & (\text{if}\ 0<\beta<1) \\
        \frac{\partial}{\partial\beta}\beta f_\beta(\tilde{\bvec{q}}^{(0)}) & (\text{if}\ \beta>1).
    \end{dcases}
\end{align}
Thus, one can show that the energy density \eqref{eq:energy_density} exhibits a discontinuous jump at $\beta=1$, which is the phase transition point of the Potts model.

\subsection{State vector description}
Let us move onto the model with general $p \ge 0$. In this case, we need to explicitly consider the dynamics
in order to compute the stationary distribution of the model.
We would like to describe
the stochastic process by transition matrices with some basic linear operators.
For more detailed description and derivation,
see Appendix \ref{sct:TransMatr}.

Let $H_X$ be the one-agent state space,
\rd{which is of dimension $L$.}
It is considered as a complex vector space with inner product $\braket{\cdot | \cdot}$,
and has an orthonormal basis $\{ |x \rangle \mid x \in X \}$ over
the set of complex numbers $\mathbb{C}$.
Then the $N$-times self-tensored space $H_X^{\otimes N}$ can be 
identified to the $N$-agent state spaces,
\rd{which is of dimension $L^N$.}
For a configuration of agents $\bm{x} = (x_1, x_2, \dots, x_N) \in X^N$,
the corresponding state vector is $|\bm{x} \rangle = |x_1\rangle \otimes
|x_2 \rangle \otimes \dots \otimes |x_N\rangle$.
The space $H_X^{\otimes N}$ has a natural inner product induced by $\langle \cdot | \cdot \rangle$, and the set of the state vectors $\{|\bvec{x} \rangle\}_{\bvec{x} \in X^N}$ is an orthonormal basis. We use the same symbol $\langle \cdot | \cdot \rangle$ to write the inner product on $H_X^{\otimes N}$.

We associate the probability $P(\bvec{x})$ for agents' configuration $\bvec{x}$ with a state $\ket{P}\in H_X^{\otimes N}$ as follows:
\alit{
    \braket{\bvec{x}|P}
    = P(\bvec{x}),
}
or
\alit{
    \ket{P}
    = \sum_{\bvec{x}\in X^N}P(\bvec{x})\ket{\bvec{x}}.
}
For a given state $\ket{P_t}$ at time $t$, the time evolution of the state is described as follows:
\begin{align}
    \ket{P_{t+1}}=\TransOp{}{}\ket{P_t}, \label{eq:Transition}
\end{align}
where $\TransOp{}{}$ is a transition matrix on $H_X^{\otimes N}$ such that \eqref{eq:Transition}
is equivalent to the master equation \eqref{eq:mas2} for the joint probability. 
Similarly, $\TransOp{0}{}$ is the transition matrix $\TransOp{}{}$ when $p=0$.

We introduce some basic operators.
A hopping map $f_i^\pm$ to the right (resp.\ the left) corresponds to
the operator defined by $\ShiftOp{i}{\pm}$;
for $\bm{x} = (x_1, x_2, \dots, x_N) \in X^N$, 
\begin{align}
    \ShiftOp{i}{\dP} | \bm{x} \rangle
    &= |\Shift{i}{\dP} \bm{x} \rangle \notag \\
    &= \ket{x_1} \otimes \ket{x_2} \otimes \dots \otimes \ket{x_{i} + 1} \otimes \dots \otimes \ket{x_N}, \\
    \ShiftOp{i}{\dM} | \bm{x} \rangle
    &= |\Shift{i}{\dM} \bm{x} \rangle \notag \\
    &= \ket{x_1} \otimes \ket{x_2} \otimes \dots \otimes \ket{x_{i} - 1} \otimes \dots \otimes \ket{x_N}.
\end{align}
Next we define projection operators.
For a site $y \in X$ and an agent $1 \le i \le N$, 
we define $\CheckOp{i}{y}$ by
\begin{equation}
    \CheckOp{i}{y} |\bm{x}\rangle =
    \begin{dcases}
        |\bm{x} \rangle & \text{(if $x_i = y$)} \\
        0 & \text{(if $x_i \neq y$)},
    \end{dcases}
\end{equation}
for $\bm{x} = (x_1, x_2, \dots, x_N) \in X^N$.
Then for a configuration $\bm{y} \in X^N$,
we define the operator $\CheckOp{}{\bm{y}}$ as 
$\prod_{1 \le i \le N} \CheckOp{i}{y_i}$.
Finally, we denote $\id_H$ (resp.\ $\id_H^{\otimes N}$) as
the identity operator on $H_X$ (resp.\ $H_X^{\otimes N}$).

Using these notions, we can describe the transition matrices 
$\TransOp{0}{}$ and $\TransOp{}{}$. First, $\TransOp{0}{}$ is
\begin{align}
 \TransOp{0}{} &= \sum_{\substack{1 \le i \le N \\ d = \pm}}
    \sum_{\bm{x} \in X^N}
    \Big( T_0(\bm{x} \to f_i^{d} \bm{x}) \ShiftOp{i}{d} \nonumber \\
    &\qquad + \left( \frac{1}{2N}-T_0(\bm{x} \to \Shift{i}{d} \bm{x}) \right) \id_H^{\otimes N} \Big)
    \CheckOp{}{\bm{x}}.
\end{align}
Then $\TransOp{}{}$ is written as
\begin{widetext}
\begin{align}
    \TransOp{}{} &= 
        \sum_{\substack{1 \le i \le N \\ d = \pm}}
        \sum_{\bm{x} \in X^N}
         \Bigg( \sum_{\substack{j \in S(\bm{x}, i, d) \\ d' = \pm}}
        T(\bm{x} \to \Shift{j}{d'}\Shift{i}{d} \bm{x}) \ShiftOp{j}{d'}\ShiftOp{i}{d}
        + T(\bm{x} \to \Shift{i}{d} \bm{x}) 
        \ShiftOp{i}{d} + \Big(\frac{1}{2N}-T_0(\bm{x} \to \Shift{i}{d} \bm{x})\Big) \id_H^{\otimes N} \Bigg)
        \CheckOp{}{\bm{x}}
    \label{eq0:Trep} \\
    &= \TransOp{0}{} + \sum_{\substack{1 \le i \le N \\ d = \pm}}
    \sum_{\bm{x} \in X^N}
    \sum_{\substack{j \in S(\bm{x}, i, d) \\ d' = \pm}}
    T(\bm{x} \to \Shift{j}{d'}\Shift{i}{d} \bm{x}) (\ShiftOp{j}{d'}
    - \id_H^{\otimes N}) \ShiftOp{i}{d} \CheckOp{}{\bm{x}},
    \label{eq1:Trep}
\end{align}
\end{widetext}
where we used \eqref{eq:T0andT}.
For the definition of coefficients, 
see \eqref{eq0:T0coeff}, \eqref{eq0:Tcoeff}, 
and \eqref{eq:Tnonmove}.

On this settings, the transition matrix $\TransOp{}{}=\TransOp{}{}(\beta,p)$ is naturally regarded as a linear operator on $H_X^{\otimes N}$. Let $\ket{P(\beta,p)}$ be the unique stationary state of $\TransOp{}{}(\beta,p)$ satisfying
\alit{\label{eq:stationary state for T}
    \TransOp{}{}(\beta,p)\ket{P(\beta,p)}
    = \ket{P(\beta,p)}.
}
One can show that, since $\TransOp{}{}$ is irreducible, $\ket{P(\beta,p)}$ exists and is uniquely determined by Perron--Frobenius theorem.

For the latter discussion, let us consider the symmetry of the transition matrices $\TransOp{0}{}$ and $\TransOp{}{}$.
We introduce permutation operators
$\PermOp{\sigma}$ on $H_X^{\otimes N}$. 
For a given element $\sigma \in \mathfrak{S}_N$ of 
the symmetry group $\mathfrak{S}_N$ of agents,
we define
\begin{equation}\label{eq:PermOpDefinition}
    \PermOp{\sigma}\ket{\bvec{x}}
    \coloneqq \ket{\sigma^{-1}(\bvec{x})},
\end{equation}
where $\sigma^{-1}(\bvec{x})\coloneqq\bka{x_{\sigma^{-1}(j)}}_{j=1}^N$.
Then, we have
\begin{gather}\label{eq:commuT0}
    \PermOp{\sigma}^\dag\TransOp{0}{}\PermOp{\sigma}
    = \TransOp{0}{},\\
    \label{noncommuT}
    \PermOp{\sigma}^\dag\TransOp{}{}\PermOp{\sigma}
    \neq \TransOp{}{},
\end{gather}
where $\PermOp{\sigma}^\dag$ is the Hermitian conjugate of $\PermOp{\sigma}$ (see \eqref{eq:TotalT0Permutation} and \eqref{eq:NotCommutesTransOp}). Note that $\PermOp{\sigma}$ is unitary: $\PermOp{\sigma}^\dag=\PermOp{\sigma}^{-1}=\PermOp{\sigma^{-1}}$.
In the sense of relation \eqref{eq:commuT0}, the equilibrium dynamics described by $\TransOp{0}{}$ holds permutation symmetry. 
In contrast, the whole dynamics by $\TransOp{}{}$ breaks the permutation symmetry, as described by \eqref{noncommuT}.

\subsection{Exact diagonalization of transition matrices}
\begin{figure}
  \includegraphics[width=8cm,clip]{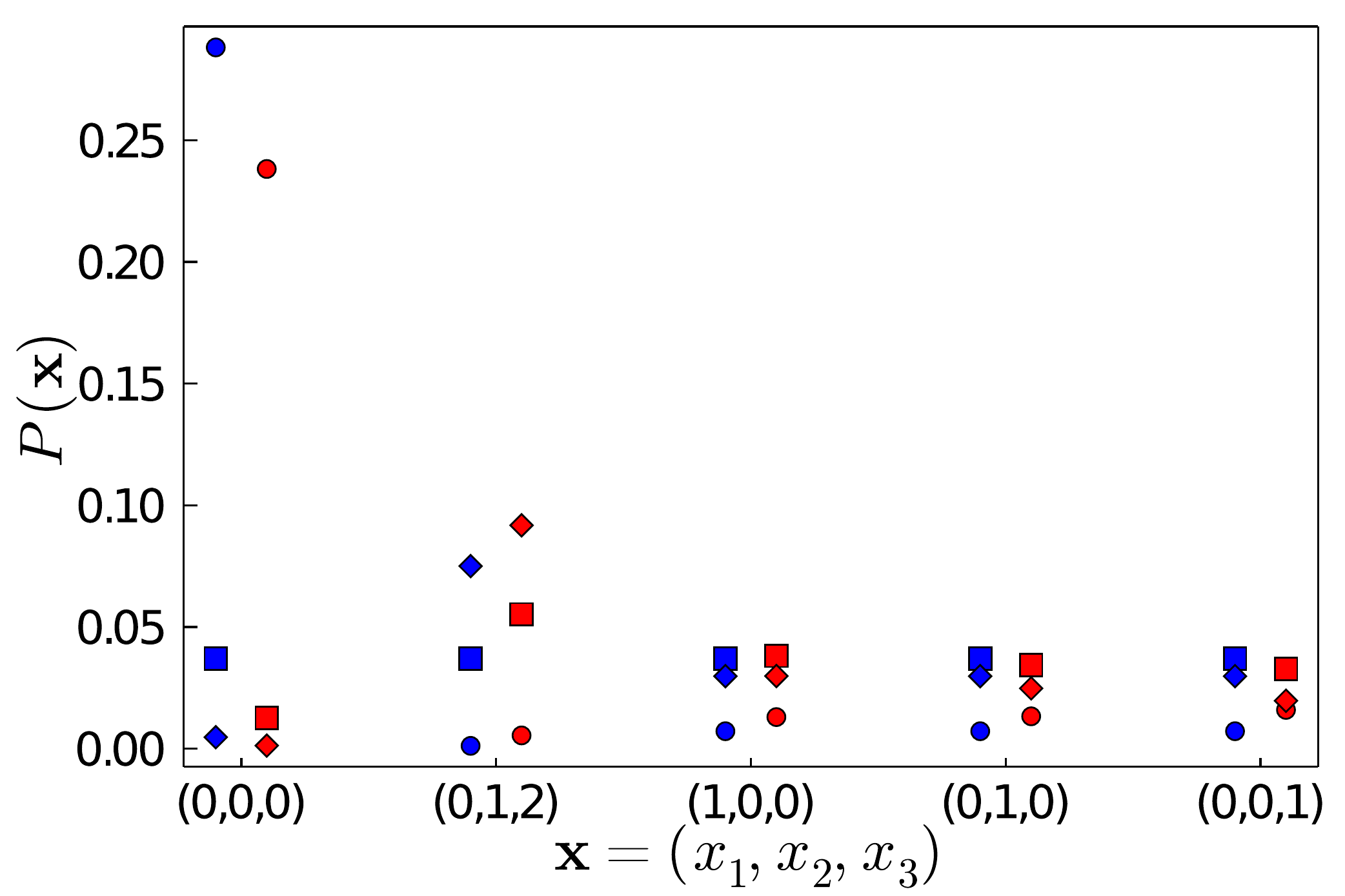}
\caption{(Color online) Probabilities of the configurations determined by stationary distribution for $\beta=2$ (circles), $\beta=0$ (rectangles), and $\beta=-1$ (diamonds); $p=0$ (blue or the left at each column) and $p=10$ (red or the right); $N=L=3$. The state of $(0,0,0)$ means that three agents are located at the same site. 
The state of $(0,1,2)$ means that 
each of three agents is located at the different position, respectively. 
The other three states mean that two agents are located at the same site and another agent is located at one of the different sites.}
\label{ff}
\end{figure}

We perform exact diagonalization for the transition matrix $\TransOp{}{}$
to obtain the eigenvalues and their corresponding eigenvectors.
Thus, the stationary distribution corresponds to the eigenvector with the maximum real part, which is 1, of the eigenvalue. Note that the number of the states is $L^N$,
which gets exponentially large as a function of $N$.

As shown in Fig.\ \ref{ff} with $p=0$, 
at $\beta=0$, the joint probability of each 
configuration shows the same value. 
As $\beta$ is increased from $0$, 
the joint probability of condensate configuration $(0,0,0)$ is much higher 
than that of the other configurations. 
Conversely, as $\beta$ is decreased from $0$, the joint probability of the configuration $(0,1,2)$, where all the agents are separated, is much higher than that of the other configurations. 
On the other hand, the joint probability of configuration with a pair of two agents overlapping at the same site and the other located at the different site such as $(1,0,0), (0,1,0), (0,0,1)$ does not depend on the pair for $p=0$. 

When $p=10$, the joint probabilities of configurations $(1,0,0)$, $(0,1,0)$, $(0,0,1)$ are distinct. 
Concretely, those probabilities with $\beta=2$ and $\beta=-1$ increase and decrease, respectively, in the order of $(1,0,0)$, $(0,1,0)$, and $(0,0,1)$. This means that higher-ranked agents (resp.\ lower-ranked agents) tend to overlap more frequently for $\beta=2$ (resp.\ for $\beta=-1$).
This can be interpreted as a typical consequence of supplanting process.

In order to discuss how the configuration is condensed, let us introduce the normalized expectation value of the Potts energy in terms of a probability distribution $P(\bvec{x})$ as follows:
\begin{align}\label{eq:expectation_value_of_Potts_energy}
   M:= \frac{1}{N^2}
   \sum_{\bv{x}}\sum_{i=1}^N\sum_{j=1}^N\delta(x_i,x_j)P(\bv{x}).
\end{align} 
Note that by definition, $M$ takes $1$ as the maximum value 
in the case of $P(\bvec{x})=\prod_{k=1}^N\delta(x,x_k)$. 
In Fig.\ \ref{order}, using the computation of the stationary distribution by the exact diagonalization, $M$ is shown as a function of $p$ and $\beta$.

Relating to Section \ref{Computation of partition function at equilibrium},
in the case of the equilibrium distribution corresponding to the case with $p=0$, $M$ is rewritten by using $\braket{E}_{\text{can}}$ as
\begin{align}
    M
    = -\frac{2}{NK}\sum_{\bv{x}}E(\bvec{x})\frac{e^{-\beta E(\bv{x})}}{Z_N(\beta)}
    = -\frac{2}{K}\frac{\braket{E}_{\text{can}}}{N},
\end{align}
which means that $M$ is also discontinuous at the equilibrium phase transition point $\beta=1$ in the thermodynamic limit $N\to\infty$.

\begin{figure}
  \includegraphics[width=8cm,clip]{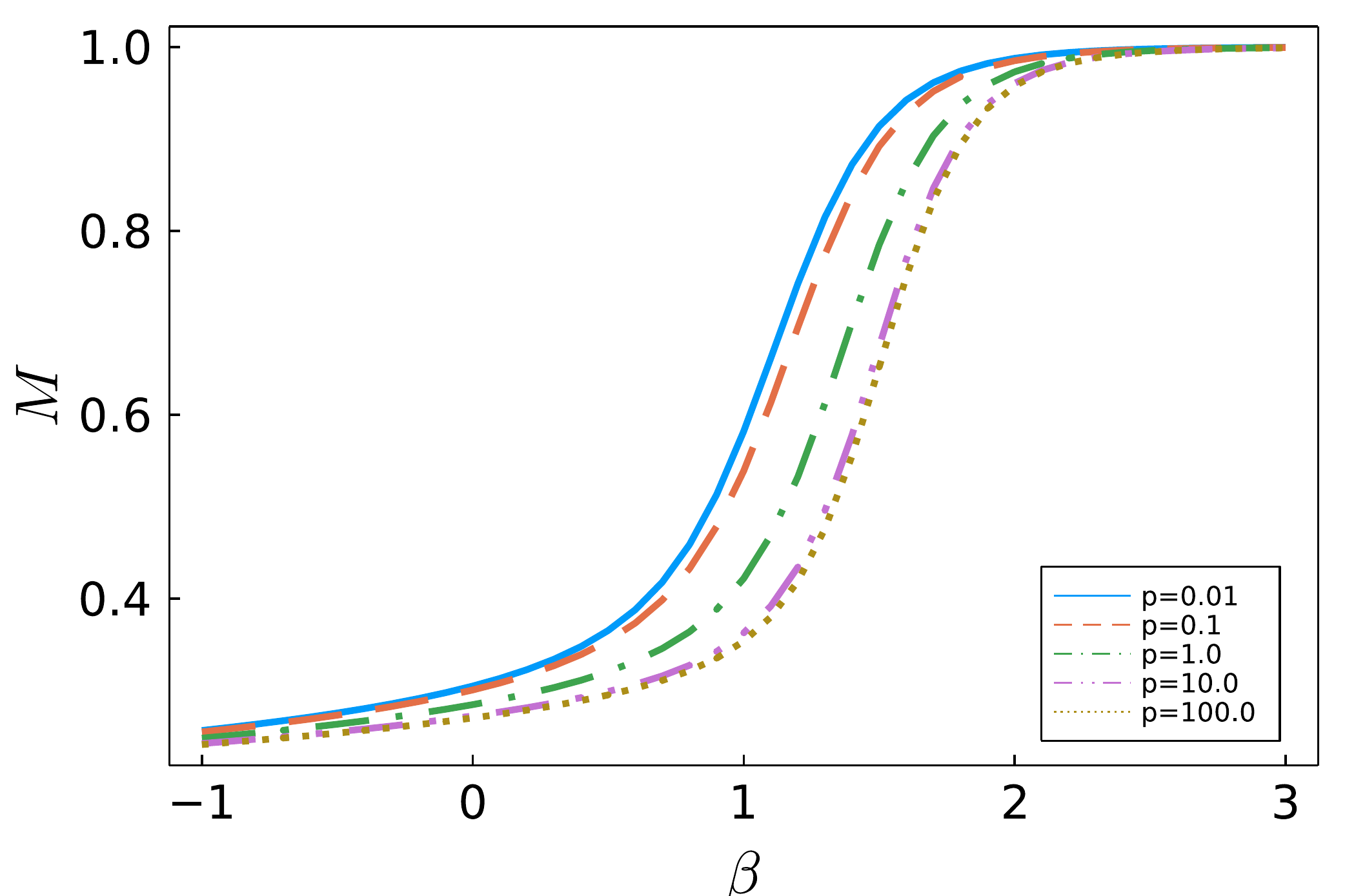}
  \caption{(Color online) The $\beta$-dependence of the normalized expectation value of the energy $M=\dfrac{1}{N^2}\sum_{x}\sum_{i,j}\delta(x_i,x_j)\braket{\bvec{x}|P(\beta,p)}$ 
  for various values of $p$ with $L=6, N=6$.}
\label{order}
\end{figure}

\subsection{Overlap centrality and its correlation coefficient}

\begin{figure}
  \includegraphics[width=7cm,clip]{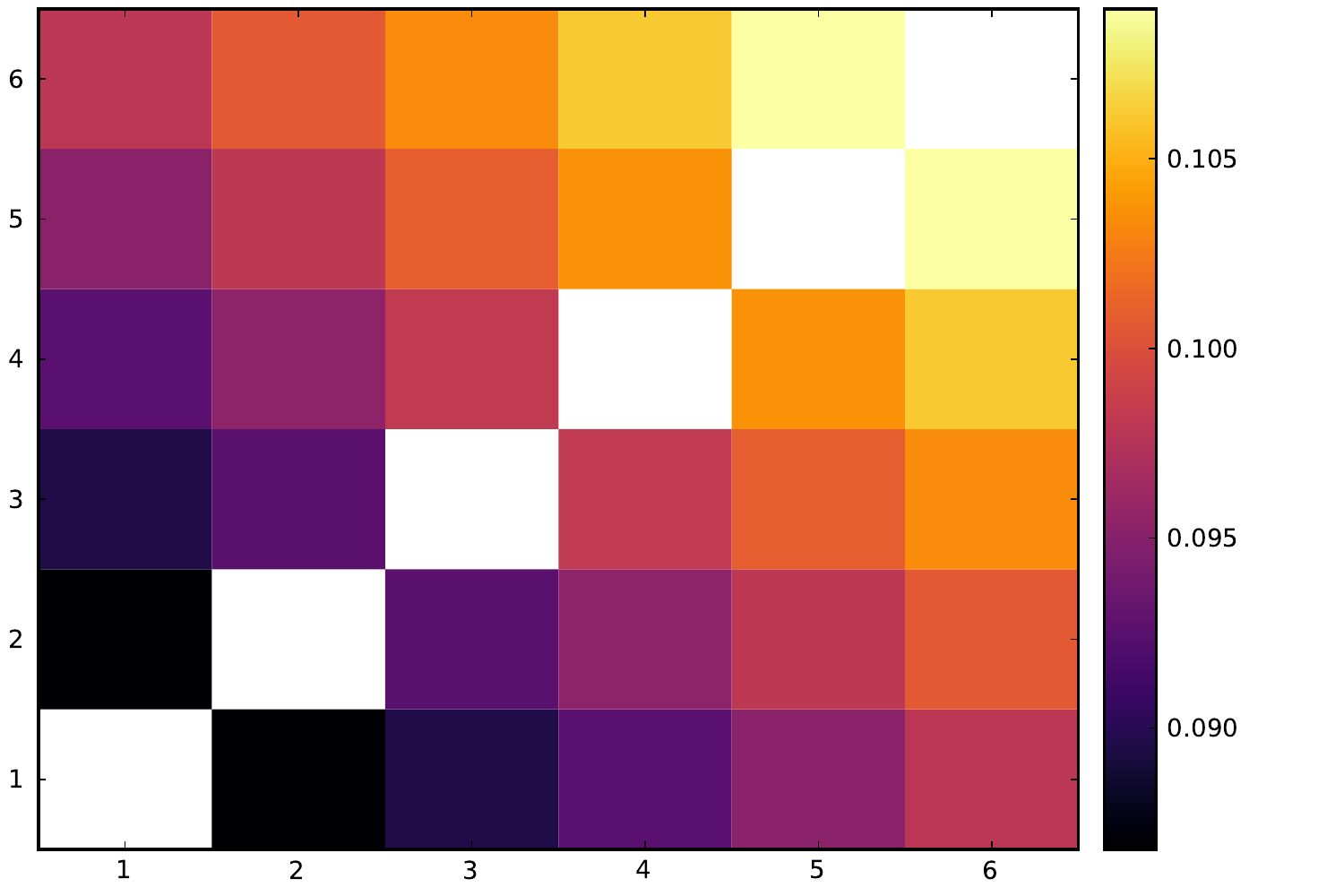}
  \includegraphics[width=7cm,clip]{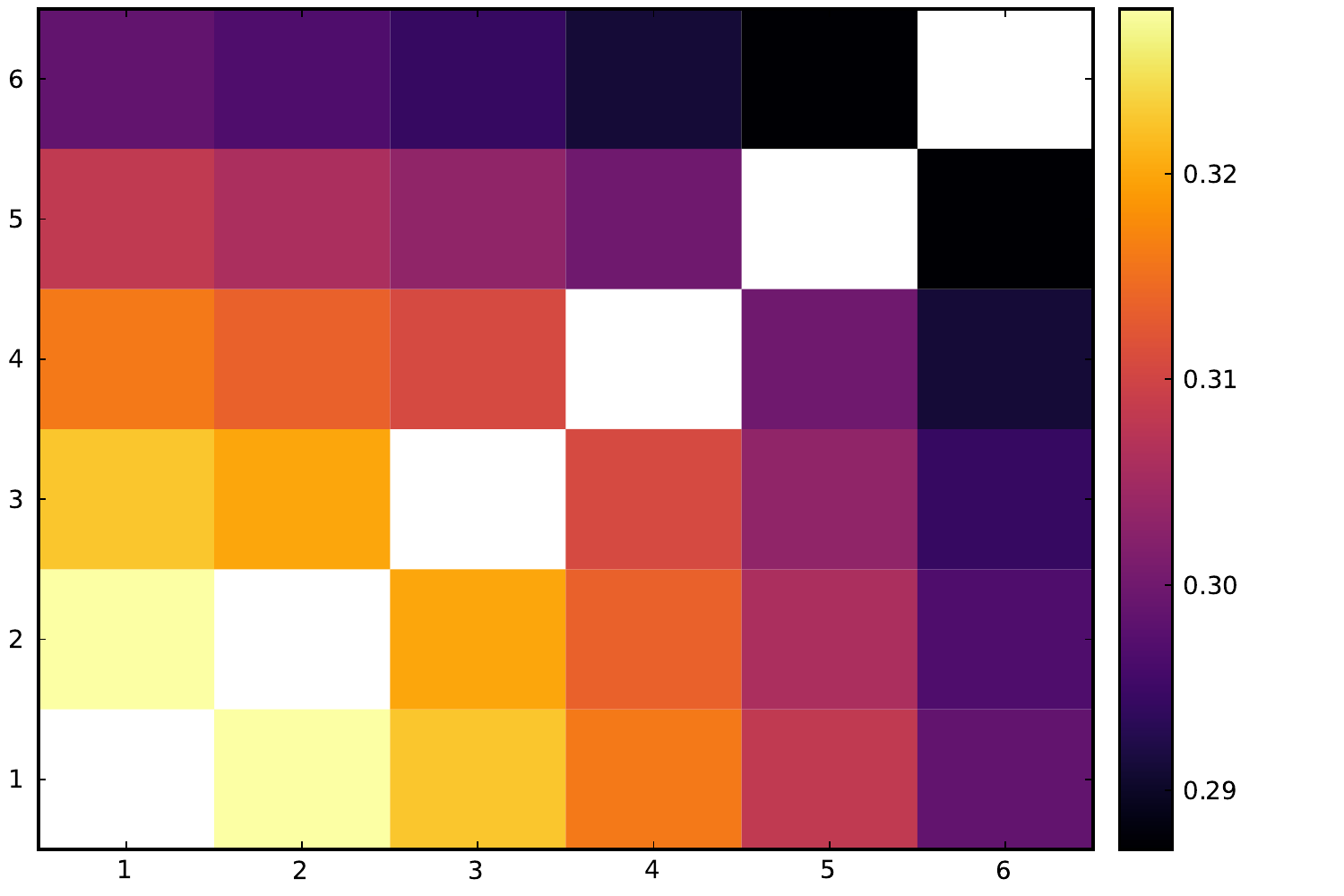}
\caption{(Color online) Heatmap of neighbor matrix $\mathcal{R}$ determined by the stationary distribution with the diagonal components left out.
The color corresponds to $r_{ij}$ for pair of two agents $(i,j)$.  Parameters: (top) $\beta=-1, p=1, L=6, N=6$.
  (bottom) $\beta=1, p=1, L=6, N=6$.}
\label{nei}
\end{figure}

\begin{figure*}
\begin{tabular}{cc}
\begin{minipage}[t]{0.45\linewidth}
    \includegraphics[width=7cm,clip]{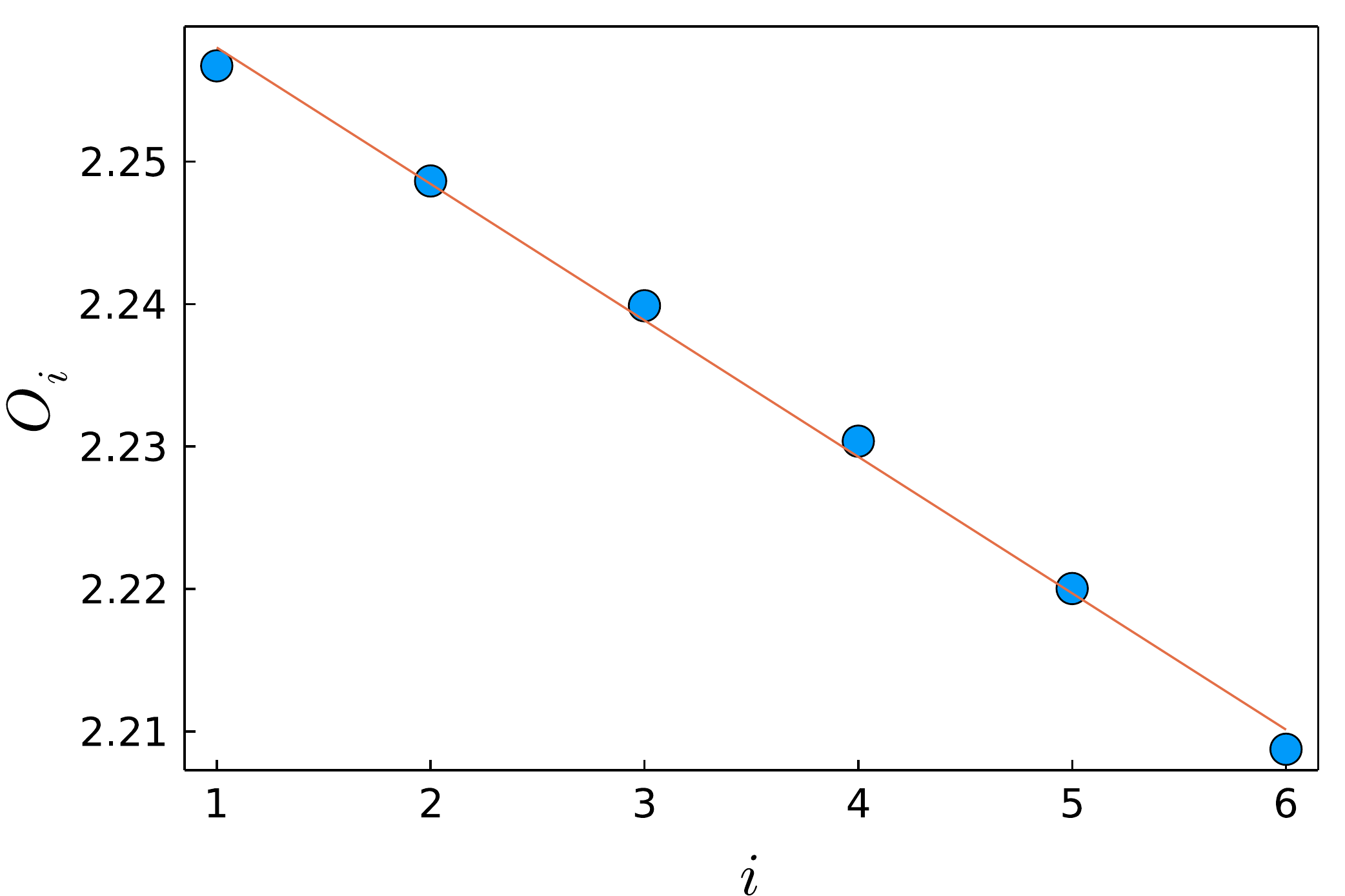}
    \includegraphics[width=7cm,clip]{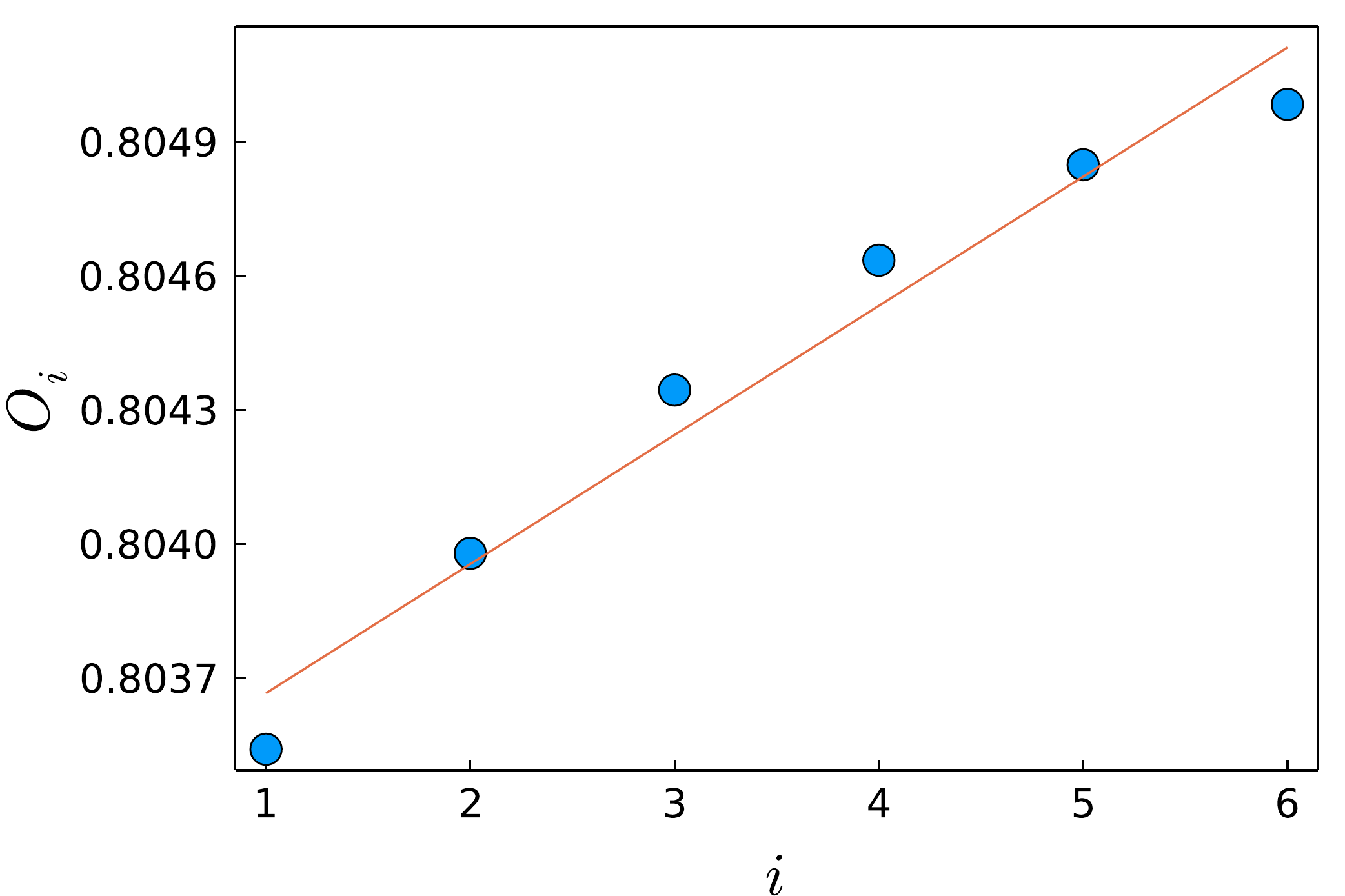}
    \includegraphics[width=7cm,clip]{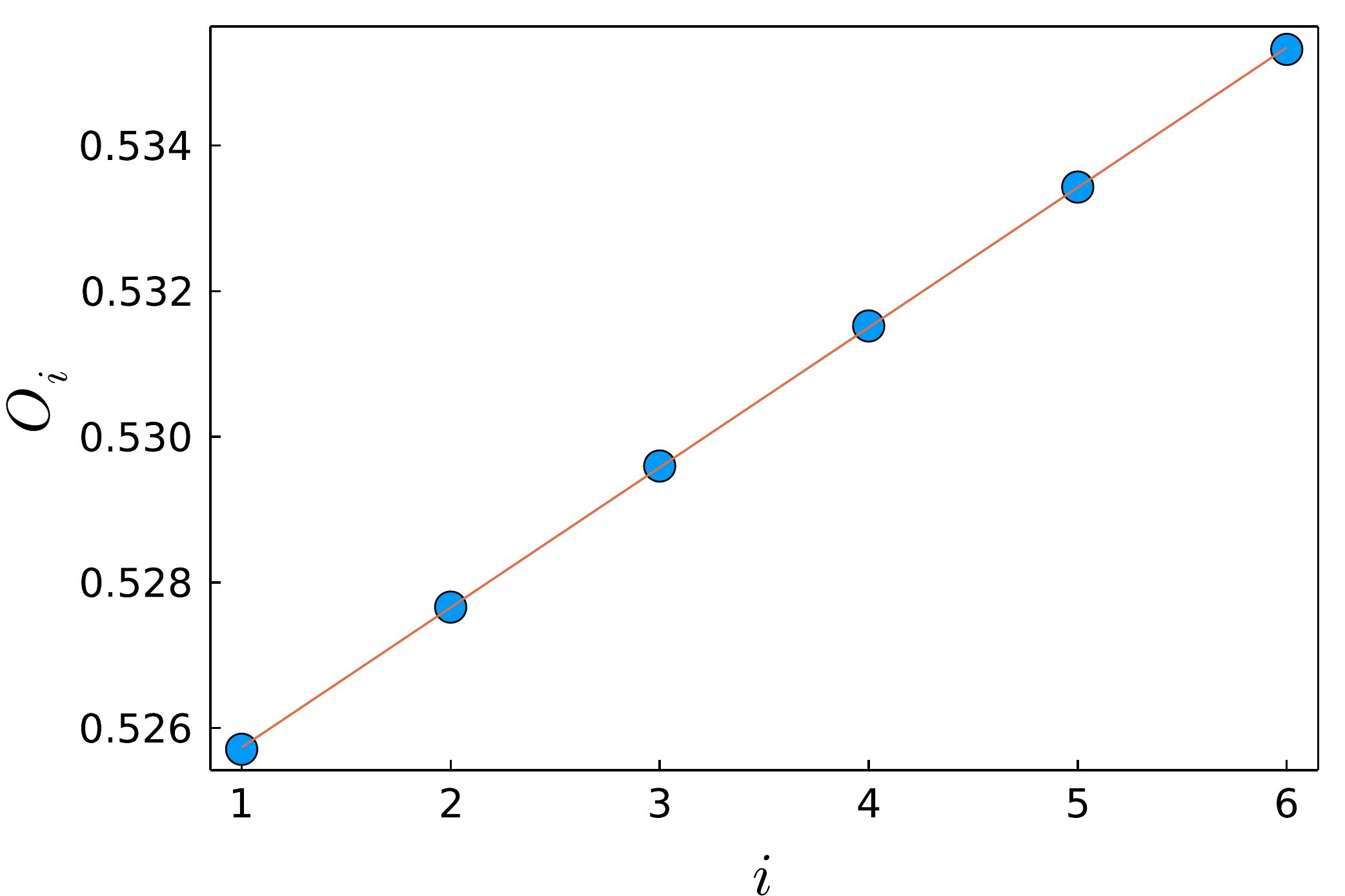}
\end{minipage}
\begin{minipage}[t]{0.45\linewidth}
    \includegraphics[width=7cm,clip]{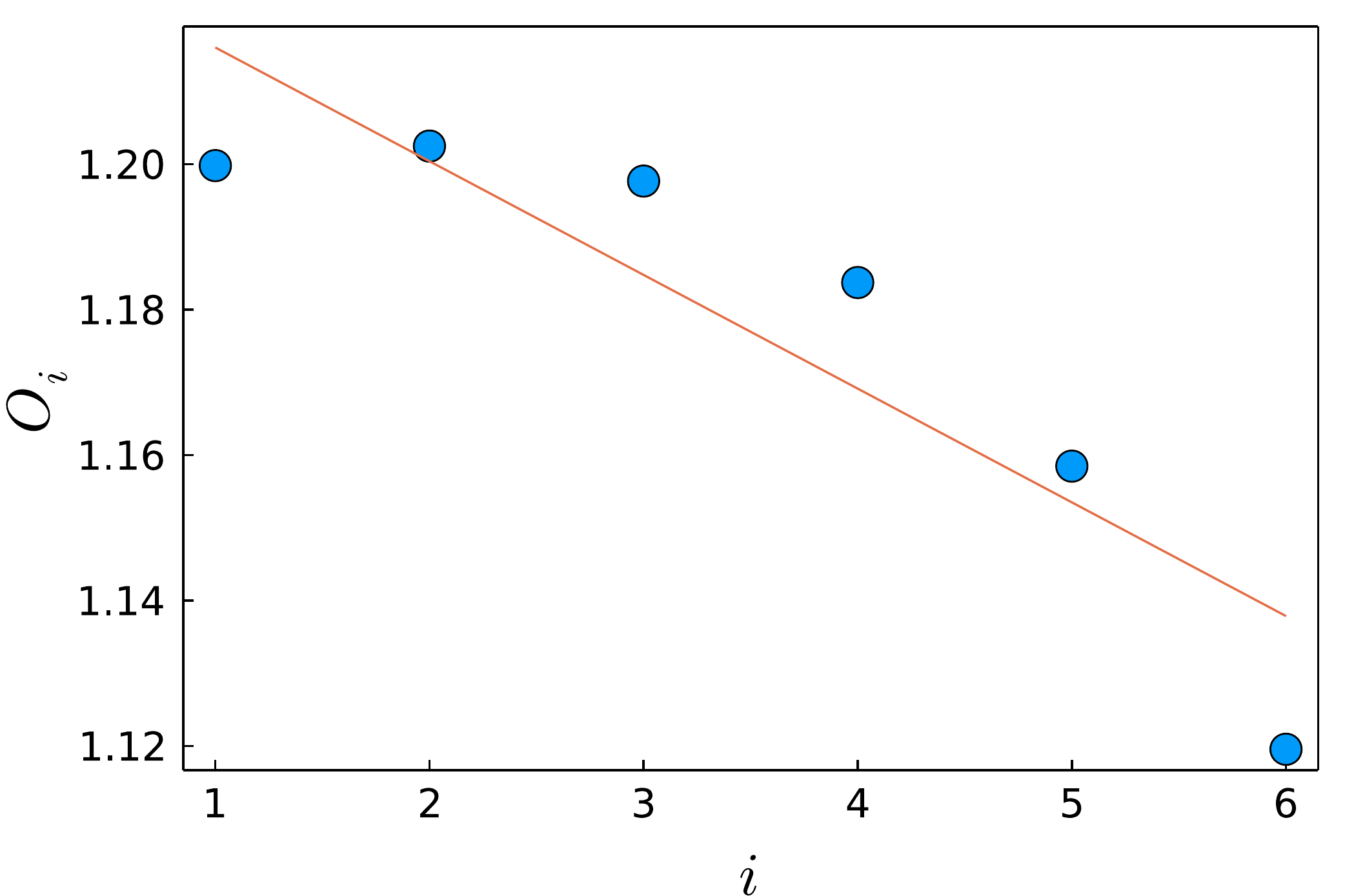}
    \includegraphics[width=7cm,clip]{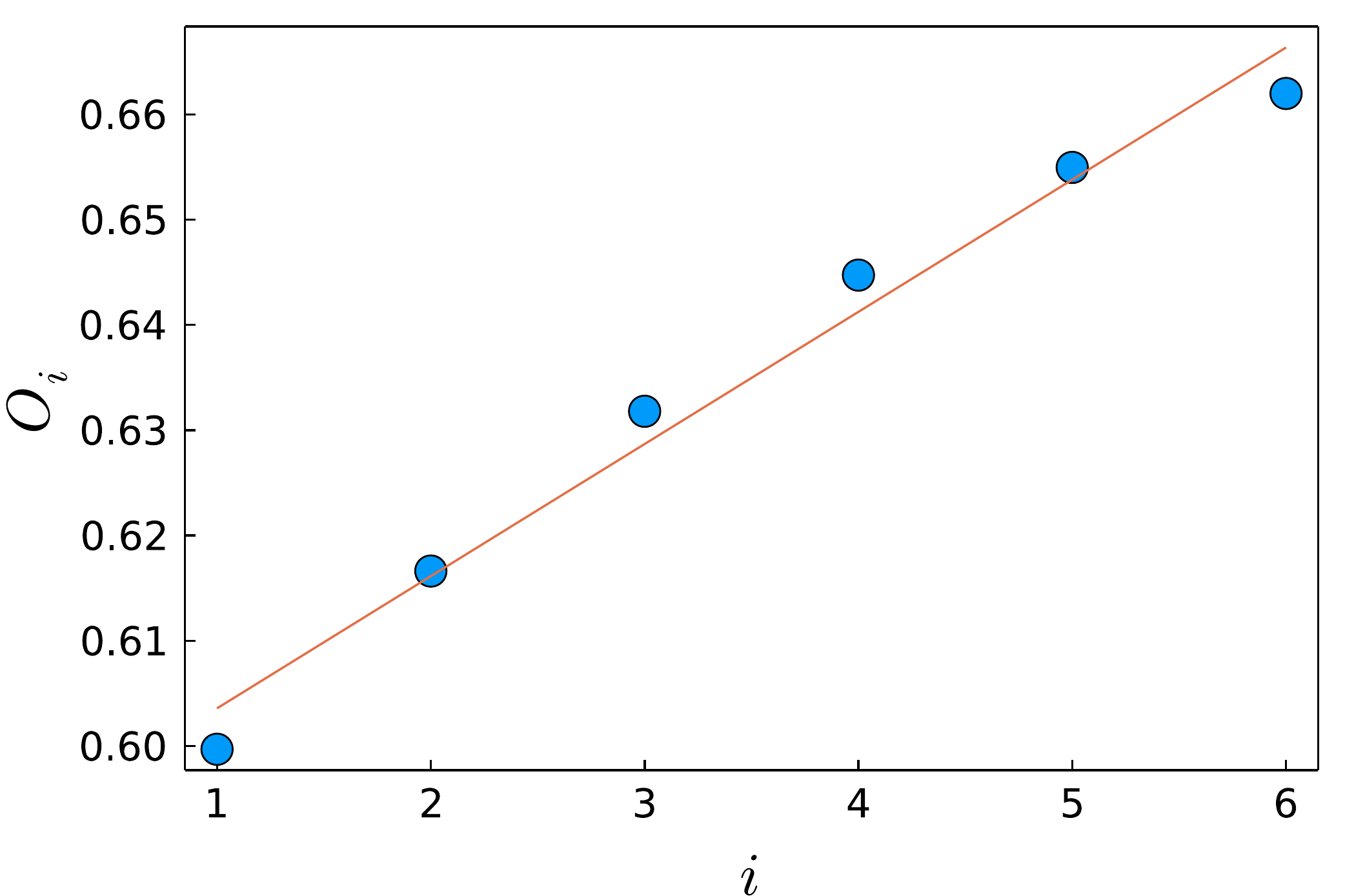}
    \includegraphics[width=7cm,clip]{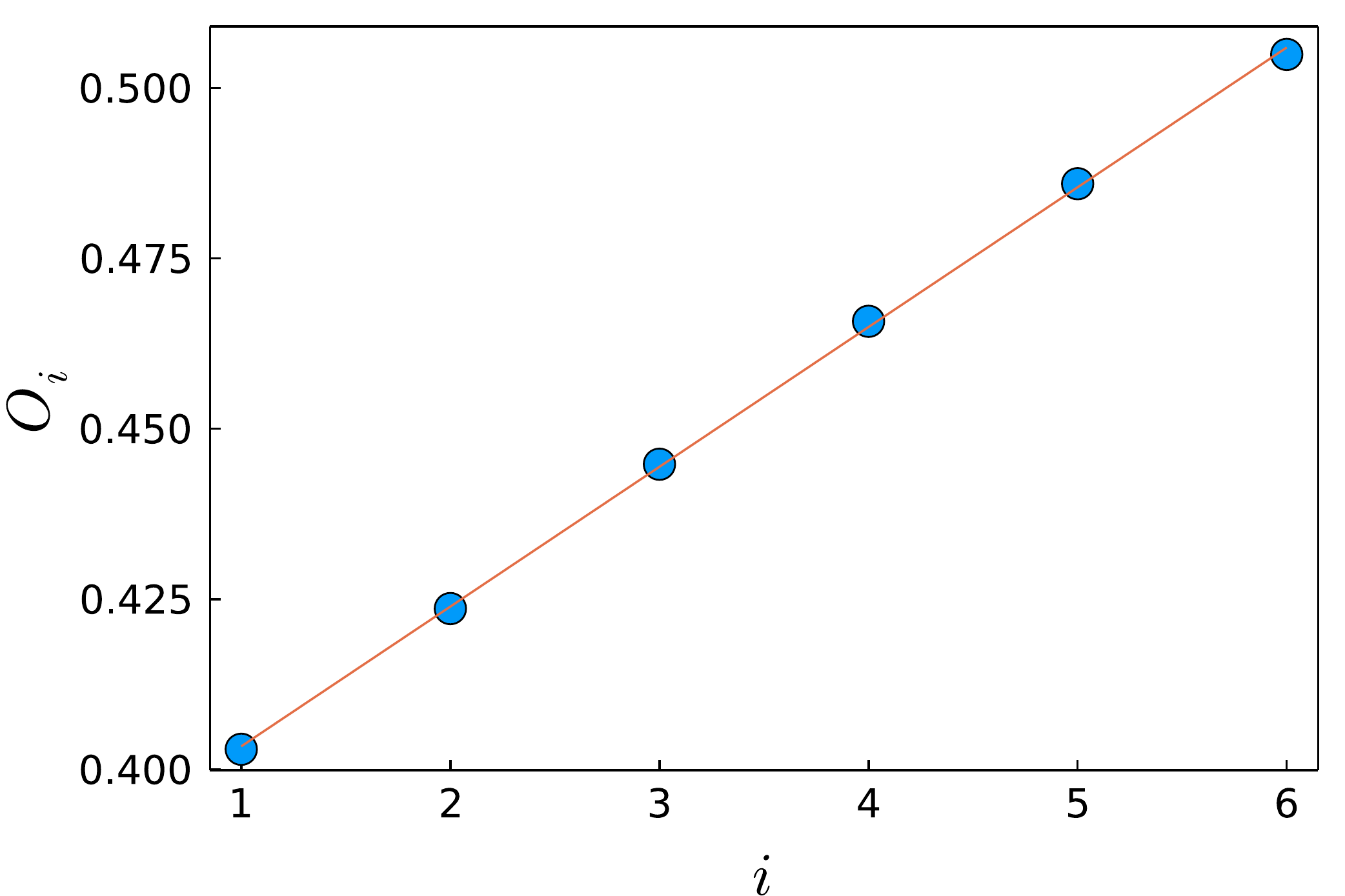}
\end{minipage}
\end{tabular}
    \caption{\rd{(Color online) The overlap centrality $O_i$ determined by the stationary distribution obtained by the exact diagonalization as a function of agent $i$ for $L=N=6$. The solid lines are linear regressions for $O_i$ and $i$. 
    \rd{The parameters are set as follows; $p=0.1$ (left) and $p=10.0$ (right); $\beta=1$ (top), $\beta=0$ (center), and $\beta=-1$ (bottom), respectively.}
    For negative $\beta$, $\phi$ is close to $1$ regardless of the value of $p$.}}
\label{neigen}
\end{figure*}

In order to characterize the correlation among agents, 
we may consider the neighbor matrix $\mathcal{R}:= (r_{ij})_{i,j}$ defined as 
\begin{align}\label{eq:definition_of_neighbor_matrix}
r_{ij}:= \sum_{\bv{x}}\delta(x_i,x_j)P(\bv{x}).
\end{align}
For example, if $P(\bvec{x})$ is the uniform distribution then $r_{ij}=1/L$, and if $P(\bvec{x})=\prod_{k=1}^N\delta(x,x_k)$ then $r_{ij}=1$. The latter gives the maximum value of $r_{ij}$. Note that $r_{jj}=1$ for any $j$.

The entry $r_{ij}$ means how often agent $i$ and $j$ are located at the same site under the distribution $P(\bvec{x})$.
In Fig.\ \ref{nei}, we show heatmaps of neighbor matrices computed from the stationary distribution.
It demonstrates that at $\beta=1$, the agents with higher rank have more overlaps with the other agents, and conversely at $\beta=-1$, the agents with lower rank have more overlaps with the other agents.

In order to quantify how often a given agent overlaps with the other agents in total, we introduce the overlap centrality as a function of rank $i$ using the entries of the neighbor matrix:
\begin{align}\label{eq:definition_of_overlap_centrality}
  O_i:=\sum_{\substack{1\le j\le N\\j\neq i}}r_{ij}.
\end{align} 
That is, we regard the agent $i$ having larger value of the overlap centrality 
as more influential one compared to the other agents having lower values of the overlap centrality. 
When the probability distribution $P(\bvec{x})$ is permutation symmetric, i.e., $P(\sigma(\bvec{x}))=P(\bvec{x})$ for any $\sigma\in\mathfrak{S}_N$, the overlap centrality does not depend on rank $i$.
Note that 
\begin{align}\label{MOrel}
M=\displaystyle \dfrac{1}{N^2}\sum_{i=1}^N O_i+\dfrac{1}{N}
\end{align} holds by the definition.
As shown in Fig.\ \ref{neigen}, we compute the overlap centrality $O_i$
computed from the stationary distribution, showing that
the overlap centrality has a plus slope at attractive interaction of $\beta=1$ and has a minus slope at repulsive interaction of $\beta=-1$.

In order to quantify the class of the overlap centrality in terms of the slope,
we measure the correlation coefficient $\phi$ of the overlap centrality with respect to agents' rank. This is defined as
\begin{align}\label{eq:correlation_coefficient}
    \phi:=\frac{1}{N}\sum_{i=1}^N\dfrac{(O_i-\frac{1}{N}\sum_{j=1}^NO_j)(i-\frac{1}{N}\sum_{j=1}^Nj)}{s_Os_I},
\end{align} where we define
\begin{align}
    s_O^2 &:= \frac{1}{N}\sum_{i}(O_i-\frac{1}{N}\sum_{j=1}^NO_j)^2, \\
    s_I^2 &:= \frac{1}{N}\sum_{i}(i-\frac{1}{N}\sum_{j=1}^Nj)^2
    = \frac{(N-1)(N+1)}{12}.
\end{align}
By definition, when $|\phi|=1$, 
$O_i$ is a linear function of $i$. 
We check the condition when this quantity $\phi$ is not defined. Since we set $N \ge 2$, the denominator of $\phi$ is zero exactly when $s_O^2 = 0$.  This corresponds to the case when $O_i$ is constant as a function of $i$.
In this case, we say that the quantity $\phi$ is {\em singular}.

In Fig.\ \ref{cc}, we show the $\beta$-dependence of the correlation coefficient $\phi$.
In the weak supplanting condition with $p\ll 1$ such as $p=0.1$ or $p=0.01$,
$\phi$ is close to $+1$ for negative $\beta$. As $\beta$ increases, $\phi$ sharply changes its sign around $\beta=0$, and turns out to be $-1$ for positive $\beta$.
We will discuss this behavior in a more general setting in Section \ref{Rig}.

\begin{figure}
  \includegraphics[width=7cm,clip]{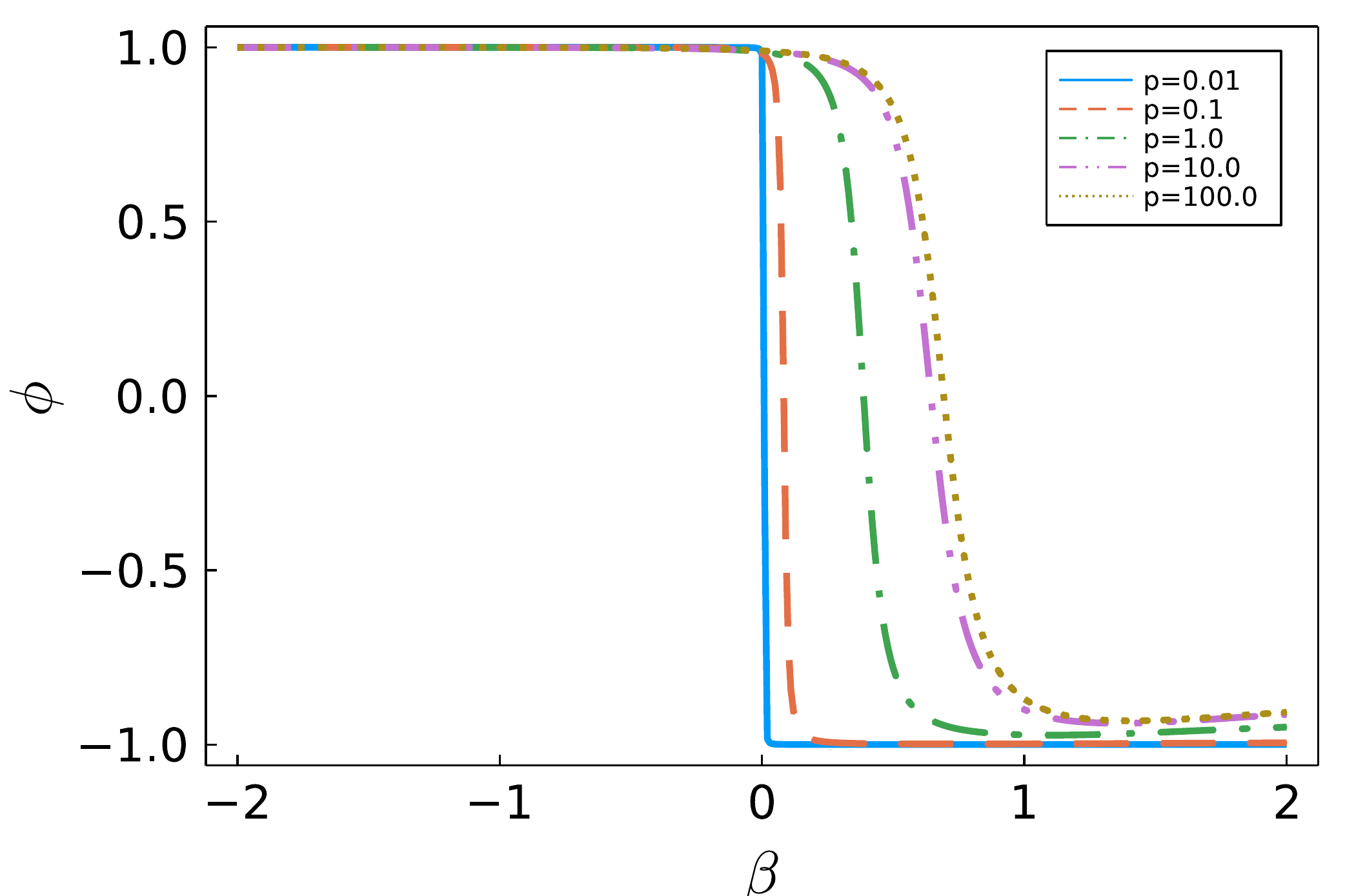}
  \caption{(Color online) Correlation coefficient $\phi$ as a function of $\beta$ for $L=N=6$.}
\label{cc}
\end{figure}

\section{Analytic results for overlap centrality}\label{Rig}
Let us discuss the weak-supplanting limit of $p\ll 1$, where general results in terms of overlap centrality are available. Although we have focused on only the Potts energy as $E(\bvec{x})$ in Section \ref{Potts}, hereafter, 
we consider all of the models which belong to the general class of energy functions satisfying the permutation symmetry condition \eqref{sym}. The neighbor matrix $\mathcal{R}$ \eqref{eq:definition_of_neighbor_matrix}, the overlap centrality $O_i \; (1 \le i \le N)$ \eqref{eq:definition_of_overlap_centrality}, and the correlation coefficient $\phi$ of the overlap centrality with agents’ rank \eqref{eq:correlation_coefficient} can also be defined for the general cases in the same manner. These are indeed the main subject in this section and also this paper. 
Hereafter, we use the state vector description, fix parameters 
$\beta, L, N$ as arbitrary values, and consider 
$p$-dependence of the dynamics 
unless otherwise specified.

The main goal in this section is 
to derive perfect correlation, which means that the correlation coefficient \eqref{eq:correlation_coefficient} satisfies $\phi=\pm 1$,
in the weak-supplanting limit of $p\to 0$ as long as $\phi$ is not singular in the 
sense mentioned after the definition of $\phi$.
In order to take a step forward, we start with introducing auxiliary stochastic processes {\eqref{tn0matrix}}, by which the transition matrix of the model can be completely reconstructed. 
Then, using this decomposition property {\eqref{eq:supplanting_decomposition}}, 
we construct another decomposition form {\eqref{eq:BetaDecomp0}} of the transition matrix, which we call {\it beta decomposition}, where the asymptotic behaviors in terms of $p\to 0$ can be rigorously estimated.
Note that the assumption of \eqref{sym} is essential in the derivation of key properties \eqref{eq:PermCanVec} and \eqref{eq:CompBOp}.

\subsection{Decompositions of transition matrix}\label{sct:Decomp}

In this subsection, we would like to introduce \textit{beta decomposition} \eqref{eq:BetaDecomp0} of the operator $\TransOp{}{}$. This decomposition enables us to investigate an asymptotic behavior of $\TransOp{}{}$ for $p \ll 1$ because of an asymptotic property \eqref{eq:BetaEstimate1}.
To describe it, first we introduce another decomposition, called \textit{supplanting decomposition}. This is relatively easy to describe, and simplifies the description of beta decomposition.
For details, see Appendix \ref{sct:SDecomp} and \ref{sct:BetaDecomp}.

Let us define the partial sum in \eqref{eq0:Trep}. 
For an integer $1 \le n \le N-1$, we define
\begin{equation}\label{tn0matrix}
\begin{aligned}
    \TransOp{n}{} \coloneqq &
    \sum_{\substack{1 \le i \le N \\ d = \dPM}}
    \sum_{\substack{\bm{x} \in X^N}} \delta(\#S(\bm{x}, i, d) , n)\\
    &\Bigg[\sum_{\substack{j \in S(\bm{x}, i, d) \\ d' = \dPM}}
    T(\bm{x} \to \Shift{j}{d'}\Shift{i}{d} \bm{x}) (\ShiftOp{j}{d'}
    - \id_H^{\otimes N}) \Bigg] \ShiftOp{i}{d} \CheckOp{}{\bm{x}}.
\end{aligned}\end{equation}
This operator $\TransOp{n}{}$ is the second term in \eqref{eq1:Trep} restricting indices $\bm{x}, i, d$ to those with $\#S(\bm{x}, i, d) = n$.
For any subset $\mathcal{S} \subseteq \{1, 2, \dots, N-1\}$, the matrix $\TransOp{0}{} + \sum_{i \in \mathcal{S}}\TransOp{i}{}$ is also a stochastic matrix. For example, the matrix $\TransOp{0}{} + \TransOp{n}{}$ represents the supplanting process only when $\#S(\bm{x}, i, d) = n$.
By definition, we have
\begin{equation}\label{eq:supplanting_decomposition}
    \TransOp{}{} = \TransOp{0}{} + \TransOp{1}{} 
    + \TransOp{2}{} + \dots + \TransOp{N-1}{}.
\end{equation}
This is a decomposition of the operator $\TransOp{}{}$, which we call 
the \textit{supplanting decomposition}.
Note that the coefficients $\langle \bm{y} | \TransOp{n}{} | \bm{x} \rangle$ of $n$-th term $\TransOp{n}{}$ for $n \ge 1$ are estimated as
\begin{equation}\label{eq:STermEstimate}
    | \langle \bm{y} | \TransOp{n}{} | \bm{x} \rangle | \le  \frac{p}{2}.
\end{equation}
For details, see Appendix \ref{sct:nthCoeff}.
In particular, the coefficients of $\TransOp{n}{}$ are estimated by $\mathcal{O}(p)$ when $p \to +0$.
Moreover, since at most one of $\langle \bm{y} | \TransOp{n}{} | \bm{x} \rangle \; (1 \le n \le N-1)$ is non-zero for fixed $\bm{x}$ and $\bm{y}$, we also have
\begin{equation} \label{eq:DifferenceOpEstimate}
    | \langle \bm{y} | (\TransOp{}{} - \TransOp{0}{} )| \bm{x} \rangle | \le  \frac{p}{2}.
\end{equation}

Though we explicitly estimate the coefficients of $\TransOp{n}{}$ in (\ref{eq:STermEstimate}), the supplanting decomposition does not give effective truncation of $\TransOp{}{}$ 
in terms of small $p$. 
Thus, we look for another decomposition of $\TransOp{}{}$
\begin{equation} \label{eq:BetaDecomp0}
     \TransOp{}{} = \TransOp{0}{} 
    + \BetaTerm{1} + \dots + \BetaTerm{N-1}
\end{equation}
satisfying
\begin{equation}\label{eq:BetaEstimate0}
    \langle \bm{y} | \BetaTerm{m} | \bm{x} \rangle 
    = \mathcal{O}(p^{m}) \text{ as } p \to +0.
\end{equation}
If we find such an expansion,
we have
\begin{equation}\label{eq:BetaEstimate1}
    \TransOp{}{} = \TransOp{0}{} 
    + \BetaTerm{1} + \dots + \BetaTerm{m} + \mathcal{O}(p^{m+1})
\end{equation}
for $1 \le m \le N-1$.

It would not be straightforward to practically find such an expansion. Nevertheless, in this case, through a rather tricky procedure as shown in Appendix \ref{sct:PrfBetaDecomp}, one can prove that \eqref{eq:BetaDecomp0} and \eqref{eq:BetaEstimate0} are satisfied by the following definition of $\BetaTerm{m}$:
\begin{align}\label{ummatrix}
    \BetaTerm{m} \coloneqq &\frac{(-1)^{m+1} B(m, 1+1/p)}{p} \nonumber \\
    &\quad \sum_{m \le n \le N} \binom{n-1}{m-1}(1+np) \TransOp{n}{},
\end{align}
for $1 \le m \le N-1$.
Here, $B(a,b)$ is the beta function 
and $\displaystyle\binom{n-1}{m-1}$ is the binomial coefficient. See \eqref{eq:BetaDecomp} and \eqref{eq:BetaDecompEstimate} of Appendix \ref{MATRIX} for the detail of the derivation; see also Remark \ref{rem:Addendum} for the motivation of this decomposition. We call this expansion \textit{beta decomposition}. Note that, for an integer $m \ge 1$,
\begin{equation}\label{betafunction}
    B \left( m, 1+\frac{1}{p} \right) =
    \frac{(m-1)!p^{m}}{(1+p)(1+2p)\dots(1+mp)}.
\end{equation}

By substituting (\ref{tn0matrix}) and (\ref{betafunction}) into (\ref{ummatrix}) with certain sets of transformations in Appendix \ref{sct:AnotherDescr} and \ref{sct:AnotherBetaDecomp},
we reach another representation of $\BetaTerm{m}$
as in \eqref{eq:SubOp} and \eqref{eq:UmDescr}:
\begin{equation}\label{u1}
\begin{aligned}
    \BetaTerm{m} = &\frac{(-1)^{m-1}(m-1)!p^m}{2(1+p)(1+2p)\dots(1+mp)} \\
    &\times \sum_{\substack{1 \le i < i_1 < \dots < i_{m} \le N}}
    \left[ 
    \sum_{1 \le k \le m}
    (\ShiftOp{i_k}{+} + \ShiftOp{i_k}{-} - 2\id^{\otimes N}_H)
    \right] \\
    &\qquad \times\CheckOp{i, i_1, \dots, i_m}{}
    \TransOp{0,\text{move}}{i},
\end{aligned}
\end{equation}
where we define
\begin{align}\label{tmove}
    \TransOp{0,\text{move}}{i} &\coloneqq
    \sum_{d = \pm}
    \sum_{\bm{x} \in X^N} T_0(\bm{x} \to \Shift{i}{d} \bm{x}) \ShiftOp{i}{d}
    \CheckOp{}{\bm{x}}, \\
    \CheckOp{i, i_1, \dots, i_m}{} &\coloneqq
    \sum_{x \in X} \CheckOp{i}{x} \CheckOp{i_1}{x} 
    \dots \CheckOp{i_m}{x}.
\end{align}
This representation of $\BetaTerm{m}$
is suitable for the further calculation related to permutation symmetry in Section \ref{sct:Perfect_correlation}. 
In the following sections, $\BetaTerm{1}$ is the one 
we mainly consider in the weak supplanting limit $p \to +0$.

Recall that $\PermOp{\sigma}$ defined in \eqref{eq:PermOpDefinition} denotes the permutation operator corresponding to a permutation $\sigma \in \mathfrak{S}_N$.
In Appendix \ref{sct:PermOperators} and \ref{sct:TransFirst}, 
we obtain commutation relations 
between permutation operators and other operators:
\begin{align}
    \PermOp{\sigma}\ShiftOp{i_1}{\dPM} 
    &= \ShiftOp{\sigma(i_1)}{\dPM}\PermOp{\sigma},
    \label{eq:CommPerm1} \\
    \PermOp{\sigma}\CheckOp{i_0,i_1}{} 
    &= \CheckOp{\sigma(i_0),\sigma(i_1)}{}\PermOp{\sigma}, 
    \label{eq:CommPerm2} \\
    \PermOp{\sigma}\TransOp{0,\text{move}}{i_0} 
    &= \TransOp{0,\text{move}}{\sigma(i_0)}\PermOp{\sigma}.
    \label{eq:CommPerm3} 
\end{align}
These relations will be used in \eqref{eq:CompBOp}. 

\subsection{Existence of perfect correlation}\label{sct:Perfect_correlation}
By using beta decomposition of the transition matrix obtained above, 
we are going to show that the correlation coefficient $\phi$ exhibits perfect correlation $|\phi|=1$ 
in the limit of $p \to 0$.



One can use the Brillouin--Wigner type perturbation theory \cite{Brilloin-Wigner-type perturbation theory} to rewrite the stationary state $\ket{P(\beta,p)}$ of $\TransOp{}{}(\beta, p)$ as a perturbation expansion from the stationary state of $\TransOp{0}{}$. For that purpose, we introduce some symbols. Let $\ket{\Pcan(\beta)}$ be the stationary state of $\TransOp{0}{}(\beta)$, i.e., the state corresponding to the Gibbs distribution \eqref{eq:GibbsDistr}:
\alit{
    \ket{\Pcan(\beta)}
    \coloneqq \sum_{\bvec{x}}\frac{\exp(-\beta E(\bvec{x}))}{Z(\beta)}\ket{\bvec{x}},
}
which satisfies
\alit{
    \TransOp{0}{}(\beta)\ket{\Pcan(\beta)}
    = \ket{\Pcan(\beta)},
}
and
\begin{align}
    \sum_{\bvec{x}}\braket{\bvec{x}|\Pcan(\beta)}
    = 1.
\end{align}
Note that $\ket{\Pcan(\beta)}$ is invariant under a permutation, that is,  it holds that 
\begin{align}\label{eq:Pcan_sym}
    \PermOp{\sigma}\ket{\Pcan(\beta)}=\ket{\Pcan(\beta)}
\end{align}
for any $\sigma\in\mathfrak{S}_N$, because of the permutation symmetry condition \eqref{sym} for energy function.
Let $\pr(\beta)$ be a projection operator on $H_X^{\otimes N}$ to the orthogonal complement of the subspace $\C\ket{\Pcan(\beta)}$:
\alit{\label{eq:projection op pi}
    \pr(\beta)
    \coloneqq \id_H^{\otimes N}-\frac{\ket{\Pcan(\beta)}\bra{\Pcan(\beta)}}{\braket{\Pcan(\beta)|\Pcan(\beta)}},
}
and $\GOp(\beta)$ be a linear operator from $H_X^{\otimes N}$ to itself:
\alit{
\GOp(\beta)
    \coloneqq \bka{\id_H^{\otimes N}-\TransOp{0}{}(\beta)}^{-1}\pr(\beta).
}
Here, 
we need to set the coefficient of the term $\id_H^{\otimes N}$ in $\GOp(\beta)$ as the eigenvalue $1$ corresponding to
the eigenvector $\ket{P(\beta, p)}$ of $\TransOp{}{}(\beta, p)$ (see \eqref{eq:stationary state for T}).
With the above notations, $\ket{P(\beta,p)}$ can be written as follows:
\alit{\label{ptheory}
    &\ket{P(\beta,p)}\\
    &= C(\beta,p)\bkc{
        \ket{\Pcan(\beta)}+\sum_{n=1}^\infty \bka{\GOp (\TransOp{}{}-\TransOp{0}{})}^n\ket{\Pcan(\beta)}
    },
}
where $C(\beta,p)$ is the positive normalization factor of $\ket{P(\beta,p)}$ such that $\sum_{\bvec{x}\in X^N}\braket{\bvec{x}|P(\beta,p)}=1$. 
Since $\TransOp{}{}-\TransOp{0}{}=\mathcal{O}(p)$, one can estimate that $C(\beta,p)=1+\mathcal{O}(p)$.

As obtained in (\ref{eq:BetaDecomp0}) and (\ref{u1}), within the asymptotic regime of small $p$, $\TransOp{}{}-\TransOp{0}{}$ is can be expanded with powers of $p$, and then we have
\alit{
    \TransOp{}{}-\TransOp{0}{}
    = \BetaTerm{1} + \mathcal{O}(p^2),
}
where
\alit{\label{u10}
    \BetaTerm{1}
    = \frac{p/2}{1+p}\sum_{1\le i_0<i_1\le N}\bka{\ShiftOp{i_1}{\dP}+\ShiftOp{i_1}{\dM}-2\id_{H}^{\otimes N}}\CheckOp{i_0,i_1}{}\TransOp{0,\text{move}}{i_0}.
}
By substituting (\ref{u10}) into (\ref{ptheory}), we have
\alit{\label{eq:perturbative_decomposition}
    &\ket{P(\beta,p)}\\
    &= C(\beta,p)\bkc{
        \ket{\Pcan(\beta)} + \GOp\BetaTerm{1}\ket{\Pcan(\beta)} + \mathcal{O}(p^2)
    }\\
    &= C(\beta,p)\bkc{
        \ket{\Pcan(\beta)} + \frac{p}{1+p}\sum_{1\le i_0<i_1\le N}\BOp{i_0,i_1}\ket{\Pcan(\beta)}}\\
    &\hspace{70mm} + \mathcal{O}(p^2),
}
where we define:
\alit{\label{eq:B_i0i1}
    \BOp{i_0,i_1}
    \coloneqq \frac{1}{2}\GOp\bka{\ShiftOp{i_1}{\dP}+\ShiftOp{i_1}{\dM}-2\id_{H}^{\otimes N}}\CheckOp{i_0,i_1}{}\TransOp{0,\text{move}}{i_0}.
}
This operator $\BOp{i_0,i_1}$ is dependent on $\beta$ but independent of $p$.
Thus, using $r_{ii}=1$ for any $i$, the overlap centrality can be written as follows: 


\alit{\label{eq:expression_overlap_centrality_by_A}
    &O_i
    = \sum_{1\le j\le N}\sum_{\bvec{x}\in X^N}\delta\bka{x_i,x_j}\braket{\bvec{x}|P(\beta,p)} -1 \\
    &= C(\beta,p)\bkc{
        \mathcal{A}_0(i)
        +\frac{p}{1+p} \sum_{1\le i_0<i_1\le N}\mathcal{A}_1(i,i_0,i_1)}
        -1\\
        &\hspace{70mm} +\mathcal{O}(p^2),
}
where 
\begin{align}
 &\mathcal{A}_0(i)=\sum_{1\le j\le N}\sum_{\bvec{x}\in X^N}\delta\bka{x_i,x_j}
    \braket{\bvec{x}|\Pcan(\beta)},\\
    \label{eq:definition_of_A1}
  &\mathcal{A}_1(i,i_0,i_1)=     
        \sum_{1\le j\le N}\sum_{\bvec{x}\in X^N}\delta\bka{x_i,x_j}\bra{\bvec{x}}\BOp{i_0,i_1}\ket{\Pcan(\beta)}.
\end{align}
Here $\mathcal{A}_0(i)$ and $\mathcal{A}_1(i, i_0, i_1)$ are constant as a function of $p$.


Recall that $\PermOp{\sigma}$ is the permutation operator corresponding to a permutation $\sigma \in \mathfrak{S}_N$ (see \eqref{eq:PermOpDefinition} for definition).
 By using \eqref{eq:Pcan_sym}, we have
\begin{align} \label{eq:PermCanVec}
    &\mathcal{A}_0(i)\notag\\
    &= \sum_{1\le j\le N}\sum_{\bvec{x}\in X^N}\delta\bka{x_i,x_j}\braket{\bvec{x}|\PermOp{\sigma}|\Pcan(\beta)}\notag\\
    &= \sum_{1\le j\le N}\sum_{\bvec{x}\in X^N}\delta\bka{x_{\sigma(i)},x_{\sigma(j)}}\braket{\bvec{x}|\Pcan(\beta)}\notag\\
    &= \sum_{1\le j\le N}\sum_{\bvec{x}\in X^N}\delta\bka{x_{\sigma(i)},x_{j}}\braket{\bvec{x}|\Pcan(\beta)} \notag\\
    &= \mathcal{A}_0(\sigma(i)).
\end{align}
This means that $\mathcal{A}_0(i)$ does not depend on $i$:
\begin{equation}
\mathcal{A}_0(i) = \mathcal{A}_0(1) \eqqcolon \mathcal{B}_0.
\end{equation}

Moreover, by using 
\eqref{eq:CommPerm1}, \eqref{eq:CommPerm2},  \eqref{eq:CommPerm3}, and
\eqref{eq:Pcan_sym},
we have $\PermOp{\sigma} \BOp{i_0,i_1} = \BOp{\sigma(i_0),\sigma(i_1)}\PermOp{\sigma}$. 
Therefore, it follows that
\begin{align} \label{eq:CompBOp}
    &\mathcal{A}_1(i,i_0,i_1)\notag\\
    &= \sum_{1\le j\le N}\sum_{\bvec{x}\in X^N}\delta\bka{x_i,x_j}\bra{\bvec{x}}\PermOp{\sigma}^{-1}\BOp{\sigma(i_0),\sigma(i_1)}\PermOp{\sigma}\ket{\Pcan(\beta)}\notag\\
    &= \sum_{1\le j\le N}\sum_{\bvec{x}\in X^N}\delta\bka{x_{\sigma(i)},x_{\sigma(j)}}\bra{\bvec{x}}\BOp{\sigma(i_0),\sigma(i_1)}\ket{\Pcan(\beta)}\notag\\
    &= \sum_{1\le j\le N}\sum_{\bvec{x}\in X^N}\delta\bka{x_{\sigma(i)},x_j}\bra{\bvec{x}}\BOp{\sigma(i_0),\sigma(i_1)}\ket{\Pcan(\beta)}\notag\\
    &= \mathcal{A}_1(\sigma(i),\sigma(i_0),\sigma(i_1))
\end{align}
for any $\sigma\in\mathfrak{S}_N$. In this sense, $\mathcal{A}_{1}$ preserves permutation symmetry in spite of the permutation-symmetry breaking of $\TransOp{}{}$.
By this equation, we obtain 
\begin{align}\label{eq:definition_of_B123}
    \mathcal{A}_1(i,i_0,i_1)
    =
    \begin{dcases}
        \mathcal{A}_1(1, 1, 2) \eqqcolon \mathcal{B}_1  & (\text{if}\ i=i_0)\\
        \mathcal{A}_1(2, 1, 2) \eqqcolon \mathcal{B}_2  & (\text{if}\ i=i_1)\\
        \mathcal{A}_1(3, 1, 2) \eqqcolon \mathcal{B}_3  & (\text{if}\ i \neq i_0, i_1)
    \end{dcases}
\end{align}
for any $i$ and any pair $i_0<i_1$.
These quantities $\mathcal{B}_1, \mathcal{B}_2$, and $\mathcal{B}_3$ are independent of $p$.

Note that for a given agent $i$, the number of pairs $(i_0,i_1)$ which satisfies each of the above conditions corresponding with $\mathcal{B}_1$, $\mathcal{B}_2$, and $\mathcal{B}_3$ is $N-i$, $i-1$, and $(N-1)(N-2)/2$, respectively. 
Therefore, we obtain
\alit{\label{eq:sum_of_A1}
    &\sum_{1\le i_0<i_1\le N}\mathcal{A}_1(i,i_0,i_1)\\
    &= (N-i)\mathcal{B}_1+(i-1)\mathcal{B}_2+\frac{(N-1)(N-2)}{2}\mathcal{B}_3.
}
Substituting \eqref{eq:sum_of_A1} into \eqref{eq:expression_overlap_centrality_by_A}, we find that
\begin{align} \label{eq:overlap_centrality_estimate}
    O_i
    = \frac{p}{1+p}C(\beta,p)(\mathcal{B}_2-\mathcal{B}_1)i+c_0+\mathcal{O}(p^2),
\end{align}
where $c_0$ is a real number independent of rank $i$:
\begin{align}
    c_0
    &= \frac{p}{1+p}C(\beta,p)\bka{N\mathcal{B}_1-\mathcal{B}_2+\frac{(N-1)(N-2)}{2}\mathcal{B}_3} \notag\\
    &\qquad +C(\beta,p)\mathcal{B}_0-1.
    \label{eq:Compute_c0}
\end{align}
With the estimation of $C(\beta,p)=1+\mathcal{O}(p)$, we can rewrite \eqref{eq:overlap_centrality_estimate} as
\begin{align}
\label{eq:overlap_centrality_estimate2}
    O_i
    = \frac{p}{1+p}(\mathcal{B}_2-\mathcal{B}_1)i+c_0+\mathcal{O}(p^2).
\end{align}
Note that, in \eqref{eq:expression_overlap_centrality_by_A}, \eqref{eq:overlap_centrality_estimate}, and \eqref{eq:overlap_centrality_estimate2}, the terms in $\mathcal{O}(p^2)$ could depend on $i$.
\rd{As shown in Fig.\ \ref{tilt_of_Oi}, we have confirmed that 
$p$-dependence in the first term of \eqref{eq:overlap_centrality_estimate2} 
is consistent to that computed by the exact diagonalization. Moreover, as shown in Fig.\ \ref{B2minusB1}, we can see good agreement of the value $\mathcal{B}_2 - \mathcal{B}_1$ by two distinct methods: one method is by definition \eqref{eq:definition_of_B123}. The other is by comparing $O_i$ computed by the exact diagonalization with rank-dependence in \eqref{eq:overlap_centrality_estimate2}}.

\begin{figure*}
\begin{tabular}{cc}
\begin{minipage}[t]{0.45\linewidth}
    \includegraphics[width=8cm,clip]{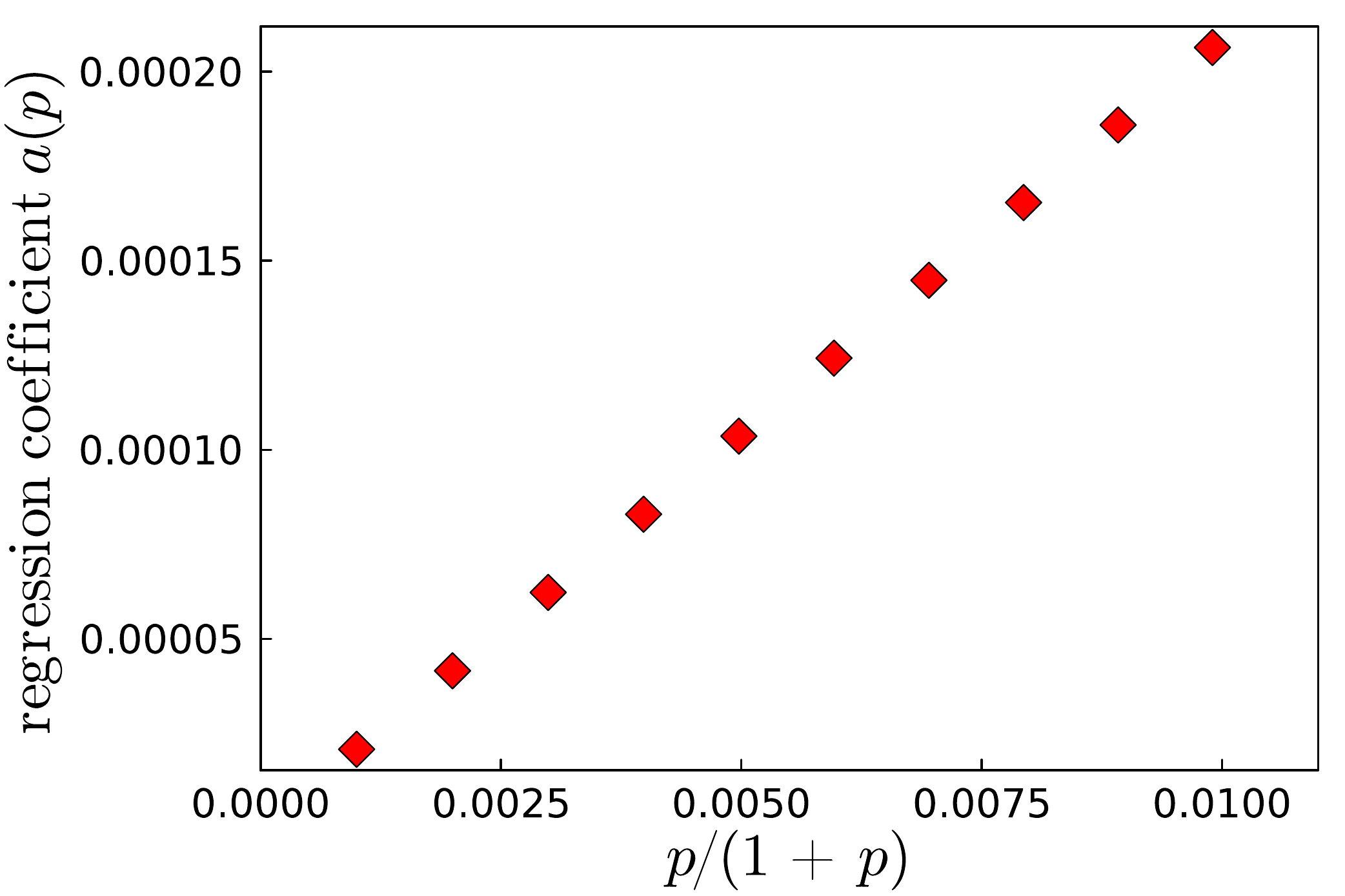}
\end{minipage}
\begin{minipage}[t]{0.45\linewidth}
    \includegraphics[width=8cm,clip]{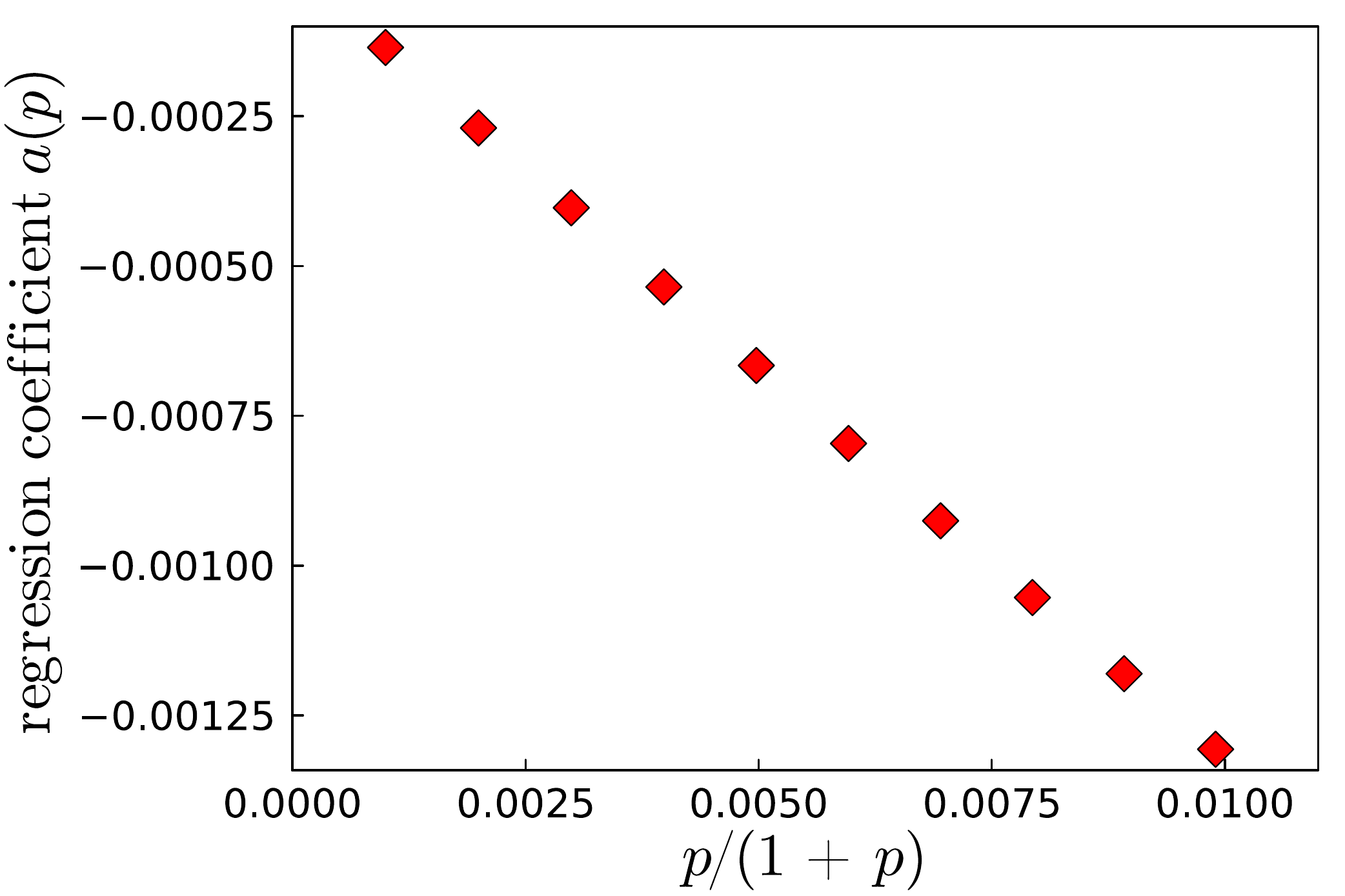}
\end{minipage}
\end{tabular}
    \caption{\rd{(Color online) 
    The overlap centrality $O_j$ computed by the exact diagonalization for $L=N=6$ is expressed by $O_j\simeq a(p)j+b(p)$, where coefficient $a(p)$ and $b(p)$ are determined by linear regression.
    These two figures plot the regression coefficient $a(p)$ versus $p/(1+p)$ for $p=0.001, 0.002, \dots, 0.01$;
     $\beta = -1$ (left) and $\beta = 1$ (right).
    The correlation coefficient between $a(p)$ and $p/(1+p)$ is $0.9999998...$ for $\beta=-1$ and $-0.99997...$ for $\beta=1$. 
    Note that $p$-dependence of the first term in \eqref{eq:overlap_centrality_estimate2} corresponds to that the correlation coefficient is equal to $1$ for $\beta=-1$ and $-1$ for $\beta=1$, respectively.}}
\label{tilt_of_Oi}
\end{figure*}

\begin{figure}
    \includegraphics[width=8cm,clip]{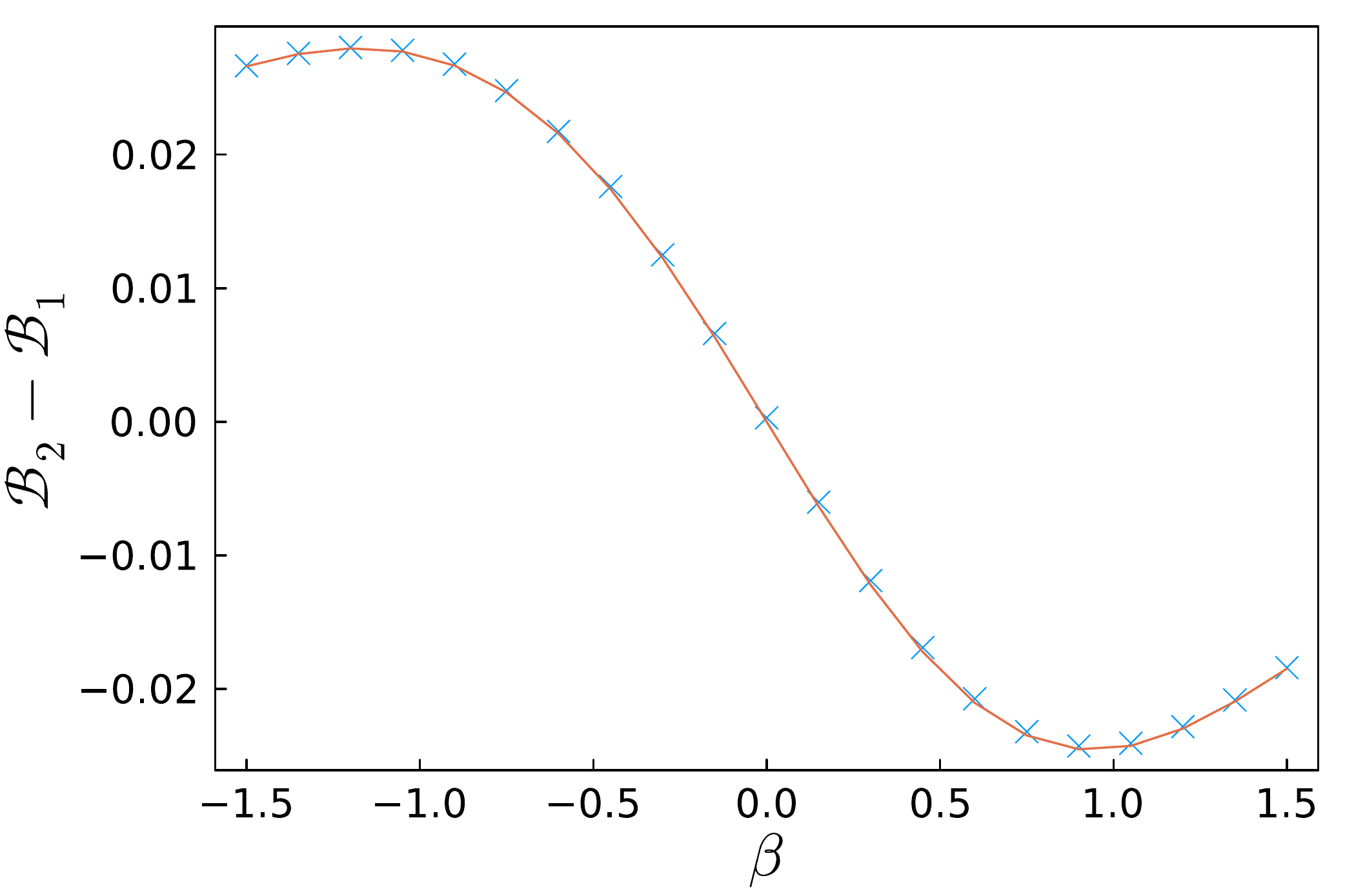}
    \caption{\rd{(Color online) 
    $\mathcal{B}_2-\mathcal{B}_1$ as a function of $\beta$ for $L=N=3$. The line corresponds to the computation through its definition \eqref{eq:definition_of_B123}. 
    The blue crosses correspond to the estimation through \eqref{eq:overlap_centrality_estimate2}
    where ${O}_i$ is calculated by the exact diagonalization. 
    Practically, the estimated value is calculated as $\tilde{a}(\beta)$ where $a(p) \simeq \tilde{a}(\beta)\frac{p}{1+p}+\tilde{b}(\beta)$ through linear regression.  }}
\label{B2minusB1}
\end{figure}

We write the remaining term in \eqref{eq:overlap_centrality_estimate2} as
\begin{equation}
    \varepsilon_i(\beta,p) \coloneqq O_i
    - \frac{p}{1+p}(\mathcal{B}_2-\mathcal{B}_1)i - c_0
    = \mathcal{O}(p^2).
\end{equation}

Let us suppose $\mathcal{B}_2 \neq \mathcal{B}_1$.
The remainder term $\varepsilon_i$ can be ignored when $p \ll 1$ compared to the term $\dfrac{p}{1+p}(\mathcal{B}_2-\mathcal{B}_1)i.$
After ignoring $\varepsilon_i$, the overlap centrality $O_i$ is a linear function with respect to rank $i$. 
Therefore the perfect correlation holds:
\begin{align}\label{eq:B_1-B_2_and_phi}
    \phi \to
    \begin{dcases}
        +1 & (\text{if}\ \mathcal{B}_1<\mathcal{B}_2) \\
        -1 & (\text{if}\ \mathcal{B}_1>\mathcal{B}_2)
    \end{dcases} \quad
    \text{as}\ p\to +0.
\end{align}
This is consistent with the observation in Fig.\ \ref{cc} as long as $p$ is small 
because the value of $\phi$ approaches $+1$ or $-1$ very closely.

The above discussion gives another indication. Ignoring $\varepsilon_i$, the linear dependency of the overlap centrality $O_i$ with respect to rank $i$ comes from permutation symmetry \eqref{eq:CompBOp} in $\mathcal{A}_1$.
Note that, as mentioned by \eqref{noncommuT}, the permutation symmetry is broken in the transition matrix if $p \neq 0$, but this symmetry is partially recovered in the quantity $\mathcal{A}_1$. 
Note that 
one can also derive 
\begin{align}
M &= \frac{p}{1+p}\bka{\dfrac{N-1}{2N}(\mathcal{B}_1+\mathcal{B}_2) + \dfrac{(N-1)(N-2)}{2N}\mathcal{B}_3} \notag \\
&\qquad+C(\beta, p)\dfrac{\mathcal{B}_0}{N}+\mathcal{O} (p^2)
\end{align}
by using (\ref{MOrel}).

Evaluating the sign of $\phi$ requires concrete calculation of both $\mathcal{B}_1$ and $\mathcal{B}_2$, 
but it gets complicated to obtain their analytic expressions for a given energy function such as the Potts energy \eqref{eq:PottsEnergy}. 
Nevertheless, it is still feasible to perform such a calculation in the case of $\beta=0$ as discussed in the next subsection. It is because at $\beta=0$, transition matrix $\TransOp{0}{}(\beta)$ gets independent of the form of energy function, resulting in getting close to that of the free random walk process.


\subsection{Singularity in $\phi$ at $\beta=0$}\label{sec:Singularity}
One can calculate the overlap centrality at $\beta=0$ for the asymptotic regime of small $p$ concretely. As a result, we show that at $\beta=0$, $\mathcal{B}_1=\mathcal{B}_2$ holds, which means that the correlation coefficient $\phi$ is singular at $\beta=0$.

From the definitions of $\mathcal{B}_1$, $\mathcal{B}_2$ in \eqref{eq:definition_of_B123} and $\mathcal{A}_1$ in \eqref{eq:definition_of_A1}, we have
\begin{gather}\label{eq:B1def}
    \mathcal{B}_1
    = \sum_{1\le j\le N}\sum_{\bvec{x}\in X^N}\delta(x_1,x_j)\braket{\bvec{x}|\BOp{1,2}|\Pcan(\beta)},\\
    \label{eq:B2def}
    \mathcal{B}_2
    = \sum_{1\le j\le N}\sum_{\bvec{x}\in X^N}\delta(x_2,x_j)\braket{\bvec{x}|\BOp{1,2}|\Pcan(\beta)}.
\end{gather}
In order to calculate $\mathcal{B}_1$ and $\mathcal{B}_2$ at $\beta=0$, we need derive an explicit expression of $\braket{\bvec{x}|\BOp{1,2}|\Pcan(0)}$. 
First, let us define 
\begin{align}
    \kket{k}
    \coloneqq \sum_{x\in X}\frac{e^{ikx}}{\sqrt{L}}\ket{x} \in H_X,
\end{align}
for $k=2\pi n/L$ and $n\in\mathbb{Z}/L\mathbb{Z}$. The orthonormal system $\bkb{\kket{k}}_k$ spans the whole space $H_X$ over $\mathbb{C}$. 
In particular 
\begin{align}
    \ket{\Pcan(0)}=\bka{\dfrac{\kket{0}}{\sqrt{L}}}^{\otimes N}
\end{align}
holds. Note that, for $\bm{k} = (k_1, k_2, \dots, k_N)$ with $k_i=2\pi n_i/L$ and $n_i\in\mathbb{Z}/L\mathbb{Z}$ with $1 \le i \le N$,
\begin{align}
    \kket{\bvec{k}}\coloneqq\kket{k_1}\otimes\cdots\otimes\kket{k_N}\in H_X^{\otimes N}
\end{align} is an eigenvector of $\TransOp{0}{}(\beta = 0)$ with an eigenvalue 
\begin{align}
\dfrac{1}{N}\bka{\cos^2\dfrac{k_1}{2}+\cdots+\cos^2\dfrac{k_N}{2}}.
\end{align}
The vector $\kket{\bvec{k}}$ is also an eigenvector of $\GOp(\beta=0)$, and the eigenvalue is
\begin{align}\label{eq:eigenvalue_of_GOp}
    N\bka{\sin^2\dfrac{k_1}{2}+\cdots+\sin^2\dfrac{k_N}{2}}^{-1}
\end{align}
for any  $\bvec{k}\neq \bvec{0} = (0, 0, \dots, 0)$.

Next, substituting the definition \eqref{eq:B_i0i1} of $\BOp{1,2}$ into the term $\braket{\bvec{x}|\BOp{1,2}|\Pcan(0)}$, 
we obtain
\begin{align}\label{eq:xB12Pcan0}
    &\braket{\bvec{x}|\BOp{1,2}|\Pcan(0)}\notag\\
    &= \frac{1}{2}\bra{\bvec{x}}\GOp(0)(\ShiftOp{2}{\dP}+\ShiftOp{2}{\dM}-2\id_H^{\otimes N})\notag\\
    &\quad\quad\quad \times\CheckOp{1,2}{}\TransOp{0,\mathrm{move}}{1}{(\beta=0)}\ket{\Pcan(0)}.
\end{align}
In order to obtain more explicit expression, let us multiply $\ket{\Pcan(0)}$ by $\TransOp{0,\mathrm{move}}{1}$, $\CheckOp{1,2}{}$, $(\ShiftOp{2}{\dP}+\ShiftOp{2}{\dM}-2\id_H^{\otimes N})$, and $\GOp(0)$ from the left, successively. Reminding of the definition (\ref{tmove}) of $\TransOp{0,\mathrm{move}}{1}$, we have
\begin{align}\label{eq:T0move_acting_on_P0}
    \TransOp{0,\mathrm{move}}{1}(0)\ket{\Pcan(0)}
    = \frac{1}{2N}\ket{\Pcan(0)},
\end{align}
and can show that
\begin{align}\label{eq:CheckOp_acting_on_P0}
    \CheckOp{1,2}{}\ket{\Pcan(0)}
    &= \sum_{x\in X}\frac{\ket{x}}{L}\otimes\frac{\ket{x}}{L}\otimes\bka{\frac{\kket{0}}{\sqrt{L}}}^{\otimes(N-2)}\notag\\
    &= \frac{1}{L^2}\sum_{k_1}\kket{k_1}\otimes\kket{{-}k_1}\otimes\bka{\frac{\kket{0}}{\sqrt{L}}}^{\otimes(N-2)}.
\end{align}
By using \eqref{eq:T0move_acting_on_P0} and \eqref{eq:CheckOp_acting_on_P0}, we obtain
\begin{align}\label{eq:DeltaXiTP}
    &(\ShiftOp{2}{\dP}+\ShiftOp{2}{\dM}-2\id_H^{\otimes N})\CheckOp{1,2}{}\TransOp{0,\mathrm{move}}{1}(0)\ket{\Pcan(0)}\notag\\
    &= -\frac{1}{2N}\frac{1}{L^2}\sum_{k_1}4\sin^2\frac{k_1}{2}\kket{k_1}\otimes\kket{{-}k_1}\otimes\bka{\frac{\kket{0}}{\sqrt{L}}}^{\otimes(N-2)}.
\end{align}
Substituting \eqref{eq:DeltaXiTP} into \eqref{eq:xB12Pcan0}, and using \eqref{eq:eigenvalue_of_GOp}, we find
\begin{align}\label{eq:xB12Pcan}
    & \braket{\bvec{x}|\BOp{1,2}|\Pcan(0)} \notag \\
    &= -\frac{1}{2L^2}\sum_{k_1 \neq 0}\frac{e^{ik_1 x_1}}{\sqrt{L}}\frac{e^{-ik_1 x_2}}{\sqrt{L}}\bka{\frac{1}{L}}^{N-2}\notag\\
    &= -\frac{1}{2L^{N+1}}\bka{L\delta(x_1,x_2)-1}.
\end{align}
Recalling \eqref{eq:B1def} and \eqref{eq:B2def} with
\begin{align} \label{eq:CombinatorialSum}
    &\sum_{\bm{x} \in X^N} \delta(x_i,x_j) (L\delta(x_1, x_2) - 1) \notag \\
    &= \begin{dcases}
        L^{N-1}(L-1) & (\text{if}\ (i,j) = (1,2) \text{ or } (2,1)) \\
        0 & (\text{otherwise}),
    \end{dcases}
\end{align}
one can calculate
\begin{align}
    \mathcal{B}_1
    = \mathcal{B}_2
    = -\frac{L-1}{2L^2}.
\end{align}
Thus, it turns out that $\phi$ is singular at $\beta=0$.

One can also calculate $\mathcal{B}_0=N/L$, and using \eqref{eq:CombinatorialSum}, $\mathcal{B}_3 = 0$.
By substituting the value of $\mathcal{B}_\ell$ ($\ell=0,1,2,3$) into \eqref{eq:Compute_c0} and \eqref{eq:overlap_centrality_estimate2}, the overlap centrality $O_i$ is $\mathcal{O}(N/L)$.
This is consistent with the uniform distribution corresponding to the case of $\beta=0$.
Therefore, it is reasonable that 
$\mathcal{B}_0=\mathcal{O}(N/L)$ and $\mathcal{B}_1 =\mathcal{B}_2 =\mathcal{O}(L^{-1})$ as functions of $N$ and $L$. 
Note that one can also evaluate 
\begin{equation}
    M=\dfrac{1}{L}\bka{C(0,p) - \dfrac{p}{2(1+p)}\dfrac{N-1}{N}\dfrac{L-1}{L}} +\mathcal{O}(p^2).
\end{equation}

\subsection{Comparison between exact diagonalization and analytic result}
In this subsection, we shall illustrate the behavior of the correlation coefficient $\phi$ computed by exact diagonalization in comparison with analytic discussion in Section \ref{Rig}.

In order to consider $\beta$-dependence of $\phi$, let us fix $p$ as a small but non-zero value and change the value of $\beta$.
When the parameter $\beta$ varies with satisfying the condition
\begin{equation} \label{eq:no_longer_dominant}
    \mathcal{B}_2(\beta,p)-\mathcal{B}_1(\beta,p)= \mathcal{O}(\varepsilon_i(\beta, p)),
\end{equation}
the linear term $(\mathcal{B}_2 - \mathcal{B}_1)i$ in $O_i$ is no longer dominant in the rank-dependence. 
As a result, $\phi$ could change continuously from $\phi=-1$ to $\phi=1$. 
In this case, \eqref{eq:B_1-B_2_and_phi} does not necessarily hold. Exact diagonalization indicates that the range of $\beta$, where $\phi$ takes a value close to $\pm 1$, is wider as $p$ is smaller.

Let us remind of the observation in the exact diagonalization in the case of the Potts energy 
that the value of $\phi$ sharply changes around $\beta=0$ for small $p$ as shown in Fig.\ \ref{cc}. 
Assuming that this observation is 
universal for sufficiently small $p$, 
by combining the existence of perfect correlation and the singularity of $\phi$ at $\beta=0$,  
it may be a reasonable conjecture that, 
at least, in the case of the Potts energy,
$\phi$ becomes discontinuous at $\beta=0$
as a function of $\beta$ 
in the limit of $p\to +0$.

\subsection{Correspondence between Overlap centrality and Eigenvector centrality}\label{relation_between_centralities}
Let us discuss the relation between the overlap centrality $\bv{O}=(O_i)_{i=1}^N$ and the other existing ways to define centrality.

First, the overlap centrality has a connection to another existing centrality in the following sense. Let us consider a weighted complete graph, where each agent is regarded as a vertex, and an element $r_{ij}$ of the neighbor matrix $\mathcal{R}$ for the pair of agents $i,j$ is regarded as the weight of the edge $(i,j)$. 
Then, the overlap centrality defined above is equivalent to the strength centrality of the complete graph constructed above, which has been introduced in the field of network science \cite{Freeman, Barrat}.

Second, 
one can also define the eigenvector centrality of the complete graph mentioned above as the eigenvector $\bv{V}=(V_i)_{1\le i\le N}$ of $\mathcal{R}$
with the maximum eigenvalue.
Indeed, one may show that the eigenvector centrality is 
directly related to the overlap centrality when $p \ll 1$ in the following manner:
\begin{align}\label{ov-centrality}
    \bv{V}
    \propto \frac{1}{N^{3/2}c}\bv{O}-\gamma\times(1,1,\ldots,1)^T + \mathcal{O}(p^2),
\end{align}
where $c$ and $\gamma$ are constants with $\mathcal{O}(1)$. 
In particular, combining with \eqref{eq:overlap_centrality_estimate2}, we have
\begin{align}
    V_i &\propto \frac{p}{1+p}\frac{\mathcal{B}_2 - \mathcal{B}_1}{N^{3/2}c} i+ \left(\frac{c_0}{N^{3/2}c} -\gamma\right) +  \mathcal{O}(p^2)
\end{align}
where the coefficient of proportionality is independent of $i$. 
Thus, the eigenvector centrality as well as the overlap centrality depends on the rank $i$ linearly if the term of $\mathcal{O}(p^2)$ is ignored in \eqref{ov-centrality}.
Remarkably, that relation \eqref{ov-centrality} holds for general probability distribution which breaks the permutation symmetry of agents weakly. See Appendix \ref{ECENTER} 
for the explicit two conditions to hold the relation \eqref{ov-centrality} in a 
more general form.
\rd{As shown in Fig. \ref{OCandEC}, 
one can see better agreement of both centralities for smaller $p$.}

\begin{figure*}
\begin{tabular}{cc}
\begin{minipage}[t]{0.45\linewidth}
    \includegraphics[width=7cm,clip]{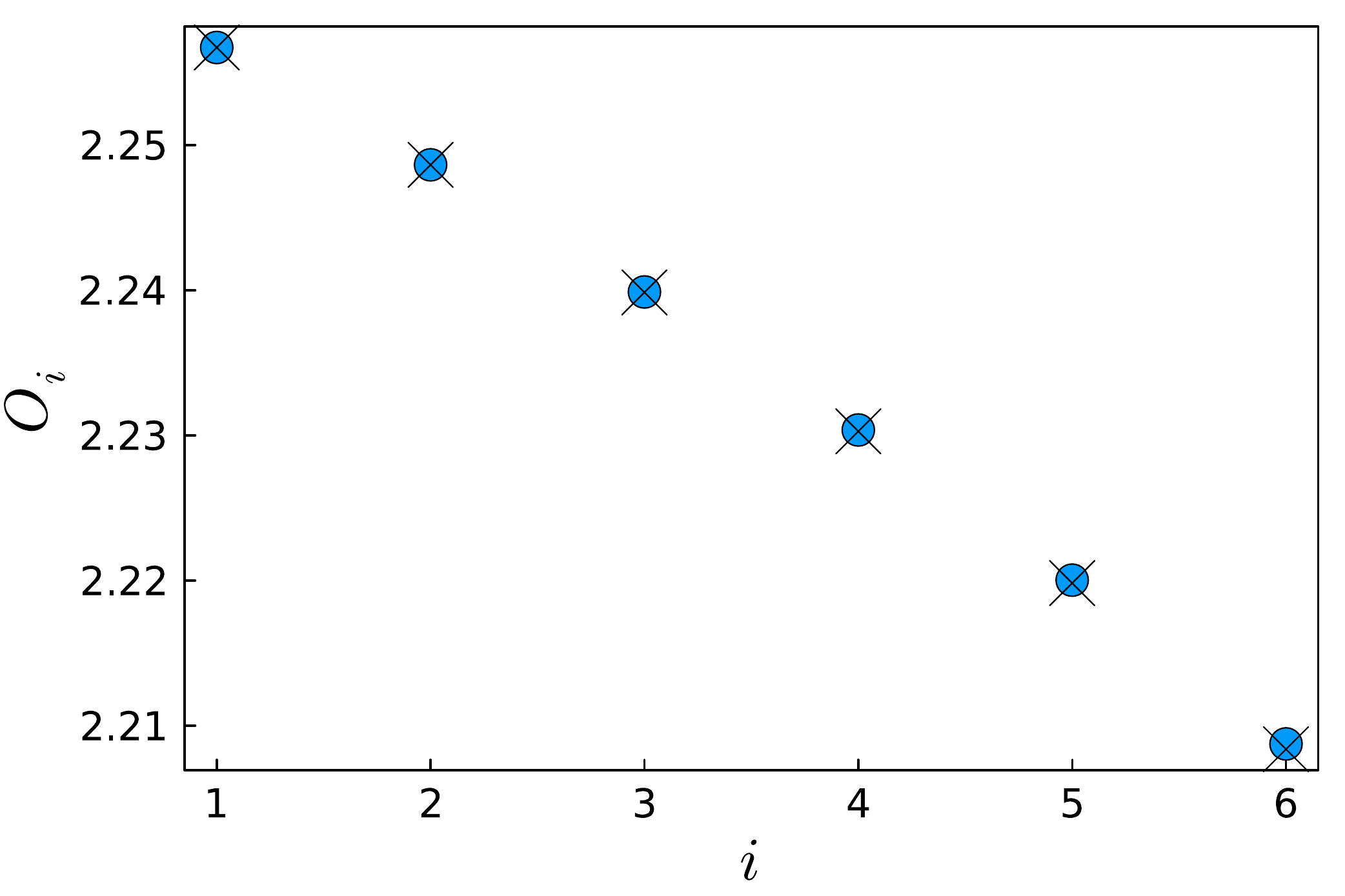}
    \includegraphics[width=7cm,clip]{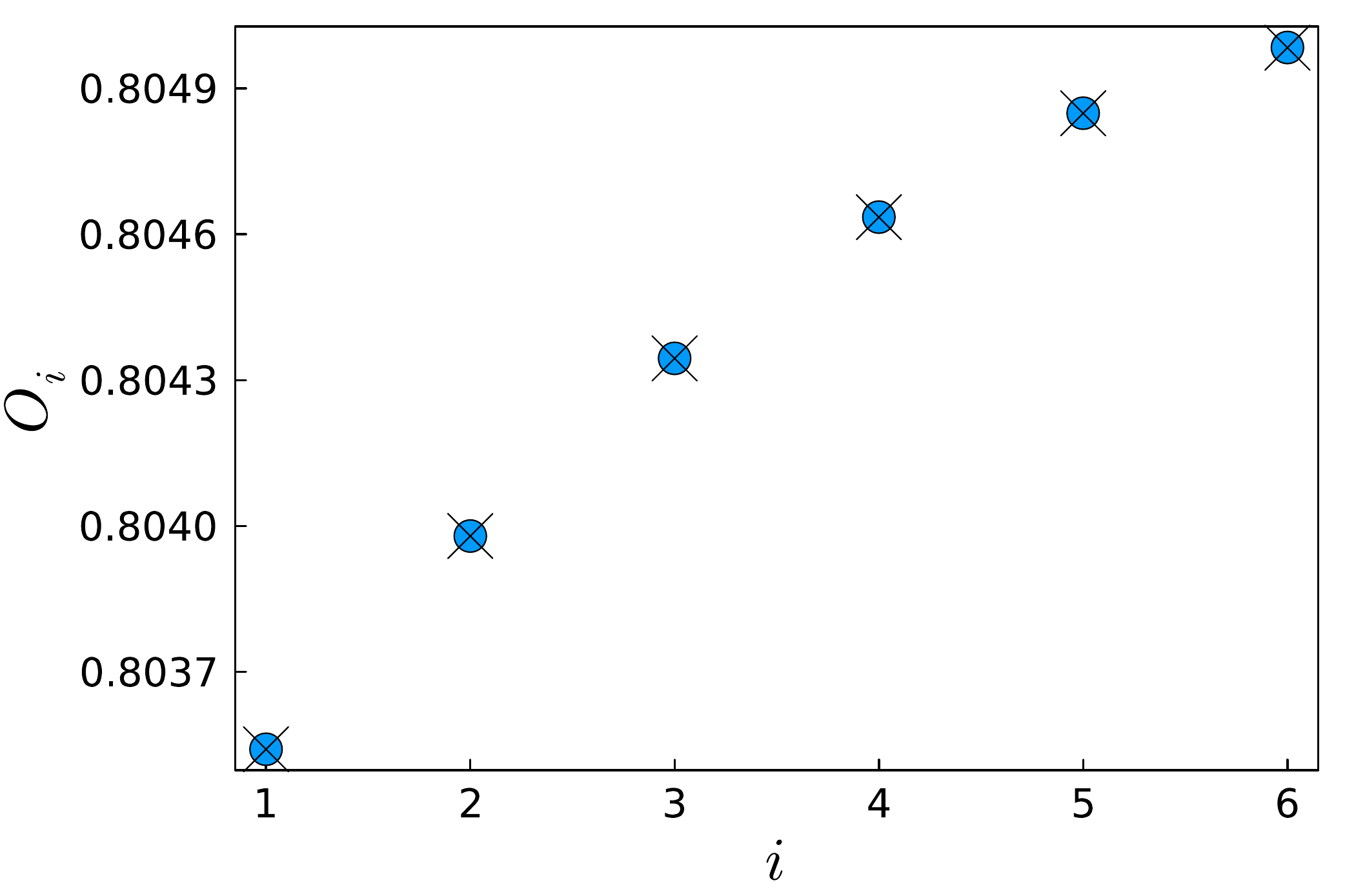}
    \includegraphics[width=7cm,clip]{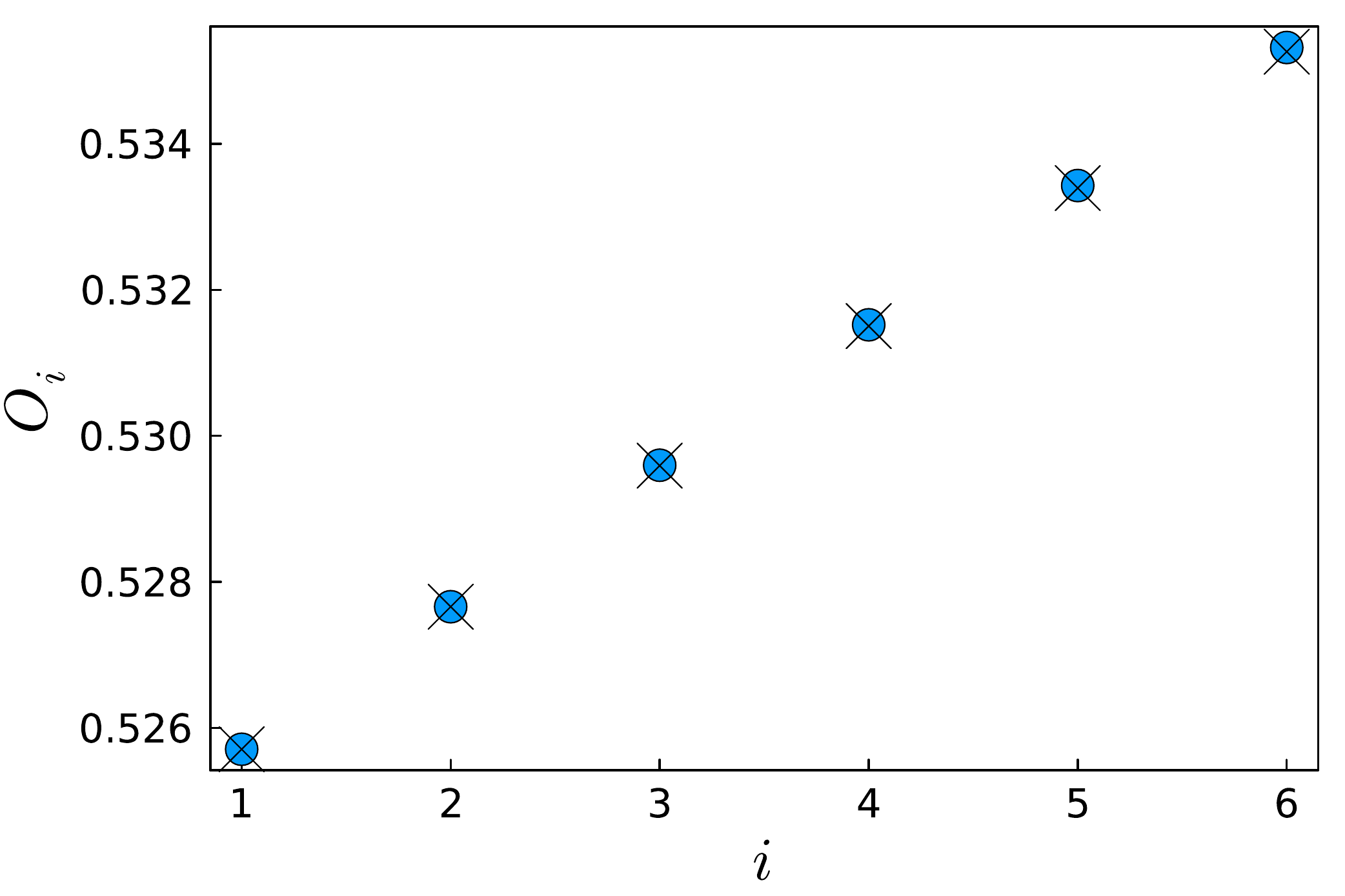}
\end{minipage}
\begin{minipage}[t]{0.45\linewidth}
    \includegraphics[width=7cm,clip]{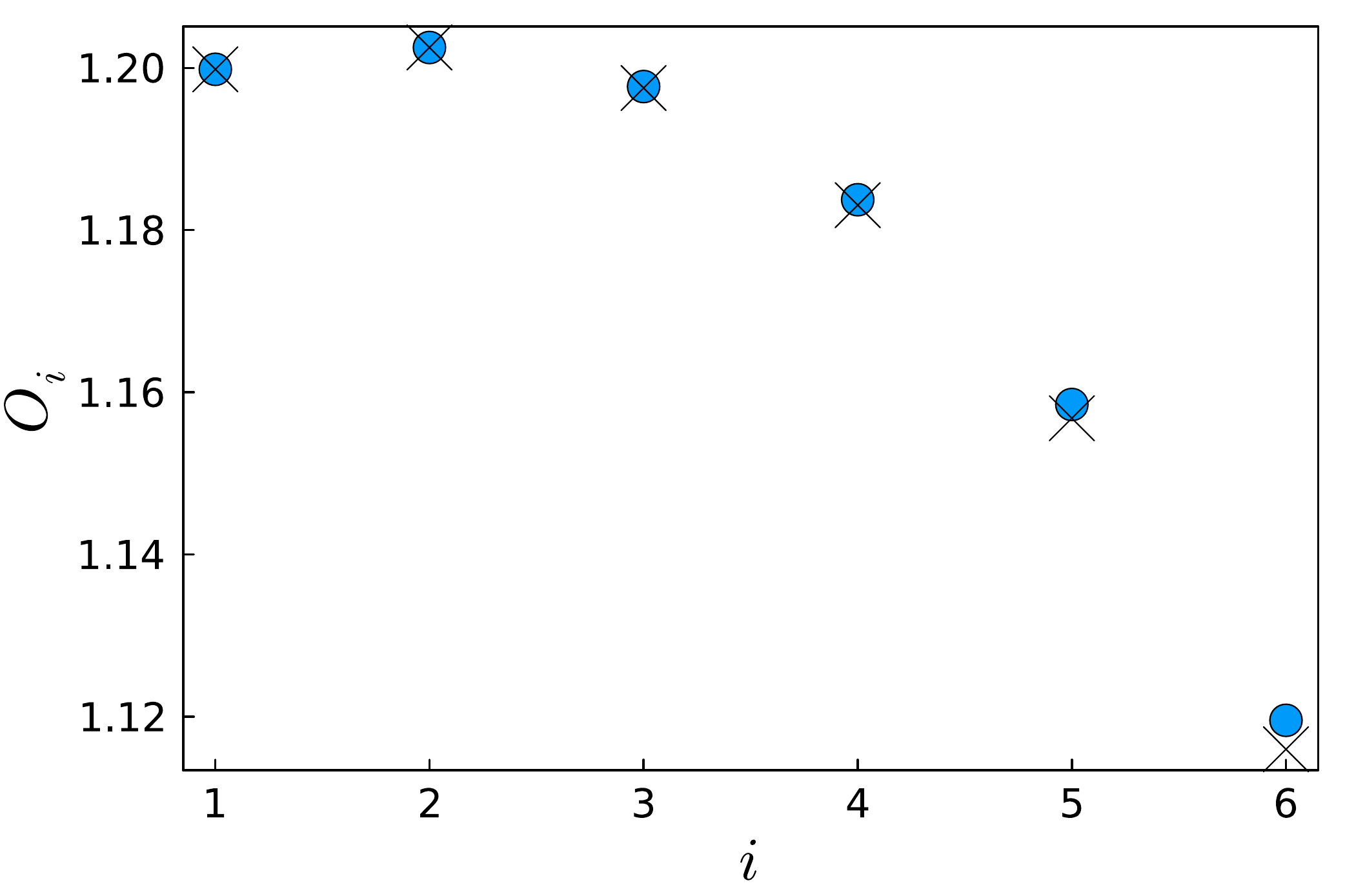}
    \includegraphics[width=7cm,clip]{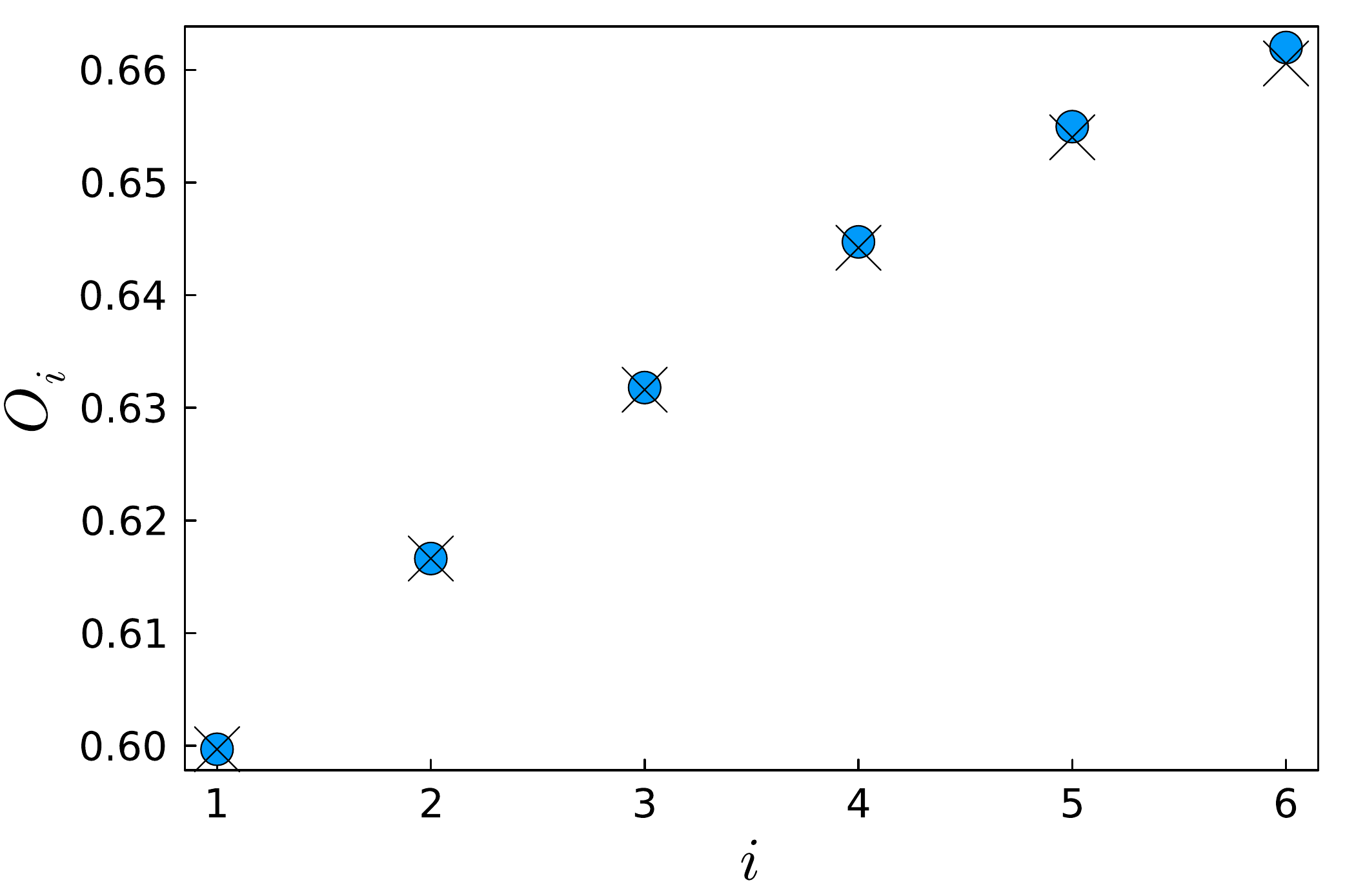}
    \includegraphics[width=7cm,clip]{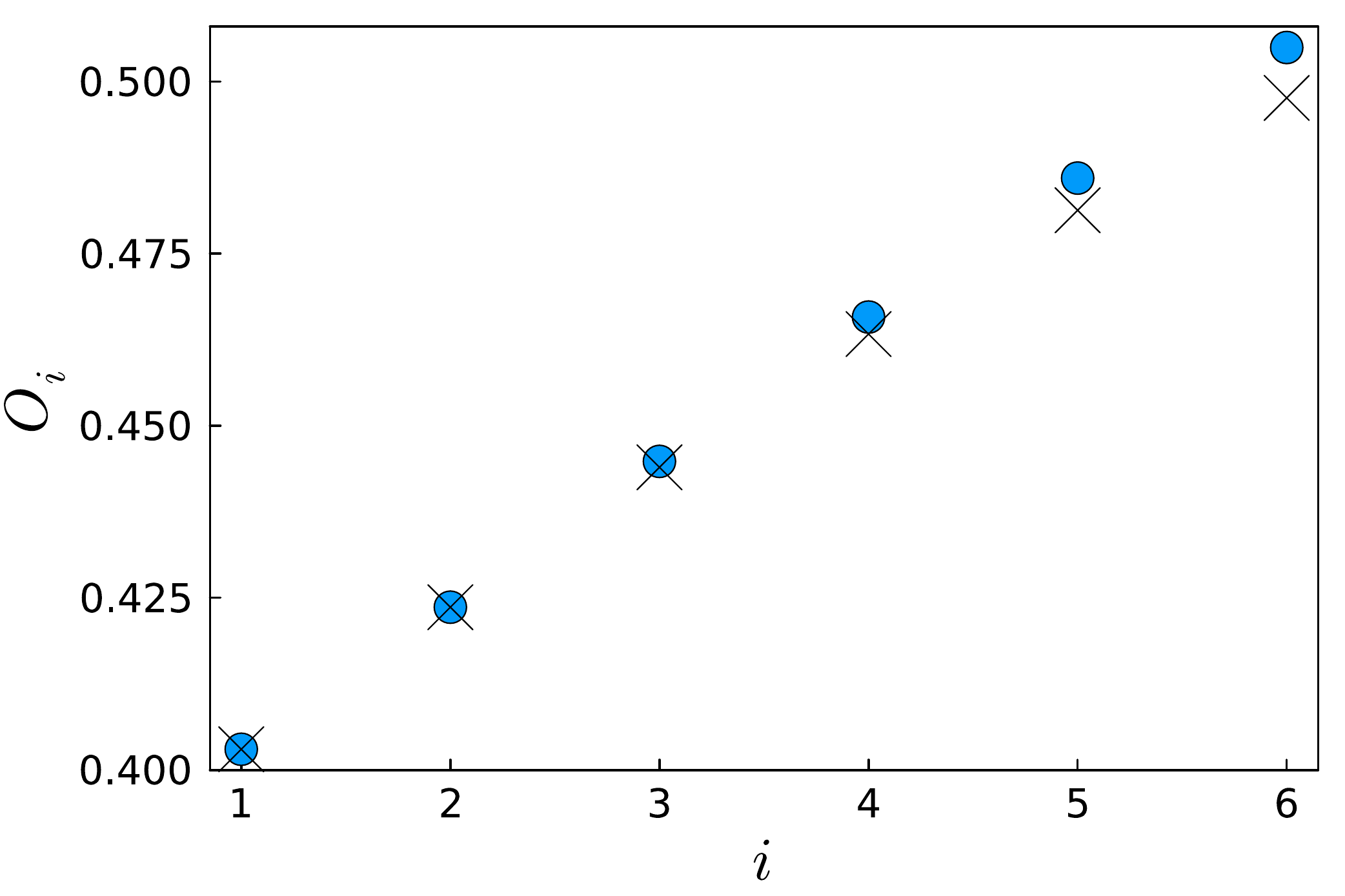}
\end{minipage}
\end{tabular}
    \caption{\rd{(Color online) The overlap centrality (blue circles) $O_i$ and the normalized eigenvector centrality (black crosses) determined by the stationary distribution obtained by the exact diagonalization as a function of agent $i$ for $L=N=6$. 
    The parameters are set as follows; $p=0.1$ (left) and $p=10.0$ (right); $\beta=1$ (top), $\beta=0$ (center), and $\beta=-1$ (bottom), respectively. The eigenvector centrality is normalized so that the two data points of $i=1$ and $i=2$ in the ranking axis are perfectly matched. 
    The overlap centrality gets closer to the eigenvector centrality for $p=0.1$ compared to the case of $p=10.0$.}}
\label{OCandEC}
\end{figure*}


\section{Concluding remarks}\label{CR}
In this paper, we have proposed a stochastic process without both detailed balance and permutation symmetry, which is inspired by the supplanting phenomenon of Japanese macaques. 
\rd{We have introduced a new type of centrality, which we call overlap centrality, to characterize a rank-dependent correlation of agents' positions. Then we have found that the new model of interacting ranked agents exhibits the unexpected linearity \eqref{eq:overlap_centrality_estimate2} with respect to agent's rank $i$ in overlap centrality $O_i$ at small supplanting condition, which could be regarded as a new type of collective phenomenon.
Precisely speaking, when $\mathcal{B}_1\neq\mathcal{B}_2$, the perfect correlation corresponding to the linearity mentioned above appears between $O_i$ and rank $i$ in the regime of weak supplanting limit $p\to+0$.}
Even for small but non-zero $p$, concrete analysis by exact diagonalization shows that $\phi$ is very close to the perfect correlation in the case of the Potts energy if $\beta$ is far from $0$.

\rd{One might ask about the meaning of perfect correlation. So far, we do not have clear answer this question. However, one might have speculation that the perfect correlation implies divergence of characteristic scale in ranking. From this viewpoint, it could be intriguing to quantify such a characteristic scale as future studies.}


Another problem on the singular behaviors of $\phi$ around $\beta = 0$ is to identify the effects which essentially cause those singular behaviors. Compared to the equilibrium Potts model, the model with supplanting process does not have permutation symmetry in terms of agents, and also not have detailed balance. 
In our derivation of perfect correlation, broken permutation symmetry is one of essential parts, but we are not aware of the effects from broken detailed balance. Concerning this point, one can consider an equilibrium model keeping with detailed balance without permutation symmetry by, for example, 
an energy function  $\sum_{i,j}J_{i,j}\delta(x_i,x_j)$, where $J_{i,j}$ is a function of agents $i,j$ such as $i\times j$. 
If one could prove that there does not exist the singularity of $\phi$ for such general equilibrium models, one could presumably expect that 
both permutation symmetry breaking and broken detailed balance are essential for causing the singularity. Such a motivation has been raised for broken rotational symmetry observed in active matter \cite{V-Tasaki}. Indeed, it has been proven for a general class of systems having potentials
dependent on position and velocity that rotational symmetry cannot be broken in equilibrium. This implies that the observed phase transitions associating with broken rotational symmetry in active matters are caused purely by nonequilibrium effects such as broken detailed balance.
We will need somewhat similar ideas.
We remark that the term \textit{permutation symmetry} has been used in this paper in various manners depending on the quantity to which the term is applied. See the list of the various ways of the term in Table \ref{tab:polymorphism}.

\begin{table*}
    \centering
    \begin{tabular}{c|c|cc}
        & Class of Quantity & 
        \multicolumn{2}{c}{
        \begin{tabular}{c}
        Definition for the quantity \\ to be \textit{permutation symmetric}
        \end{tabular}
        }
        \\ \hline
        (a) & function $f(i_1,i_2,\ldots,i_n)$ depending on ranks $i_1,i_2,\ldots,i_n$ 
        & $f(\sigma(i_1),\sigma(i_2),\ldots,\sigma(i_n))=f(i_1,i_2,\ldots,i_n)$ 
        & \multirow{5}{*}{for any $\sigma\in\mathfrak{S}_N$}
        \\ 
        (b) & function $f(\bvec{x})$  depending on agents' configuration $\bvec{x}\in X^N$ & $f(\sigma(\bvec{x}))=f(\bvec{x})$ &  \\
        (c) & state vector $\ket{\psi}\in H_X^{\otimes N}$ & $\PermOp{\sigma}\ket{\psi}=\ket{\psi}$ & \\
        (d) & linear operator $\widehat{A}$ from $H_X^{\otimes N}$ to itself & $\PermOp{\sigma}\widehat{A}\PermOp{\sigma}^{-1}=\widehat{A}$ & \\
    \end{tabular}
    \caption{
    How to use the term \textit{permutation symmetry}  depending on the classes of quantity.
    For example of each class, the case (a) is applied to  \eqref{eq:CompBOp}, (b) to \eqref{sym}, (c) to \eqref{eq:Pcan_sym}, and (d) to \eqref{eq:commuT0}.}
    \label{tab:polymorphism}
\end{table*}

Let us briefly discuss the obtained results in the context of linear response theory. If one defines a susceptibility of $M$ as $\chi(p):={\dif M}/{\dif p}$ in terms of $p$, 
we can obtain 
\begin{align}
    \chi(0)&=  \frac{N-1}{2N}(\mathcal{B}_1+\mathcal{B}_2) + \frac{(N-1)(N-2)}{2N}\mathcal{B}_3 \notag \\
    &\qquad + \partial_p C(\beta, 0) \mathcal{B}_0.
\end{align}
Note that the coefficient $\partial_p C(\beta, 0)$ of $\mathcal{B}_0$ is described as
\begin{align}\label{bfinal}
    -\binom{N}{2} \sum_{\bm{x} \in X^N} \bra{\bm{x}} \BOp{i_0,i_1}\ket{\Pcan(\beta)}
\end{align}
for some $1 \le i_0 < i_1 \le N$. By a similar computation to \eqref{eq:CompBOp}, we can find that \eqref{bfinal} does not depend on $i_0$, $i_1$, and $p$.
So far, it is not obvious for us to 
connect those quantities with the other known quantities, which remains as an open problem.

Whereas our above discussions focus on the case of small $p \ll 1$, 
we move onto the case of any $p$. The result of exact diagonalization implies that perfect correlation with $\phi=1$ would hold for negative $\beta$ sufficiently far from $0$.
This behavior is of interest in that the strong correlation effect originating from hard core repulsion may stabilize perfect correlation with $\phi=1$.
However, 
our strategy of the perturbation with respect to $p$ is unavailable for not small $p\not\ll 1$.
For this reason, it is not clear whether the perfect correlation is derived by use of permutation symmetry
in a similar way to the case of $p\ll 1$.
In order to tackle this problem, the situation with restricted values of $L$ and $N$ is to be considered. As an example, let us take the case satisfying $N=L+1$ with the Potts energy.
In this case, if the repulsion is sufficiently strong, then each site is occupied by at least an agent, and there is a single site occupied by two agents.
Then, one of the two agents occupying the site can hop in accordance with the equilibrium dynamics, while all of the other agents cannot hop effectively.
This leads to reduction of the transition of states and could help us to analyze the overlap centrality of this model.
Note that the above discussion is based on the Potts energy, and is not necessarily applied to the other case with general energy form. 
It remains for future work to perform further numerical calculation for energy functions other than the Potts energy  as well as to explore analytic methods for general $p$.

Let us briefly discuss the possibility of phase transition lines 
in parameter space $(\beta,p)$ for non-zero $p$. 
First, the equilibrium phase transition point $\beta=\beta_c$ with $p=0$ for the case of Potts energy might extend toward non-zero $p$ as 
a nonequilibrium phase transition line $\beta=\beta_c(p)$ where $M$ shows singular jump.
Second, the point $\beta=0$ with $p\to + 0$ where correlation coefficient shows singularity might also extend toward non-zero $p$ 
as another nonequilibrium phase transition line $\beta=\beta_0(p)$ where $\phi$ shows a singular jump.
However, since our analysis is limited to the 
parameter region close to $p=0$, it is necessary to perform large scale numerical simulations or develop another analytical methods in order to capture the true limiting behavior of large system size for non-zero $p$. This remains to be an intriguing future study.

From a mathematical viewpoint, {\it beta decomposition} of the transition matrix could be useful in more general context. This expression gives the exact lowest order of transition matrix in terms of $p$. Nevertheless, the concrete expression of it looks complicated and then it would not be straightforward to use the expression for a given purpose. Further, it should be noted that the minor modification of supplanting process prevents us to obtain the beta decomposition. In this sense, the present version of the supplanting process can be regarded as a specially tractable case. 
It remains as an open question how beta decomposition can be applied to calculation of the other quantities and  whether one can find other tractable cases in this direction.

Let us remark on the direct relation between eigenvector centrality and strength centrality.
Unifying two relations among  centralities discussed in Section \ref{relation_between_centralities}, one can also have another conclusion in the context of network theory.
Consider a complete graph whose edges are weighted by the same value with small fluctuation.
Then the neighbor matrix $\mathcal{R}$ whose entries are weights of edges has a decomposition $\tilde{\mathcal{R}}^{(1)} + \tilde{\mathcal{R}}^{(2)}$ similar to \eqref{eq:decomp_with_fluctuation},
and the graph holds a linear relation similar to \eqref{ov-centrality} between the strength centrality and the eigenvector centrality.

Lastly, we mention the results in this paper relevant to behaviors of members in a group of primate species.
Ranking has been widely known to be one characteristic structure which primates species has when they live as a group \cite{primatebook}.
Indeed, it has been found that ranking affects the spatial location of the members in a group 
\cite{rank_location1,rank_location2,pecentrality1,pecentrality2}
and the distance between the members \cite{rank_distance1,rank_distance2}.
The overlap centrality observed in the model proposed in this paper
might shed light on how one could estimate the ranking structures of such a group 
and its environmental condition.
\\
\begin{acknowledgements}
  The authors thank T.\ Yamaguchi for discussing his field study about Japanese macaques
  and telling us the relevant references to our work.  
\end{acknowledgements}


\onecolumngrid
\appendix


\section{Derivation of two decompositions of transition matrix}\label{MATRIX}

In this section, we would like to explain, in detail, the derivations of two decompositions of transition matrix $\TransOp{}{}$, which we call \textit{supplanting decomposition} and \textit{beta decomposition} as mentioned in Section \ref{sct:Decomp}.
To do this, we give a detailed account of our models in terms of operators.
Hereafter, in order to make this Appendix self-contained,
many definitions in the main text are repeated.

\section*{Notations}

For a set $\bm{S}$, the number of elements is denoted as $\# \bm{S}$.
The Kronecker's delta $\delta(i,j)$ is defined as
\begin{equation}
	\delta(i,j) = \begin{dcases}
		1 & (\text{if $i = j$}) \\
		0 & (\text{if $i \neq j$}). \\
	\end{dcases}
\end{equation}

We consider any vector space appearing in this section as complex vector space.
We denote the set of complex numbers as $\mathbb{C}$, 
and the set of real numbers as $\mathbb{R}$.

\sctZero{Describing the model} \label{sct:Model}
In this subsection, as a preliminary, we introduce basic concepts which are necessary to explain the model.

\sctOne{Agents and states} \label{sct:Agents}
We consider $N \ge 2$ agents labeled by $1, 2, \dots, N$.
They have a total ordering (i.e.\ linear dominance) meaning that if $1 \le i < j \le N$, agent $i$ is higher than agent $j$.
The set of all agents is written as 
\begin{equation}
	[N] \coloneqq \{1, 2, \dots, N\}.
\end{equation}

The agents lie in the lattice $X = \Z / L \Z = \{0, 1, \dots, L-1\}$ with $L \ge 3$.
For an agent $i$, the position is written as $x_i \in X$.
Thus, the configuration space of agents is described by
\begin{equation}
	X^N = \left\{ \bm{x} = (x_1, x_2, \dots, x_N) \mid x_i \in X\right\}.
\end{equation}

\sctOne{Hopping map} \label{sct:Movements}
We define a map $\Shift{i}{\dP} \colon X^N \to X^N$ (resp.\ $\Shift{i}{\dM} \colon X^N \to X^N$) of configurations as
the increment (resp.\ decrement) of position of agent $i$ under periodic boundary condition.
Explicitly, we have
\begin{align}
	\Shift{i}{\dP}(x_1, x_2, \dots, x_i, \dots, x_N) &\coloneqq 
	(x_1, x_2, \dots, x_i + 1, \dots, x_N), \\
	\Shift{i}{\dM}(x_1, x_2, \dots, x_i, \dots, x_N) &\coloneqq 
	(x_1, x_2, \dots, x_i - 1, \dots, x_N).
\end{align}

\sctOne{The permutation-invariant energy function}\label{sct:Energy}
We take an energy function $E(\bm{x}) = E(x_1, x_2, \dots, x_N)$ which is symmetric in the following sense:
\begin{equation}
    E(x_1, x_2, \dots, x_N) = E(x_{\sigma(1)}, x_{\sigma(2)}, \dots, x_{\sigma(N)})
\end{equation}
for any permutation $\sigma \in \mathfrak{S}_N$ of $N$ elements in $[N]$.
The symmetry will be used in 
Appendix \ref{sct:PermOperators} and \ref{sct:Permutation},
and is essential in 
\eqref{eq:PermCanVec} and \eqref{eq:CompBOp}.
For example, in \eqref{eq:PottsEnergy}, 
we take the {\it normalized Potts model}
\begin{align}
	E(\bm{x}) &= -\frac{(L-1)\log(L-1)} {N(L-2)} \sum_{1 \le i, j \le N} \delta(x_i, x_j)
\end{align}
as an energy function.

To simplify the notation, we introduce the $(i, d)$-th difference of the energy function $E$
for $1 \le i \le N$ and $d \in \{+, -\}$:
\begin{align}
	\DiffEnergy{i}{d} (\bm{x}) &\coloneqq E(\Shift{i}{d} \bm{x}) - E(\bm{x}).
\end{align}

\sctOne{Describing the hopping by equilibrium dynamics}\label{sct:DescWOSuppl}
First we describe the hopping by equilibrium dynamics as follows.

\begin{enumerate}[{(i)}]
	\item Determine an agent $1 \le i \le N$ with equal probability $1/N$.
	\item Choose a direction $d = \dP$ or $\dM$ with equal probability $1/2$.
	\item Decide whether the agent $i$ stays or hops to the direction $d$.
	\begin{itemize}
		\item The probability that the agent $i$ hops to direction $d$ is 
		\begin{equation}
			\frac{1}{1 + \exp(\beta \DiffEnergy{i}{d}(\bm{x}))}.
		\end{equation}
		\item 
		The probability that the agent $i$ stays is 
		\begin{equation}
			1 - \frac{1}{1 + \exp(\beta \DiffEnergy{i}{d}(\bm{x}))}= 
			\frac{1}{1 + \exp(-\beta \DiffEnergy{i}{d}(\bm{x}))}.
		\end{equation}
	\end{itemize}
\end{enumerate}

For a graphical explanation, see Fig.\ \ref{fig:Tree0}.

\begin{figure}[h]
  \centering
  \caption{Probability tree of hopping by equilibrium dynamics}
  %
  \includegraphics[clip]{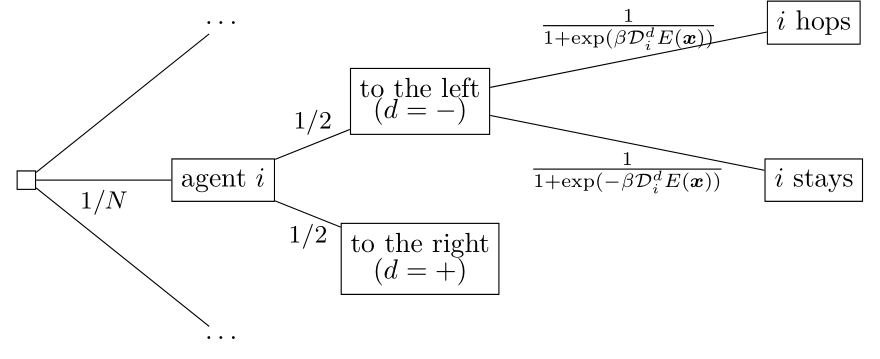}
  
  \label{fig:Tree0}
\end{figure}

\sctOne{Describing the hopping by supplanting process}\label{sct:DescSuppl}
For a given configuration $\bm{x} \in X^N$, an agent $i \in [N]$, and a direction $d \in \{+, -\}$, we define the set $S(\bm{x}, i, d) \subseteq [N]$ as follows: 
\begin{align}
	S(\bm{x}, i, \dPM ) 
	&\coloneqq \left\{ i < j \le N \mid x_j = x_i \pm 1 \right\}. 
\end{align}
In other words, $S(\bm{x}, i, d)$ is defined as a set of agents $j \in [N]$ satisfying the two conditions: (1) $j \in S(\bm{x}, i, d)$ is lower than $i$ (in other words $j >i$), and (2) sits on the site $x_i + 1$ (when $d = +$) or $x_i - 1$ (when $d=-$).

Let us fix the supplanting rate $0 \le p < \infty$. 
After the step (iii) of the first hopping,
we define the {\it supplanting process} as described below.

\begin{enumerate}[{(i)}]
	\setcounter{enumi}{3}
	\item One of the following events occurs in the probabilities described below.
	\begin{itemize}
		\item The agent $j \in S(\bm{x}, i, d)$ hops
		with the probability 
		\begin{equation}
			\frac{p}{1 + p\#S(\bm{x}, i, d)}.
		\end{equation}
		\item In a probability 
		\begin{equation}
			\frac{1}{1 + p\#S(\bm{x}, i, d)},
		\end{equation}
		no one hops.
	\end{itemize}
	\item If a hopping occurs in (iv), choose a direction $d' = \dP$ or $\dM$ of the hopping of the agent $j$ with uniform probability $1/2$.
\end{enumerate}

Note that if $p=0$ or $S(\bm{x}, i, d)$ is empty, then no supplanting occurs.
The following diagram in Fig.\ \ref{fig:Tree} describes the probability tree after the first hopping of the agent $i$.

\begin{figure}[h]
  \centering
  \caption{Probability tree of supplanting process after the first hopping}
  %
  \includegraphics[clip]{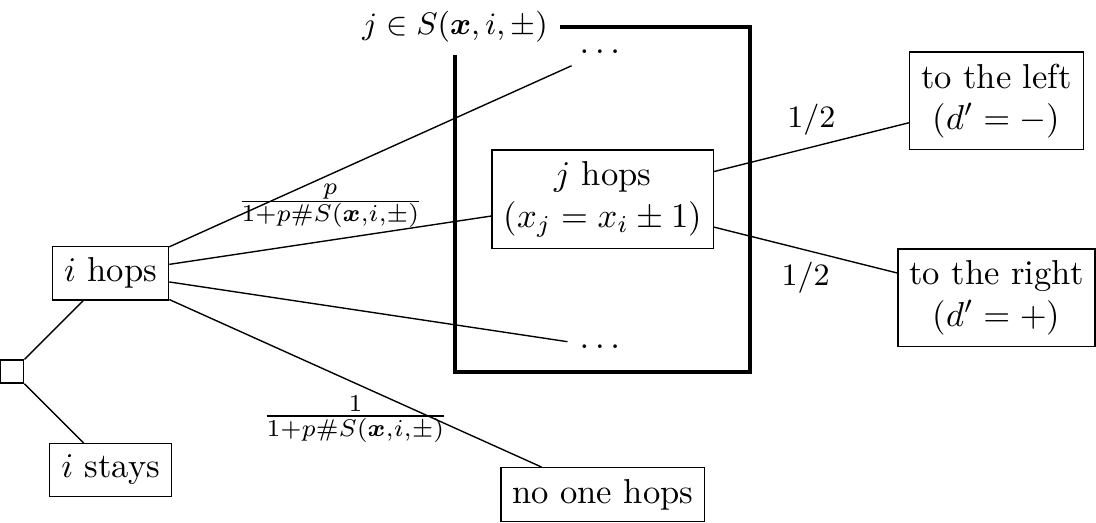}
  \label{fig:Tree}
\end{figure}

\sctZero{Transition matrices}\label{sct:TransMatr}
In this subsection, we write down the transition matrices as linear operators on a state space.

\sctOne{State vector spaces}\label{sct:Vectors}
The state space $H_X$ is the complex vector space with a basis 
\begin{equation}
	|0\rangle, |1\rangle, \dots, |{L-1}\rangle
\end{equation}
corresponding to sites $0, 1, \dots, L-1 \in X$, respectively.


Similarly the multi-state space $H_{X^N}$ is 
the complex vector space with a basis $| \bm{x} \rangle$ corresponding to configurations $\bm{x} \in X^N$.
It is identified to the tensor space $H^{\otimes N}_X$ with a map
\begin{equation}
	H_{X^N} \to H_X^{\otimes N} \coloneqq \underbrace{H_X \otimes H_X \otimes \dots \otimes H_X}_N \quad ;
	\quad |\bm{x}\rangle \mapsto |x_1\rangle \otimes |x_2\rangle \otimes \dots \otimes |x_N\rangle.
\end{equation}
With this identification, we freely use tensor notation to write down operators.

We introduce the inner product $\braket{\cdot | \cdot}$ on $H_X$ such that the basis $(|x\rangle)_{x \in X}$ is orthonormal.  This induces an inner product on $H_{X}^{\otimes N}$, and the basis $(|\bm{x} \rangle)_{\bm{x} \in X^N}$ on $H_{X}^{\otimes N}$ is orthonormal.
We use the same symbol $\langle \cdot | \cdot \rangle$ to write this induced inner product on $H_{X}^{\otimes N}$.

\sctOne{Coefficients of operators}
An operator $\AbstOp \colon H_{X}^{\otimes N} \to H_{X}^{\otimes N}$ is described by its coefficients.
The $(\bm{x}, \bm{y})$-th coefficient of $\AbstOp$ is written by 
$\langle \bm{y} | \AbstOp | \bm{x} \rangle$. 
In other words, we can write
\begin{equation}\label{eq:CoeffDescr}
	\AbstOp |\bm{x} \rangle = \sum_{\bm{y} \in X^N} |\bm{y}\rangle 
	\langle \bm{y} | \AbstOp | \bm{x} \rangle.
\end{equation}

\sctOne{Shift operators}\label{sct:ShiftOperators}
As pieces of transition matrices, we define notations of shift operators.

The shift operators $\ShiftOp{i}{\dP}$ and $\ShiftOp{i}{\dM}$ on $H_{X}^{\otimes N}$ 
for $1 \le i \le N$ are defined as
\begin{align}
	\ShiftOp{i}{\dP}|\bm{x} \rangle &\coloneqq 
	|\Shift{i}{\dP} \bm{x} \rangle \\
	&= |x_1\rangle \otimes |x_2\rangle \dots \otimes |x_i + 1\rangle \otimes \dots \otimes |x_N\rangle, \\
	\ShiftOp{i}{\dM}|\bm{x} \rangle &\coloneqq 
	|\Shift{i}{\dM}\bm{x} \rangle \\
	&= |x_1\rangle \otimes |x_2\rangle \dots \otimes |x_i - 1\rangle \otimes \dots \otimes |x_N\rangle,
\end{align}
for $\bm{x} = (x_1, x_2, \dots,  x_N) \in X^N$.
In terms of coefficients, the followings hold:
\begin{align}
	\langle \bm{y} | \ShiftOp{i}{\dP} | \bm{x} \rangle &=
	\begin{dcases}
		1 & (\text{if $\bm{y} = \Shift{i}{\dP}\bm{x}$}) \\
		0 & (\text{otherwise}),
	\end{dcases} \label{eq:PlusCoeff} \\
	\langle \bm{y} | \ShiftOp{i}{\dM} | \bm{x} \rangle &=
	\begin{dcases}
		1 & (\text{if $\bm{y} = \Shift{i}{\dM}\bm{x}$}) \\
		0 & (\text{otherwise}).
	\end{dcases} \label{eq:MinusCoeff}
\end{align}

\sctOne{Projection operators}\label{sct:ProjOperators}
We denote the identity operator on the state space $H_X$ as $\id_{H}$.
Then we can write the identity operator on the multi-state space $H_X^{\otimes N}$ as $\id_H^{\otimes N}$.

The projection operator $\CheckOp{i}{y}$ on $H_{X}^{\otimes N}$ for $1 \le i \le N$ and $y \in X$ is defined as
\begin{equation}
	\CheckOp{i}{y} |\bm{x} \rangle \coloneqq
	\begin{dcases}
		| \bm{x} \rangle & (\text{if $x_i = y$}) \\
		0 & (\text{otherwise}).
	\end{dcases}
\end{equation}
This projection operator checks whether $x_i = y$ or not; if $x_i = y$ return $|\bm{x}\rangle$, otherwise return the zero vector.
For a configuration $\bm{x} = (x_1, x_2, \dots, x_N) \in X^N$,
we define
\begin{equation}
	\CheckOp{}{\bm{x}} \coloneqq \prod_{1 \le j \le N} \CheckOp{j}{x_j}.
\end{equation}
It satisfies
\begin{equation}
	\CheckOp{}{\bm{y}} |\bm{x}\rangle = \begin{dcases}
		|\bm{x}\rangle & (\text{if $\bm{x} =\bm{y}$}) \\
		0 & (\text{otherwise}).
	\end{dcases}
\end{equation}
and, in terms of coefficients, 
\begin{align}
	\langle \bm{y} | \CheckOp{}{\bm{z}} | \bm{x} \rangle &=
	\begin{dcases}
		1 & (\text{if $\bm{y} = \bm{x} = \bm{z}$}) \\
		0 & (\text{otherwise}).
	\end{dcases} \label{eq:CheckCoeff}
\end{align}
The operator $\CheckOp{}{\bm{z}}$ checks whether $\bm{x} = \bm{z}$.

We define some other projection operators: first, for $x \in X$ and $\emptyset \neq \bm{S} \subseteq [N]$,
\begin{align}
	\CheckOp{\bm{S}}{} &\coloneqq
	\begin{dcases}
	\sum_{x \in X} \prod_{i \in \bm{S}} \CheckOp{i}{x} & (\text{if $\bm{S} \neq \emptyset$}) \\
	\id^{\otimes N}_H & (\text{if $\bm{S} = \emptyset$}).
	\end{dcases}
\end{align}
The operator $\CheckOp{\bm{S}}{}$ checks whether all of the agents in $\bm{S}$ sit on the same site. Explicitly,
\begin{align}
    \CheckOp{\bm{S}}{} |\bm{x} \rangle = 
    \begin{dcases}
		|\bm{x}\rangle & (\text{if $x_i = x_j$ for any $i, j \in \bm{S}$}) \\
		0 & (\text{otherwise}).
	\end{dcases}
\end{align}
If $\bm{S}$ is empty, it checks nothing; it is the identity operator.

Another projection operator $\StrictCheckOp{i; \bm{S}}{}$ for $1 \le i \le N$ and $\bm{S} \subseteq \{i+1, i+2, \dots, N\}$ 
checks two conditions: (1) $x_j = x_i$ if $j \in \bm{S}$, and (2) $x_{j'} \neq x_i$ if $j' \not\in \bm{S}$ and $j'>i$.
Explicitly, it is defined as
\begin{align}\label{eq:DefUpsilon}
    \StrictCheckOp{i; \bm{S}}{} | \bm{x} \rangle 
    &\coloneqq
    \begin{dcases}
        |\bm{x}\rangle & (\text{if 
        $x_j = x_i $ for $j\in \bm{S}$,
        and $x_i \not\in \{x_{j'}\}_{\substack{j' > i \\ j' \not\in \bm{S}}}$}) \\
        0 & (\text{otherwise}).
    \end{dcases}
\end{align}
In terms of $\CheckOp{i}{x}$ and $\CheckOp{\bm{S}}{}$, this operator is written as
\begin{align}
	\StrictCheckOp{i; \bm{S}}{} 
	&= \begin{dcases}
	\sum_{x \in X} \CheckOp{i}{x} \cdot 
 	\prod_{j \in \bm{S}} \CheckOp{j}{x} \cdot
 	\prod_{\substack{j' > i \\ j' \not\in \bm{S}}} (\id_{H}^{\otimes N} - \CheckOp{j'}{x}) & (\text{if $\bm{S} \neq \emptyset$ and $\bm{S} \neq \{i+1, i+2, \dots, N\}$}) \\
 	\sum_{x \in X} \CheckOp{i}{x} \cdot 
 	\prod_{\substack{j' > i}} (\id_{H}^{\otimes N} - \CheckOp{j'}{x}) & (\text{if $\bm{S} = \emptyset$ and $i \le N-1$}) \\
 	\sum_{x \in X} \CheckOp{i}{x} \cdot 
 	\prod_{j \in \bm{S}} \CheckOp{j}{x} & (\text{if $\bm{S} \neq \emptyset$ and $\bm{S} = \{i+1, i+2, \dots, N\}$}) \\
 	\sum_{x \in X} \CheckOp{i}{x} & (\text{if $\bm{S} = \emptyset$ and $i=N$})
	\end{dcases}\\
	&= \sum_{\substack{\bm{S} \subseteq \bm{S}' \subseteq \{i+1, \dots, N\}}}
	(-1)^{\# \bm{S}' - \# \bm{S}} \CheckOp{\bm{S}'}{}.
\end{align}

A relation between these operators is
\begin{align}
	\CheckOp{\{i\} \cup \bm{S}}{} &= \sum_{\substack{\bm{S} \subseteq \bm{S}' \subseteq \{i+1, \dots, N\}}} 
	\StrictCheckOp{i; \bm{S}}{}. \label{eq:SummationOfTildeDelta}
\end{align}

\sctOne{Permutation operators} \label{sct:PermOperators}
For a permutation $\sigma \in \mathfrak{S}_N$ of $[N] = \{1, 2, \dots, N\}$,
we define
\begin{equation}\label{eq:PermutationCoordinates}
    \sigma(\bm{x}) \coloneqq (x_{\sigma(1)}, x_{\sigma(2)}, \dots, x_{\sigma(n)})
\end{equation}
for $\bm{x} = (x_1, x_2, \dots, x_n) \in X^N$.
Let $\PermOp{\sigma}$ be the corresponding matrix to $\sigma^{-1}$.
Explicitly,
\begin{align}
	\PermOp{\sigma} |(x_1, x_2, \dots, x_n)\rangle &=
	|\sigma^{-1}(x_1, x_2, \dots, x_n)\rangle \\
	&=
	|(x_{\sigma^{-1}(1)}, x_{\sigma^{-1}(2)}, \dots, x_{\sigma^{-1}(N)})\rangle.
\end{align}
For an operator $\AbstOp$ on $H_X^{\otimes N}$,
the daggered symbol $\AbstOp^\dag$ denotes the Hermitian conjugate of $\AbstOp$. Then we have 
$\PermOp{\sigma}^\dagger = \PermOp{\sigma}^{-1} = \PermOp{\sigma^{-1}}$.
We also have some commutation relations as follows:
\begin{align}
	\ShiftOp{j}{\dPM} \PermOp{\sigma} &=
	\PermOp{\sigma}\ShiftOp{\sigma(j)}{\dPM}, \label{eq:PermShiftCommutation}\\
    \CheckOp{j}{x} \PermOp{\sigma} &=
	\PermOp{\sigma}\CheckOp{\sigma(j)}{x}, \\
	\CheckOp{\bm{S}}{} \PermOp{\sigma} &=
	\PermOp{\sigma} \CheckOp{\sigma(\bm{S})}{}.
\end{align}
We remark that $\StrictCheckOp{i; \bm{S}}{}$ has a factor $\prod_{\substack{j > i \\ j \not\in \bm{S}}} (\id_H^{\otimes N} - \CheckOp{x}{j}) $.
Since this factor involves an ordering $j > i$, 
it does not satisfy simple relation in terms of permutation operators.
A key procedure in our calculation is to rewrite an operator $\StrictCheckOp{i; \bm{S}}{}$ 
in terms of $\CheckOp{\{i\} \cup \bm{S}}{}$. This enables us to use the commutation relations in terms of permutations.

Besides, since the energy function depends only on the numbers of agents on each site,
the energy function does not depend on the labeling of the agents. In other terms,
\begin{align}
	E(\Shift{\sigma(i)}{d} 
	(x_{\sigma^{-1}(1)}, x_{\sigma^{-1}(2)}, \dots, x_{\sigma^{-1}(N)}) )
	%
	%
	&= E(\Shift{i}{d} (x_1,x_2, \dots, x_N) ). \label{eq:EnergyPermutation}
\end{align}

\sctOne{Other commutation relations of operators} \label{sct:Relations}
The shift operators (resp.\ the projection operators) are commutative at each other,
while a shift operator and a projection operator does not necessarily commute. Explicitly, it holds that
\begin{align}
	\CheckOp{i}{x}\ShiftOp{j}{\dP} &= 
	\begin{dcases}
		\ShiftOp{j}{\dP}\CheckOp{i}{x} & (\text{if $i \neq j$}) \\
		\ShiftOp{j}{\dP}\CheckOp{i}{x-1} & (\text{if $i = j$}),
	\end{dcases} \\
	\CheckOp{i}{x}\ShiftOp{j}{\dM} &= 
	\begin{dcases}
		\ShiftOp{j}{\dM}\CheckOp{i}{x} & (\text{if $i \neq j$}) \\
		\ShiftOp{j}{\dM}\CheckOp{i}{x+1} & (\text{if $i = j$}).
	\end{dcases}
\end{align}

\sctOne{Transition matrix for the first hopping}\label{sct:TransFirst}
According to the description in Appendix \ref{sct:DescWOSuppl},
the transition matrix $\TransOp{0}{}$ for the first hopping can be described as
\begin{equation}
	\TransOp{0}{} \coloneqq 
	\sum_{1 \le i \le N} \frac{1}{N} \sum_{d = \dPM} \frac{1}{2} \sum_{\bm{x} \in X^N}
	\Big(
		\frac{\ShiftOp{i}{d}}{1 + \exp(\beta \DiffEnergy{i}{d}(\bm{x}))} +
		\frac{\id_H^{\otimes N}}{1 + \exp(-\beta \DiffEnergy{i}{d}(\bm{x}))}
	\Big)\CheckOp{}{\bm{x}}.
\end{equation}
For the latter use, we introduce some symbols.  First, 
for $\bm{x} \in X^N$, $1 \le i \le N$, and $d \in \{+, -\}$,
we write the coefficient of $\TransOp{0}{}$ as
\begin{equation}
	c_\beta(\bm{x}, i, d) \coloneqq
	\frac{1}{2N(1 + \exp(\beta \DiffEnergy{i}{d}(\bm{x})))}.
\end{equation}
Then we have
\begin{equation}
	\frac{1}{2N} - c_\beta(\bm{x}, i, d) 
	= \frac{1}{2N(1 + \exp(-\beta \DiffEnergy{i}{d}(\bm{x})))}.
\end{equation}
Since $\exp(x) > 0$ for any real $x \in \mathbb{R}$, we can estimate
\begin{equation}\label{eq:EstimCBeta}
    0 < c_\beta(\bm{x}, i, d) \le \frac{1}{2N}
\end{equation}
for any $\beta, E, \bm{x}, i, d, N$, and $L$.
We define a set of operators: for $d \in \{+, -\}$,
\begin{align}
	\TransOp{0, \text{move}}{i, d} &\coloneqq
	\sum_{\bm{x} \in X^N}
	c_\beta(\bm{x}, i, d) \ShiftOp{i}{d} \CheckOp{}{\bm{x}}, \\
	\TransOp{0, \text{stay}}{i, d} &\coloneqq
	\sum_{\bm{x} \in X^N}
	\left( \frac{1}{2N} - c_\beta(\bm{x}, i, d)  \right) \CheckOp{}{\bm{x}}, \\
	\TransOp{0, \text{move}}{i} &\coloneqq 
	\TransOp{0, \text{move}}{i, \dP} + \TransOp{0, \text{move}}{i, \dM}, \\
	\TransOp{0, \text{stay}}{i} &\coloneqq 
	\TransOp{0, \text{stay}}{i, \dP} + \TransOp{0, \text{stay}}{i, \dM}.
\end{align}
Then we have
\begin{align}
	\TransOp{0}{} &= \sum_{1 \le i \le N} (\TransOp{0, \text{move}}{i} + 
	\TransOp{0, \text{stay}}{i}) \label{eq:T0Last1} \\
	&= \sum_{\substack{1 \le i \le N \\ d = \dPM}} (\TransOp{0, \text{move}}{i, d} + 
	\TransOp{0, \text{stay}}{i, d}). \label{eq:T0Last2}
\end{align}
Note that, using relations in Appendix \ref{sct:PermOperators}, we have
\begin{align}
	\PermOp{\sigma}^\dag \TransOp{0, \text{move}}{i, d} \PermOp{\sigma} 
	&= \TransOp{0, \text{move}}{\sigma(i), d}, 
	\label{eq:T0Permutation} \\
	\PermOp{\sigma}^\dag \TransOp{0, \text{stay}}{i, d} \PermOp{\sigma} 
	&= \TransOp{0, \text{stay}}{\sigma(i), d}, \\
	\PermOp{\sigma}^\dag \TransOp{0}{} \PermOp{\sigma} 
	&= \TransOp{0}{}.
	\label{eq:TotalT0Permutation}
\end{align}

\sctOne{Transition matrix for supplanting process}\label{sct:TransSuppl}
According to Appendix \ref{sct:DescSuppl},
in order to describe the transition matrix $\TransOp{}{}$ for the supplanting process,
it suffices to replace $\ShiftOp{i}{d} \CheckOp{}{\bm{x}}$ appearing in $\TransOp{{0, \text{move}}}{i}$ by
\begin{align}
	&\sum_{j \in S(\bm{x}, i, d)} \frac{1}{\# S(\bm{x}, i, d)} 
	\sum_{d' = \dPM} \frac{1}{2}
	\Big(
		\frac{p \# S(\bm{x}, i, d)}{1 + p \#S(\bm{x}, i, d)} \ShiftOp{j}{d'}
		+ \frac{1}{1 + p \#S(\bm{x}, i, d)} \id_H^{\otimes N}
	\Big) \ShiftOp{i}{d} \CheckOp{}{\bm{x}} \\
	= &\sum_{\substack{j \in S(\bm{x}, i, d) \\ d' = \dPM}} 
	\frac{1}{2\# S(\bm{x}, i, d)} 
	\Big(
		\frac{p \# S(\bm{x}, i, d)}{1 + p \#S(\bm{x}, i, d)} \ShiftOp{j}{d'}
		+ \frac{1}{1 + p \#S(\bm{x}, i, d)} \id_H^{\otimes N}
	\Big) \ShiftOp{i}{d} \CheckOp{}{\bm{x}} \\
	= &\Bigg( \ShiftOp{i}{d} + \sum_{\substack{j \in S(\bm{x}, i, d) \\ d' = \dPM}} 
	\frac{p}{2(1 + p \#S(\bm{x}, i, d))} (\ShiftOp{j}{d'} - \id^{\otimes N}_H) 
	\ShiftOp{i}{d} \Bigg) \CheckOp{}{\bm{x}}.
\end{align}
Thus, by \eqref{eq:T0Last2}, the transition matrix $\TransOp{}{}$ is written as
\begin{align}
	\TransOp{}{} &= 
	\sum_{\substack{1 \le i \le N \\ d = \dPM}} \sum_{\bm{x} \in X^N} 
	\Big[ c_\beta(\bm{x}, i, d)
	\Big( 
	\ShiftOp{i}{d}
	+ \sum_{\substack{j \in S(\bm{x}, i, d) \\ d' = \dPM}} 
	\frac{p}{2(1 + p \#S(\bm{x}, i, d))} (\ShiftOp{j}{d'} - \id^{\otimes N}_H) \ShiftOp{i}{d}
	\Big) \CheckOp{}{\bm{x}} + \TransOp{0, \text{stay}}{i,d} 
	\Big] \\
	&= 
	\TransOp{0}{} + \sum_{\bm{x} \in X^N} 
	\sum_{\substack{1 \le i \le N \\ d = \dPM}} 
	\sum_{\substack{j \in S(\bm{x}, i, d) \\ d' = \dPM}} 
	\frac{p c_\beta(\bm{x}, i, d)}{2(1 + p \#S(\bm{x}, i, d))} 
	(\ShiftOp{j}{d'} - \id^{\otimes N}_H) 
	\ShiftOp{i}{d} \CheckOp{}{\bm{x}}.
	\label{eq:TFinal}
\end{align}

\sctOne{Coefficients of two operators} \label{sct:CoeffSupplMoves}
From the descriptions of operators $\ShiftOp{i}{d}, \CheckOp{}{\bm{x}}$
in \eqref{eq:PlusCoeff}, \eqref{eq:MinusCoeff}, and \eqref{eq:CheckCoeff}, we obtain
\begin{align}
	\langle \bm{y} | \ShiftOp{i}{d} \CheckOp{}{\bm{z}} | \bm{x} \rangle &=
	\begin{dcases}
		1 & (\text{if $\bm{x} = \bm{z}$ and $\bm{y} = \Shift{i}{d} \bm{z}$}) \\
		0 & (\text{otherwise}),
	\end{dcases} \label{eq:CoeffOfCompositions1} \\
	\langle \bm{y} | \ShiftOp{j}{d'} \ShiftOp{i}{d} \CheckOp{}{\bm{z}} | \bm{x} \rangle &=
	\begin{dcases}
		1 & (\text{if $\bm{x} = \bm{z}$ and $\bm{y} = \Shift{j}{d'} \Shift{i}{d} \bm{z}$}) \\
		0 & (\text{otherwise}).
	\end{dcases} \label{eq:CoeffOfCompositions2}
\end{align}
Using this and \eqref{eq:T0Last2}, the coefficients of $\TransOp{0}{}$ is
\begin{align}
	\langle \bm{y} | \TransOp{0}{} | \bm{x} \rangle &=
	\begin{dcases}
		c_\beta(\bm{x}, i, d) & (\text{if $\bm{y} = \Shift{i}{d}\bm{x}$}) \\
		1-\sum_{\substack{1 \le i \le N \\ d =\dPM}}c_\beta(\bm{x}, i, d) & (\text{if $\bm{y} = \bm{x}$}) \\
		0 & (\text{otherwise}).
	\end{dcases}
\end{align}
Similarly by \eqref{eq:TFinal}, the coefficients of $\TransOp{}{}$ is
\begin{align}
	\langle \bm{y} | \TransOp{}{} | \bm{x} \rangle &=
	\begin{dcases}
		\frac{pc_\beta(\bm{x}, i, d)}{2(1+np)}
		& (\text{if $\bm{y} = \Shift{j}{d'} \Shift{i}{d} \bm{x}$ and $j \in S(\bm{x}, i, d)$}) \\
		\frac{2pc_\beta(\bm{x}, i, d)}{2(1+np)} 
		& (\text{if $\bm{y} = \Shift{i}{d} \bm{x}$}) \\
		1-\sum_{\substack{1 \le i \le N \\ d =\dPM}}c_\beta(\bm{x}, i, d) & (\text{if $\bm{y} = \bm{x}$}) \\
		0 & (\text{otherwise}).
	\end{dcases}
	\label{eq:CoeffOfTOp}
\end{align}

\sctOne{The broken permutation symmetry of transition matrix} \label{sct:BrokenSymmetry}
In this subsection, we explain an example of the broken permutation symmetry of transition matrix $\TransOp{}{}$ by using commutative relations.
Recall that, for an operator $\AbstOp$ on $H_X^{\otimes N}$,
the daggered symbol $\AbstOp^\dag$ denotes the Hermitian conjugate of $\AbstOp$.

By \eqref{eq:CoeffOfTOp}, for $i \in [N]$ and $j \in S(\bm{x}, i, d)$, we have
\begin{equation}
    \langle \Shift{j}{d'} \Shift{i}{d} \bm{x} | \TransOp{}{} | \bm{x} \rangle = \frac{pc_\beta(\bm{x}, i, d)}{2(1+ p \#S(\bm{x}, i, d))}.
\end{equation}
On the other hand, for a transposition $\sigma = (i \; j) \in \mathfrak{S}_N$, one has
\begin{align}
    \langle \Shift{j}{d'} \Shift{i}{d} \bm{x} | \PermOp{\sigma}^\dag \TransOp{}{} \PermOp{\sigma} | \bm{x} \rangle &= 
    \langle \bm{x} | \ShiftOp{i}{d, \dag} \ShiftOp{j}{d', \dag} \PermOp{\sigma}^\dag \TransOp{}{} \PermOp{\sigma} | \bm{x} \rangle
    \\
    &= \langle \bm{x} | (\PermOp{\sigma} \ShiftOp{j}{d'}\ShiftOp{i}{d})^\dag \TransOp{}{} \PermOp{\sigma} | \bm{x} \rangle.
\end{align} 
Using the commutative relation \eqref{eq:PermShiftCommutation}, one can see
\begin{align}
    \langle \bm{x} | (\PermOp{\sigma} \ShiftOp{j}{d'}\ShiftOp{i}{d})^\dag \TransOp{}{} \PermOp{\sigma} | \bm{x} \rangle 
    &= \langle \bm{x} | ( \ShiftOp{\sigma(j)}{d'}\ShiftOp{\sigma(i)}{d} \PermOp{\sigma})^\dag \TransOp{}{} \PermOp{\sigma} | \bm{x} \rangle \\
    &= \langle \bm{x} | ( \ShiftOp{i}{d'}\ShiftOp{j}{d} \PermOp{\sigma})^\dag \TransOp{}{} \PermOp{\sigma} | \bm{x} \rangle \\
    &= \langle \bm{x} | \PermOp{\sigma}^\dag \ShiftOp{j}{d, \dag} \ShiftOp{i}{d', \dag} \TransOp{}{} \PermOp{\sigma} | \bm{x} \rangle.
\end{align} 
Proceeding with the transformation using \eqref{eq:PermutationCoordinates}, it follows that
\begin{align}
    \langle \bm{x} | \PermOp{\sigma}^\dag \ShiftOp{j}{d, \dag} \ShiftOp{i}{d', \dag} \TransOp{}{} \PermOp{\sigma} | \bm{x} \rangle
    &= \langle \sigma^{-1}(\bm{x}) | \ShiftOp{j}{d, \dag} \ShiftOp{i}{d', \dag} \TransOp{}{} | \sigma^{-1}(\bm{x}) \rangle \\
    &= \langle \Shift{i}{d'} \Shift{j}{d} \sigma^{-1}(\bm{x}) |
    \TransOp{}{} | \sigma^{-1}(\bm{x}) \rangle.
\end{align}
Since $j \in S(\bm{x}, i, d)$, we have $j > i$,
hence $i \not\in S(\bm{x}, j, d)$.
With \eqref{eq:CoeffOfTOp}, we obtain
\begin{equation}
    \langle \Shift{i}{d'} \Shift{j}{d} \sigma^{-1}(\bm{x}) |
    \TransOp{}{} | \sigma^{-1}(\bm{x}) \rangle = 0.
\end{equation}
Thus, since $c_{\beta}(\bm{x}, i, d) \neq 0$, 
we obtain the broken permutation symmetry of $\TransOp{}{}$ with respect to the permutation of components: 
\begin{equation}\label{eq:BrokenSymmetryOfTOp}
    \langle \Shift{j}{d'} \Shift{i}{d} \bm{x} | \PermOp{\sigma}^\dag \TransOp{}{} \PermOp{\sigma} | \bm{x} \rangle \neq
    \langle \Shift{j}{d'} \Shift{i}{d} \bm{x} | \TransOp{}{} | \bm{x} \rangle.
\end{equation}
This shows that, for any energy function $E(\bm{x})$ as in Appendix \ref{sct:Energy}
and any transposition $\sigma = (i \; j) \in \mathfrak{S}_N$, we find that \begin{equation}\label{eq:NotCommutesTransOp}
\PermOp{\sigma}^\dag \TransOp{}{} \PermOp{\sigma} \neq \TransOp{}{}.
\end{equation}

In fact, for any non-trivial permutation $\sigma \in \mathfrak{S}_N$, we can show \eqref{eq:NotCommutesTransOp}.
Explicitly, for a non-trivial permutation $\sigma \in \mathfrak{S}_N$, we can find the following four data $(i,j, \bm{x}, d)$ satisfying the following two conditions: (1) $i < j$ with $\sigma(i) > \sigma(j)$ and (2) $\bm{x} \in X^N$ and $d = +$ or $-$ such that $j \in S(\bm{x}, i, d)$.
Then, in parallel, the above argument works to show \eqref{eq:BrokenSymmetryOfTOp} and thus \eqref{eq:NotCommutesTransOp} for the permutation $\sigma$.

\sctZero{The supplanting decompotision}\label{sct:SDecomp}
From here to the next section, we introduce two decompositions of the operator $\TransOp{}{}$.
The idea of our first decomposition of $\TransOp{}{}$ is
to split the sums by the number $\# S(\bm{x}, i, d)$.

\sctOne{The $n$-th term} \label{sct:SthTerm}
First we note that a coefficient in \eqref{eq:TFinal},
\begin{equation}
	\frac{pc_\beta(\bm{x}, i, d)}{2(1 + p \#S(\bm{x}, i, d))},
\end{equation}
does not depend on $j \in S(\bm{x}, i, d)$ and $d' \in \{+, -\}$.
Reminding of \eqref{eq:TFinal}, equivalently,
\begin{equation}
	\TransOp{}{} - \TransOp{0}{} = 
	\sum_{\bm{x} \in X^N} \sum_{\substack{1 \le i \le N \\ d =\dPM}} 
	\sum_{\substack{j \in S(\bm{x}, i, d) \\ d' = \dPM}} 
	\frac{p c_\beta(\bm{x}, i, d)}{2(1 + p \#S(\bm{x}, i, d))} 
	(\ShiftOp{j}{d'} - \id^{\otimes N}_H)\ShiftOp{i}{d} \CheckOp{}{\bm{x}},
\end{equation}
we define the operators
\begin{equation}\label{eq:DefnthTerm}
	\TransOp{n}{} \coloneqq \sum_{\substack{\bm{x}, i, d}} 
	\delta(\# S(\bm{x}, i, d), n)
	\sum_{\substack{j \in S(\bm{x}, i, d) \\ d' = \dPM}} 
	\frac{p c_\beta(\bm{x}, i, d)}{2(1 + np)} 
	(\ShiftOp{j}{d'} - \id^{\otimes N}_H)\ShiftOp{i}{d} \CheckOp{}{\bm{x}},
\end{equation}
where $1 \le n \le N-1$. Then we obtain \begin{equation}\label{eq:SDecomp}
	\TransOp{}{} = \TransOp{0}{} + \TransOp{1}{} + \TransOp{2}{} + \dots + \TransOp{N-1}{}.
\end{equation}

\sctOne{Coefficients of $n$-th terms}\label{sct:nthCoeff}
Next we describe coefficients of $\TransOp{n}{}$ as in \eqref{eq:CoeffDescr}.
Using \eqref{eq:CoeffOfCompositions1} and \eqref{eq:CoeffOfCompositions2}, we have
\begin{align}
	\langle \bm{y} | \TransOp{n}{} | \bm{x} \rangle &=
	\begin{dcases}
		\frac{p c_\beta(\bm{x}, i, d)}{2(1 + np)} 
		& (\text{if $\bm{y} = \Shift{j}{d'}\Shift{i}{d} \bm{x}$ 
		and $\#S(\bm{x}, i, d) = n$}) \\
		\frac{-2np c_\beta(\bm{x}, i, d)}{2(1 + np)} 
		& (\text{if $\bm{y} = \Shift{i}{d} \bm{x}$ 
		and $\#S(\bm{x}, i, d) = n$}) \\
		0 & (\text{otherwise}).
	\end{dcases} \label{eq:nthCoeff}
\end{align}
By the estimate \eqref{eq:EstimCBeta}, using $n < N$ and $1+np \ge 1$, we see that
\begin{equation}
    0 \le |\langle \bm{y} | \TransOp{n}{} | \bm{x} \rangle|
    \le \frac{np}{2N(1+np)} < \frac{p}{2}.
\end{equation}
This shows that the nonzero coefficients of $\TransOp{n}{}$ are estimated as $\mathcal{O}(p)$ as $p \to +0$, independent of $n$.
See Table \ref{table:SDecompCoeff} for the coefficients 
$\langle \bm{y} | \TransOp{n}{} | \bm{x} \rangle$
if $\bm{y} = \Shift{j}{d'}\Shift{i}{d} \bm{x}$.

\begin{table}[h]
\renewcommand{\arraystretch}{1.5}
\caption{A part of $(\bm{x}, \bm{y})$-th coefficients of $\TransOp{}{} - \TransOp{0}{}$ 
and $\TransOp{n}{}$ with $\bm{y} = \Shift{j}{d'}\Shift{i}{d} \bm{x}$.}
\centering
\begin{tabular}{c||c||c|c|c|c|c|c}
	$\#S(\bm{x}, i, d)$ & $\TransOp{}{} - \TransOp{0}{}$ 
	& $\TransOp{1}{}$ & $\TransOp{2}{}$ & $\TransOp{3}{}$ & $\TransOp{4}{}$ 
	& $\cdots$ & $\TransOp{n}{}$ \\
	\hline
	$1$ & $\frac{pc_\beta}{2(1+p)}$ & $\frac{pc_\beta}{2(1+p)}$ 
	& $0$ & $0$ & $0$ & $\cdots$ & $0$ \\
	\hline
	$2$ & $\frac{pc_\beta}{2(1+2p)}$ & $0$ 
	& $\frac{pc_\beta}{2(1+2p)}$ & $0$ & $0$ & $\cdots$ & $0$ \\
	\hline
	$3$ & $\frac{pc_\beta}{2(1+3p)}$ & $0$ & $0$ 
	& $\frac{pc_\beta}{2(1+3p)}$ & $0$ & $\cdots$ & $0$ \\
	\hline
	$4$ & $\frac{pc_\beta}{2(1+4p)}$ & $0$ & $0$ 
	& $0$ & $\frac{pc_\beta}{2(1+4p)}$ & $\cdots$ & $0$ \\
	\hline
	$n$ & $\frac{pc_\beta}{2(1+np)}$ & $0$ & $0$ 
	& $0$ & $0$ & $\cdots$ & $\frac{pc_\beta}{2(1+np)}$
\end{tabular}
\label{table:SDecompCoeff}
\end{table}

\sctOne{Another description of $n$-th terms}\label{sct:AnotherDescr}
Here we give another description {\eqref{eq:AnotherTn}} of $n$-th terms.
We change the ordering of summations in \eqref{eq:DefnthTerm} as follows:
we first choose $1 \le i_0 \le N$,
an $n$-elements set $\bm{S} = \{i_1 < \dots < i_{n}\} \subset [N]$
with $i_0 <i_1$,
and then configurations $\bm{x} = (x_1, x_2, \dots, x_N)$ so that $S(\bm{x}, i_0, d) = \bm{S}$.
Then explicitly we reach
\begingroup
 \allowdisplaybreaks
\begin{align}
	\TransOp{n}{} &= \sum_{\substack{1 \le i_0 \le N}}
	\sum_{\bm{S} = \{i_0 < i_1 < \dots < i_{n} \le N\}} 
	\sum_{\substack{\bm{x} \\ S(\bm{x}, i_0, d) = \bm{S} \\ d = \dPM}}
	\frac{p c_\beta(\bm{x}, i_0, d)}{2(1 + np)} 
	\sum_{\substack{j \in {S}(\bm{x}, i_0, d) \\ d' = \dPM}} 
	(\ShiftOp{j}{d'} - \id^{\otimes N}_H)\ShiftOp{i_0}{d} \CheckOp{}{\bm{x}} \\
	&= \frac{p}{2(1 + np)} \sum_{\substack{1 \le i_0 \le N}}
	\sum_{\bm{S}} 
	\sum_{\substack{\bm{x} \\ S(\bm{x}, i_0, d) = \bm{S}  \\ d = \dPM}}
	c_\beta(\bm{x}, i_0, d)
	\Bigg[\sum_{\substack{j \in \bm{S} \\ d' = \dPM}} 
	(\ShiftOp{j}{d'} - \id^{\otimes N}_H) \Bigg]
	\ShiftOp{i_0}{d} \CheckOp{}{\bm{x}} \\
	&= \frac{p}{2(1 + np)} \sum_{\substack{1 \le i_0 \le N}} \sum_{\bm{S}}
	\Bigg[\sum_{\substack{j \in \bm{S} \\ d' = \dPM}} 
	(\ShiftOp{j}{d'} - \id^{\otimes N}_H) \Bigg]
	\sum_{\substack{\bm{x} \\ S(\bm{x}, i_0, d) = \bm{S}  \\ d = \dPM}}
	c_\beta(\bm{x}, i_0, d)
	\ShiftOp{i_0}{d} \CheckOp{}{\bm{x}}.
\end{align}\endgroup

Using $\StrictCheckOp{i_0; \bm{S}}{}$ defined in \eqref{eq:DefUpsilon}, we can write
\begin{align}
	\sum_{\substack{\bm{x} \\ S(\bm{x}, i_0, d) = \bm{S}  \\ d = \dPM}}
	c_\beta(\bm{x}, i_0, d)
	\ShiftOp{i_0}{d} \CheckOp{}{\bm{x}} &= 
	\StrictCheckOp{i_0; \bm{S}}{}
	\sum_{d = \dPM} \sum_{\substack{\bm{x}}}
	c_\beta(\bm{x}, i_0, d) \ShiftOp{i_0}{d} \CheckOp{}{\bm{x}} \\
	&= \StrictCheckOp{i_0; \bm{S}}{} \TransOp{0, \text{move}}{{i}_0}.
\end{align}

By putting $\SubOp{\emptyset} = 0$ and  
\begin{align}
	\SubOp{\bm{S}} &\coloneqq \sum_{\substack{j \in \bm{S} \\ d' = \dPM}} 
	(\ShiftOp{j}{d'} - \id^{\otimes N}_H) \\
	&= \sum_{\substack{j \in \bm{S}}} 
	\left( \ShiftOp{j}{\dP} + \ShiftOp{j}{\dM} - 2\id^{\otimes N}_H \right), \label{eq:SubOp}
\end{align}
we can write
\begin{equation} \label{eq:AnotherTn}
	\TransOp{n}{} = \frac{p}{2(1+np)}\sum_{\substack{1 \le i_0 \le N}}
	\sum_{\substack{\bm{S} \subseteq \{i_0 + 1, \dots, N\}}} \delta(\# \bm{S}, n)
	\SubOp{\bm{S}} \StrictCheckOp{i_0; \bm{S}}{} \TransOp{0, \text{move}}{{i}_0}.
\end{equation}

\sctZero{The beta decomposition}\label{sct:BetaDecomp}
In this section, we introduce the second decomposition \eqref{eq:BetaDecomp} of 
$\TransOp{}{}$
which we call the \textit{beta decomposition}.
The coefficients of $m$-th term in the beta decomposition is estimated as $\mathcal{O}(p^{m})$ 
(see \eqref{eq:BetaDecompEstimate}).

\sctOne{Definition of beta terms}\label{sct:BetaTerms}
Here we give a definition of $m$-th term $\BetaTerm{m}$ of beta decomposition
by using $\TransOp{n}{}$ which appear in the supplanting decomposition (see \eqref{eq:DefnthTerm}).
For $1 \le m \le N-1$, we define
\begin{align} \label{eq:DefBetaTerms}
	\BetaTerm{m} &\coloneqq \frac{(-1)^{m-1} B(m, 1+1/p)}{p} 
	\sum_{m \le n \le N-1} \binom{n-1}{m-1} (1+np) \TransOp{n}{}.
\end{align}
Here $B(a,b)$ is the beta function:
\begin{align}
	B(a, b) &\coloneqq \frac{\Gamma(a) \Gamma(b)}{\Gamma(a+b)}\\
	&= \int_{0}^{1} x^{a-1} (1-x)^{b-1} dx. 
\end{align}
{In particular, we have
\begin{align}
    B \left( m, 1+\frac{1}{p} \right) &= 
    \frac{\Gamma(m) \Gamma(1+1/p)}{\Gamma(m+1+1/p)} \\
    &= \frac{(m-1)!}{(1+1/p)(2+1/p) \dots (m+1/p)} \\
    &= \frac{(m-1)!p^{m}}{(1+p)(1+2p) \dots (1+mp)}.
\end{align}}
By these descriptions, one can write
\begin{align}
	\BetaTerm{m} 
	&= \frac{(-1)^{m-1} (m-1)! p^{m-1}}{(1+p)(1+2p) \dots (1+mp)} 
	\sum_{m \le n \le N-1} \binom{n-1}{m-1} (1+np)\TransOp{n}{}. \label{eq:AnotherBetaTerm}
\end{align}

\sctOne{The beta decomposition}\label{sct:PrfBetaDecomp}
We can prove the following decomposition:
\begin{equation} \label{eq:BetaDecomp}
	\TransOp{}{} = \TransOp{0}{} + \sum_{1 \le m \le N-1} \BetaTerm{m}.
\end{equation}
In fact,
\begin{align}
	\sum_{1 \le m \le N-1} \BetaTerm{m} 
	&= \sum_{1 \le m \le N-1} \frac{(-1)^{m-1} B(1+1/p, m)}{p}
	\sum_{m \le n \le N-1} \binom{n-1}{m-1} (1+np) \TransOp{n}{} \\
	&= \sum_{1 \le n \le N-1} \left[ 
	\sum_{1 \le m \le n} (-1)^{m-1} \frac{B(1+1/p, m)}{p} \binom{n-1}{m-1} \right] 
	(1+np) \TransOp{n}{}. 
\end{align}
Then the following lemma is enough to show the desired decomposition \eqref{eq:BetaDecomp}.

\begin{lem}\label{lem:num}
	\begin{equation}
		\sum_{1 \le m \le n} \frac{(-1)^{m-1}B(1+1/p, m)}{p} \binom{n-1}{m-1} = \frac{1}{1 + np}.
	\end{equation}
\end{lem}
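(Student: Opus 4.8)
The plan is to use the Euler integral representation of the beta function, which converts the binomial-weighted sum into a single elementary integral. Since $p > 0$ gives $1/p > -1$, the integral
\[
    B\left(1+\frac1p, m\right) = \int_0^1 x^{1/p}(1-x)^{m-1}\,dx
\]
converges and equals the stated beta value (taking $a = 1+1/p$, $b = m$ in the definition of $B$). Substituting this into the left-hand side and interchanging the finite sum with the integral, I would obtain
\[
    \sum_{1 \le m \le n} \frac{(-1)^{m-1}}{p}\binom{n-1}{m-1}B\left(1+\frac1p, m\right)
    = \frac1p\int_0^1 x^{1/p}\Bigg[\sum_{m=1}^{n}(-1)^{m-1}\binom{n-1}{m-1}(1-x)^{m-1}\Bigg]\,dx.
\]

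The key step is to identify the bracketed polynomial in $x$. Reindexing with $\ell = m-1$ and applying the binomial theorem,
\[
    \sum_{m=1}^{n}(-1)^{m-1}\binom{n-1}{m-1}(1-x)^{m-1}
    = \sum_{\ell=0}^{n-1}\binom{n-1}{\ell}\bigl(-(1-x)\bigr)^{\ell}
    = \bigl(1-(1-x)\bigr)^{n-1} = x^{n-1}.
\]
This collapse is the heart of the argument: the alternating binomial sum folds into a single power of $x$.

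It then remains only to evaluate the elementary integral, giving
\[
    \frac1p\int_0^1 x^{1/p}\,x^{n-1}\,dx
    = \frac1p\int_0^1 x^{1/p+n-1}\,dx
    = \frac1p\cdot\frac{1}{1/p+n}
    = \frac{1}{1+np},
\]
which is exactly the claimed identity. I expect no serious obstacle here; the only points meriting a word of care are confirming $1/p>-1$ so that the Euler integral is valid (immediate from $p>0$) and the index shift $\ell=m-1$ in the binomial expansion. A purely algebraic alternative would substitute the explicit product formula \eqref{betafunction} for $B(m,1+1/p)$ (equal to $B(1+1/p,m)$ by symmetry of the beta function) and verify the identity by induction on $n$ or by a partial-fraction argument in the variable $1/p$; but the integral method sidesteps all such bookkeeping and is the cleanest route.
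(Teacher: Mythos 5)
Your proof is correct, and it takes a genuinely different route from the paper. The paper proves the lemma by forming the exponential generating series $F_0(t) = \sum_{m \ge 0} a_m t^m/m!$ with $a_m = (-1)^m B(1+1/p, m+1)/p$, deriving a first-order differential equation for $F_0$ from the ratio $a_{m+1}/a_m$, transferring it to $F_1(t) = F_0(t)e^t$ (whose coefficients encode the binomial-convolved sums for all $n$ simultaneously), and comparing coefficients with $\sum_{r} \frac{1}{1+(r+1)p}\frac{t^r}{r!}$ --- and even then it only sketches the argument. You instead write $B(1+1/p,m) = \int_0^1 x^{1/p}(1-x)^{m-1}\,dx$ (valid since $p>0$ gives $1/p > -1$), swap the finite sum with the integral, collapse the alternating binomial sum to $x^{n-1}$ by the binomial theorem, and evaluate $\frac1p\int_0^1 x^{1/p+n-1}\,dx = \frac{1}{1+np}$ directly. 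Each step is elementary and fully justified, so your argument is both shorter and more complete than the paper's sketch; the generating-function route has the mild advantage of treating all $n$ at once through a single ODE and of echoing the inclusion--exclusion origin of the beta decomposition, but for the identity as stated your integral method is the cleaner proof. The only bookkeeping point, which you already flag, is the symmetry $B(m,1+1/p) = B(1+1/p,m)$ reconciling the lemma's notation with the Euler integral.
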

\begin{proof}[Sketch of proof] We only sketch the proof. With two generating series of exponential type
\begin{align}
	F_0(t) &\coloneqq \sum_{m \ge 0} a_m \frac{t^m}{m!}, \\
	F_1(t) &\coloneqq F_0(t) e^{t}, \label{eq:Fsum}
\end{align}
where the coefficients of $F_0(t)$ is
\begin{align}
	a_m &\coloneqq \frac{(-1)^{m}B(1+1/p, m+1)}{p} \\
	&= \frac{(-1)^m m! p^{m}}{(1+p)(1+2p)\dots (1+(m+1)p)},
\end{align}
it is enough to prove
\begin{equation} \label{eq:CoeffOfF}
	F_1(t) = \sum_{r \ge 0} \frac{1}{1 + (r+1)p} \frac{t^{r}}{r!}.
\end{equation}
Since $a_0 = F_0(0) = {(1+p)^{-1}}$ and 
\begin{align}
	&(1+(m+2)p) a_{m+1} = (-1)\cdot (m+1) p a_{m} \\
	\iff &(m+1) a_{m+1} + (m+1) a_m = -\left( 1+ \frac{1}{p} \right) a_{m+1}
\end{align}
for $m \ge 0$, we have a differential equation for $F_0(t)$
\begin{align}
	t \left(\frac{dF_0}{dt}(t) + F_0(t) \right) &= -\frac{1}{p} ({(1+p)}F_0(t) - 1).
\end{align}
Thus $F_1(t) = F_0(t) e^{t}$ satisfies
\begin{align}
	t\frac{dF_1}{dt}(t) &= - {\left(1 + \frac{1}{p}\right)} F_1(t) + \frac{1}{p} e^t.
\end{align}
Now the equation \eqref{eq:CoeffOfF} can be proved by comparison of coefficients.
\end{proof}

\sctOne{Coefficients of beta terms}\label{sct:CoeffBetaTerms}
With our description \eqref{eq:nthCoeff} of coefficients of $\TransOp{n}{}$,
we can write down the coefficients of $\BetaTerm{m}$.

For $\bm{y} = \Shift{j}{d'} \Shift{i}{d} \bm{x}$ or
$\Shift{i}{d} \bm{x}$,
the coefficients $\langle \bm{y} | \TransOp{n}{} | \bm{x} \rangle$ of $\TransOp{n}{}$ is
zero when $n \neq \#S(\bm{x}, i, d)$. Hence we have
\begin{align}
	\langle \bm{y} | \BetaTerm{m} | \bm{x} \rangle &= 
	\begin{dcases}
		\binom{n-1}{m-1} \frac{(-1)^{m+1} B(1+1/p, m)}{2} c_\beta(\bm{x}, i, d)
		& (\text{if $\bm{y} = \Shift{j}{d'} \Shift{i}{d} \bm{x}$ 
		and $\#S(\bm{x}, i, d) = n \ge m$}) \\
		-2n \binom{n-1}{m-1} \frac{(-1)^{m+1} B(1+1/p, m)}{2} c_\beta(\bm{x}, i, d)
	    & (\text{if $\bm{y} = \Shift{i}{d} \bm{x}$ 
		and $\#S(\bm{x}, i, d) = n \ge m$}) \\
		0 & (\text{otherwise}).
	\end{dcases}
\end{align}
Using the description
\begin{equation}
	B(m, 1+1/p) =
	\frac{(m-1)! p^{m}}{(1+p)(1+2p) \dots (1+mp)},
\end{equation}
we can write those coefficients in another form:
\begin{align}
	\langle \bm{y} | \BetaTerm{m} | \bm{x} \rangle &=
	\begin{dcases}
		\binom{n-1}{m-1} \frac{(-1)^{m+1} (m-1)! p^{m}}{2(1+p)(1+2p) \dots (1+mp)} 
		c_\beta(\bm{x}, i, d) &
    	(\text{if $\bm{y} = \Shift{j}{d'} \Shift{i}{d} \bm{x}$ 
		and $\#S(\bm{x}, i, d) = n \ge m$}) \\
		-2n \binom{n-1}{m-1} \frac{(-1)^{m+1} (m-1)! p^{m}}{2(1+p)(1+2p) \dots (1+mp)} 
		c_\beta(\bm{x}, i, d) &
		(\text{if $\bm{y} = \Shift{i}{d} \bm{x}$ 
		and $\#S(\bm{x}, i, d) = n \ge m$}) \\
		0 & (\text{otherwise}).
	\end{dcases} \label{eq:CoeffBetaTerms}
\end{align}
This allows us to estimate the coefficients of $\BetaTerm{m}$.
With $p \to 0$ and fixing other parameters $\beta, N, L$, and $E$, we have
\begin{equation}
	\langle \bm{y} | \BetaTerm{m} | \bm{x} \rangle = \mathcal{O}(p^{m}).
\end{equation}
In particular, we have
\begin{equation} \label{eq:BetaDecompEstimate}
	\TransOp{}{} = \TransOp{0}{} 
	+ \BetaTerm{1} + \dots + \BetaTerm{m} + \mathcal{O}(p^{m+1})
\end{equation}
for $1 \le m \le N-1$.
See Table \ref{table:BetaDecompCoeff} for the coefficients 
$\langle \bm{y} | \BetaTerm{m} | \bm{x} \rangle$
if $\bm{y} = \Shift{j}{d'} \Shift{i}{d} \bm{x}$.

\begin{table}[h]
\renewcommand{\arraystretch}{1.8}
\caption{A part of $(\bm{x}, \bm{y})$-th coefficients of $\TransOp{}{} - \TransOp{0}{}$ and 
$\BetaTerm{m}$ with $\bm{y} = \Shift{j}{d'} \Shift{i}{d} \bm{x}$.}
\centering
\begin{tabular}{c||c||c|c|c|c|c|c}
	$\#S(\bm{x}, i, d)$ & $\TransOp{}{} - \TransOp{0}{}$ 
	& $\BetaTerm{1}$ & $\BetaTerm{2}$ & $\BetaTerm{3}$ & $\BetaTerm{4}$ & $\cdots$ & $\BetaTerm{m}$ \\
	\hline
	$1$ & $\frac{pc_\beta}{2(1+p)}$ 
	& $\binom{0}{0}\frac{pc_\beta}{2(1+p)}$ & $0$ & $0$ & $0$ & $\cdots$ & $0$  \\
	\hline
	$2$ & $\frac{pc_\beta}{2(1+2p)}$ 
	& $\binom{1}{0}\frac{pc_\beta}{2(1+p)}$ & $\binom{1}{1} \frac{(-1) 1!p^2c_\beta}{2(1+p)(1+2p)}$ 
	& $0$ & $0$ & $\cdots$ & $0$ \\
	\hline
	$3$ & $\frac{pc_\beta}{2(1+3p)}$ 
	& $\binom{2}{0}\frac{pc_\beta}{2(1+p)}$ & $\binom{2}{1}\frac{(-1)1!p^2c_\beta}{2(1+p)(1+2p)}$ & 
	$\binom{2}{2} \frac{(-1)^22!p^3c_\beta}{2(1+p)\dots(1+3p)}$ & $0$ & $\cdots$ & $0$ \\
	\hline
	$4$ & $\frac{pc_\beta}{2(1+4p)}$ 
	& $\binom{3}{0}\frac{pc_\beta}{2(1+p)}$ & $\binom{3}{1}\frac{(-1)1!p^2c_\beta}{2(1+p)(1+2p)}$ & 
	$\binom{3}{2}\frac{(-1)^2 2! p^3c_\beta}{2(1+p)\dots(1+3p)}$ 
	& $\binom{3}{3}\frac{(-1)^3 3! p^4c_\beta}{2(1+p)\dots (1+4p)}$ & $\cdots$ & $0$ \\
	\hline
	$m$ & $\frac{pc_\beta}{2(1+mp)}$
	& $\binom{m-1}{0}\frac{pc_\beta}{2(1+p)}$ & $\binom{m-1}{1}\frac{(-1) 1!p^2c_\beta}{2(1+p)(1+2p)}$ & 
	$\binom{m-1}{2}\frac{(-1)^2 2! p^3c_\beta}{2(1+p)\dots(1+3p)}$ 
	& $\binom{m-1}{3}\frac{(-1)^3 3! p^4c_\beta}{2(1+p)\dots (1+4p)}$ & $\cdots$ & 
	$\binom{m-1}{m-1}\frac{(-1)^{m-1}p^{m}c_\beta}{2(1+p)\dots (1+mp)}$
\end{tabular}
\label{table:BetaDecompCoeff}
\end{table}

\sctOne{Another decomposition of beta terms}\label{sct:AnotherBetaDecomp}
Substituting \eqref{eq:AnotherTn} into \eqref{eq:AnotherBetaTerm},
we can give a combinatorial decomposition of beta terms: for $m \ge 1$,

\begin{align}
	\BetaTerm{m} &= 
	\sum_{m \le n \le N-1} \frac{(-1)^{m+1} (m-1)! p^{m-1} (1+np) }{(1+p)(1+2p) \dots (1+mp)} 
	\binom{n-1}{m-1} \frac{p}{2(1+np)}\sum_{\substack{1 \le i_0 \le N}}
	\sum_{\substack{\bm{S} \subseteq \{i_0 + 1, \dots, N\}}}
	\delta(\# \bm{S}, n) \SubOp{\bm{S}} \StrictCheckOp{i_0; \bm{S}}{} \TransOp{0, \text{move}}{{i}_0} \\
	&= \frac{(-1)^{m+1} (m-1)! p^{m}}{2(1+p)(1+2p) \dots (1+mp)}
	\sum_{\substack{1 \le i_0 \le N}} 
	\bigg[ \sum_{\substack{\bm{S} \subseteq \{i_0 + 1, \dots, N\} \\ \#\bm{S} \ge m}}
	\binom{\# \bm{S}-1}{m-1} \SubOp{\bm{S}} \StrictCheckOp{i_0; \bm{S}}{} 
	\bigg] \TransOp{0, \text{move}}{{i}_0}. \label{eq:BetaDecompAnother1}
\end{align}
To obtain a more convenient formula of  \eqref{eq:BetaDecompAnother1},
we use the following combinatorial equation: for $\bm{S} \subseteq \{i_0 + 1, \dots, N\}$ with $\#\bm{S} \ge m$,
\begin{equation}
	\binom{\# \bm{S}-1}{m-1} \SubOp{\bm{S}} =
	\sum_{\substack{\bm{S}' \subseteq \bm{S} }} \delta(\#\bm{S}', m) \SubOp{\bm{S}'}.
\end{equation}
Then we can perform the following transformation
of the part surrounded by square brackets in \eqref{eq:BetaDecompAnother1}:
\begingroup
\allowdisplaybreaks
\begin{align}
	&\qquad \sum_{\substack{\bm{S} \subseteq \{i_0 + 1, \dots, N\} \\ \#\bm{S} \ge m}}
	\binom{\# \bm{S}-1}{m-1} \SubOp{\bm{S}} \StrictCheckOp{i_0; \bm{S}}{} \\
	&= \sum_{\substack{\bm{S} \subseteq \{i_0 + 1, \dots, N\} \\ \#\bm{S} \ge m}}
	\sum_{\substack{\bm{S}' \subseteq \bm{S} }} 
	\delta(\#\bm{S}', m) \SubOp{\bm{S}'}
	\StrictCheckOp{i_0; \bm{S}}{} \\
	&= \sum_{\substack{\bm{S}' \subseteq \{i_0 + 1, \dots, N\}}} \sum_{\substack{\bm{S} \subseteq \{i_0 + 1, \dots, N\} \\ \bm{S} \supseteq \bm{S}'}}
	\delta(\#\bm{S}', m)
	\SubOp{\bm{S}'}
	\StrictCheckOp{i_0; \bm{S}}{} \\
	&= \sum_{\substack{\bm{S}' \subseteq \{i_0 + 1, \dots, N\}}} \delta(\#\bm{S}', m) \SubOp{\bm{S}'}
	\sum_{\substack{\bm{S} \subseteq \{i_0 + 1, \dots, N\} \\ \bm{S} \supseteq \bm{S}'}} 
	\StrictCheckOp{i_0; \bm{S}}{} \\
	&= \sum_{\substack{\bm{S}' \subseteq \{i_0 + 1, \dots, N\}}} 
	\delta(\#\bm{S}', m) \SubOp{\bm{S}'}
	\CheckOp{\{i_0\} \cup \bm{S}'}{},
\end{align}
\endgroup
where we used \eqref{eq:SummationOfTildeDelta} in the last equality.
Hence, we obtain
\begin{align}
	\BetaTerm{m} &= \frac{(-1)^{m+1} (m-1)! p^{m}}{2(1+p)(1+2p) \dots (1+mp)} 
	\sum_{\substack{1 \le i_0 \le N}} 
	\bigg[ \sum_{\substack{\bm{S}' \subseteq \{i_0 + 1, \dots, N\}}} 
	\delta(\#\bm{S}', m) \SubOp{\bm{S}'}
	\CheckOp{\{i_0\} \cup \bm{S}'}{} 
	\bigg] \TransOp{0, \text{move}}{{i}_0} \\
	&= \frac{(-1)^{m+1} (m-1)! p^{m}}{2(1+p)(1+2p) \dots (1+mp)} 
	\sum_{\substack{1 \le i_0 < i_1 < \dots < i_m \le N }}
	\SubOp{\{i_1, \dots, i_m\}} \CheckOp{\{i_0, i_1, \dots, i_m\}}{} 
	\TransOp{0, \text{move}}{{i}_0}.
	\label{eq:UmDescr}
\end{align}

For example when $m=1$, we have
\begin{align}
	\BetaTerm{1} &= \frac{p}{2(1+p)} \sum_{1 \le i_0 < i_1 \le N} \SubOp{\{i_1\}} 
	\CheckOp{\{i_0, i_1\}}{} \TransOp{0, \text{move}}{i_0} \\
	&= \frac{p}{2(1+p)} \sum_{1 \le i_0 < i_1 \le N}
	(\ShiftOp{i_1}{\dP} + \ShiftOp{i_1}{\dM} - 2 \id^{\otimes N}_H) 
	\CheckOp{\{i_0, i_1\}}{} \TransOp{0, \text{move}}{i_0}.
\end{align}
When $m = N-1$, we have
\begin{align}
	\BetaTerm{N-1} &= \frac{(-1)^{N} N! p^{N-1}}{2(1+p)(1+2p) \dots (1+(N-1)p)} 
	\SubOp{\{2,3,\dots, N\}} 
	\CheckOp{\{1,2,\dots, N\}}{} \TransOp{0, \text{move}}{1}.
\end{align}
For an agent $i_0$ and a subset $\bm{S} \subseteq [N] \setminus \{i_0\}$,
let us define 
\begin{align}
	\BetaTerm{i_0; \bm{S}} \coloneqq 
	\frac{(-1)^{\#\bm{S}} \#\bm{S}! p^{\#\bm{S}-1}}{2(1+p)(1+2p) \dots (1+(\#\bm{S}-1)p)} 
	\SubOp{\bm{S}} \CheckOp{\{i_0\} \cup \bm{S}}{} \TransOp{0, \text{move}}{i_0}.
\end{align}
Then we can rewrite \eqref{eq:UmDescr} as
\begin{align}
	\BetaTerm{m} = \sum_{1 \le i_0 \le N} 
	\sum_{\substack{\bm{S} \subseteq \{i_0+1, \dots, N\}}}
	\delta(\#\bm{S}, m) 
	\BetaTerm{i_0; \bm{S}}. \label{eq:UmDecomp}
\end{align}
Note that we can define $\BetaTerm{i_0; \bm{S}}$ even if
$\bm{S} \not\subseteq \{i_0 + 1, \dots, N\}$.

\begin{rem}[The origin of the beta decomposition]\label{rem:Addendum}
	With the operators $\BetaTerm{i_0; \bm{S}}$, 
 we can define a variant model of supplanting process.
	For a subset $\bm{S}_0 \subseteq \{1, \dots, N\}$, we define
	\begin{equation}
		\TransOp{\bm{S}_0}{} \coloneqq \sum_{i_0 \in \bm{S}_0}
		\sum_{\substack{\bm{S}' \subseteq \{i_0+1, \dots, N\} \cap \bm{S}_0}} 
		\BetaTerm{i_0; \bm{S}'}. \label{eq:TSDecomp}
	\end{equation}
	This operator $\TransOp{\bm{S}_0}{}$ represents the model 
	where supplanting process only occurs on pairs of agents $i, j$ 
	with $i, j \in \bm{S}_0$.
	
	Conversely, we can define $\BetaTerm{i_0; \bm{S}}$ from these variants $\TransOp{\bm{S}_0}{}$ 
	by inclusion-exclusion principle:
	\begin{equation}
		\BetaTerm{i_0; \bm{S}_0} = \sum_{\substack{\bm{S}' \subsetneq \{i_0+1, \dots, N\} \cap
		 \bm{S}_0}} (-1)^{\#\bm{S}' - \#\bm{S}_0}\TransOp{\{i_0\} \cup \bm{S}'}{}.
	\end{equation}
	The beta decomposition was originally derived from this point of view.
\end{rem}

\sctOne{Commutation relation between beta terms and permutation operators} \label{sct:Permutation}

Though it is not used in the main text, we write down a commutation relation between beta terms and permutation operators.

Using relations in Section \ref{sct:PermOperators}, we have the following forms (the third equation is equivalent to \eqref{eq:T0Permutation}):
{\begin{align}
	\SubOp{\bm{S}} \PermOp{\sigma} &= \PermOp{\sigma}\SubOp{\sigma(\bm{S})}, \\
	\CheckOp{\{i_0\} \cup \bm{S}}{} \PermOp{\sigma} &=
	\PermOp{\sigma}\CheckOp{\{\sigma(i_0)\} \cup \sigma(\bm{S})}{}, \\
	\TransOp{0, \text{move}}{i_0} \PermOp{\sigma} &=
	\PermOp{\sigma}\TransOp{0, \text{move}}{\sigma(i_0)}.
\end{align}}

With these three relations, we obtain that
\begin{equation} \label{eq:PermUm}
	\BetaTerm{i_0 ; \bm{S}} \PermOp{\sigma} = 
	\PermOp{\sigma} \BetaTerm{\sigma(i_0) ; \sigma(\bm{S})}.
\end{equation}
This enables us to investigate the commutation relation between $\BetaTerm{m}$ and permutation operators with using \eqref{eq:UmDecomp} or between $\TransOp{\bvec{S}_0}{}$ and permutation operators with using \eqref{eq:TSDecomp}.

\section{Eigenvector centrality and overlap centrality}\label{ECENTER}

In this section we see a relation between the eigenvector centrality and the overlap centrality of the neighbor matrix. 
Let us recall the definitions of their centrality in our context.
The eigenvector centrality is defined as a normalized eigenvector with the maximum eigenvalue of the neighbor matrix $\mathcal{R}$. Such an eigenvector exists uniquely and its components can be taken to be real and positive because of the Perron--Frobenius theorem \cite{Bonacich, Newman}. 
The overlap centrality \eqref{eq:definition_of_overlap_centrality} is the expectation value of how many agents are at the same site with a given agent.

First we describe our settings. We take a state $\ket{P} \in H_X^{\otimes N}$ corresponding to a probability distribution $P(\bvec{x})$.
$\ket{P}$ does not have to be a stationary state of a certain stochastic process.
The neighbor matrix $\mathcal{R} = (r_{ij})_{1 \le i, j \le N}$ of $\ket{P}$ is defined similar to \eqref{eq:definition_of_neighbor_matrix}, explicitly
\begin{equation}
    r_{ij} = \sum_{\bvec{x}}\delta(x_i,x_j)\braket{\bvec{x}|P}.
\end{equation}
We consider a decomposition of the state $\ket{P}$
\begin{align} \label{eq:state_decomp}
    \ket{P}
    = \ket{P_1}+\ket{P_2},
\end{align}
and define matrices $\mathcal{R}^{(\ell)} = (r^{(\ell)}_{ij})$ as
\begin{align}
    r_{ij}^{(\ell)} &= \sum_{\bvec{x}}\delta(x_i,x_j)\braket{\bvec{x}|P_\ell}
\end{align}
for $\ell=1,2$. By definition, we also have a decomposition of the neighbor matrix $\mathcal{R}$
\begin{equation}
    \mathcal{R}
    =
    {\mathcal{R}}^{(1)}+{\mathcal{R}}^{(2)}.
\end{equation}
We assume that the decomposition \eqref{eq:state_decomp} 
satisfies the following two conditions.

\begin{enumerate}[{(i)}]
    \item $\ket{P_1}$ is symmetric under permutations: that is, $\PermOp{\sigma}\ket{P_1}=\ket{P_1}$ for any $\sigma\in\mathfrak{S}_N$.  From this assumption there is a constant $c$ with $r_{ij}^{(1)} = c$ for any $i \neq j$.
    We assume that $c \neq 0$.
    \item The off-diagonal entries $r_{ij}^{(2)}$ of $\mathcal{R}^{(2)}$ are sufficiently smaller than $\abs{c}$: that is, $|r_{ij}^{(2)}| \ll |c|$ for any $i \neq j$. \label{cnd:small}
\end{enumerate}

We modify the matrices $\mathcal{R}^{(1)}, \mathcal{R}^{(2)}$ to $ \tilde{\mathcal{R}}^{(1)} = (\tilde{r}_{ij}^{(1)}),  \tilde{\mathcal{R}}^{(2)} = (\tilde{r}_{ij}^{(2)})$ 
such that all of the diagonal entries of $\tilde{\mathcal{R}}^{(2)}$ are zero and the following decomposition of $\mathcal{R}$ holds:
\begin{equation}\label{eq:decomp_with_fluctuation}
    \mathcal{R}
    =
    \tilde{\mathcal{R}}^{(1)}+\tilde{\mathcal{R}}^{(2)}.
\end{equation}
Considering all of the diagonal entries of $\mathcal{R}$ are one, we can take 
\begin{align}
    \tilde{r}_{ij}^{(1)} = \begin{dcases}
        1 & \text{(if $i=j$)} \\
        r_{ij}^{(1)}=c & \text{(if $i\neq j$)}
    \end{dcases} \quad \text{ and } \quad
    \tilde{r}_{ij}^{(2)} = \begin{dcases}
        0 & \text{(if $i=j$)} \\
        r_{ij}^{(2)} & \text{(if $i\neq j$)}.
    \end{dcases}
\end{align}
From the assumption \eqref{cnd:small}, the matrix $\tilde{\mathcal{R}}^{(2)}$ can be regarded as a perturbative part in $\mathcal{R}$.

The eigenvalue problem of $\tilde{\mathcal{R}}^{(1)}$ can be solved easily: eigenvalues of $\tilde{\mathcal{R}}^{(1)}$ are $1+(N-1)c$ and $1-c$, and their corresponding eigenspaces are $\C \bvec{v}_0$ and the orthogonal complement $(\C \bvec{v}_0)^\perp$, respectively, where $\bvec{v}_0\coloneqq \frac{1}{\sqrt{N}}(1,1,\ldots,1)^T\in\C^N$. In particular, if $\tilde{\mathcal{R}}^{(2)}$ vanishes, then the eigenvector centrality of $\mathcal{R}=\tilde{\mathcal{R}}^{(1)}$ is the vector $\bvec{v}_0$.

Let us use a first-order perturbation theory to calculate the eigenvector centrality $\bv{V}$ of $\mathcal{R}$. Here we introduce an orthonormal system $(\bvec{u}_i)_{i=1}^{N-1}$ of the vector space $(\C \bvec{v}_0)^\perp$. According to the Rayleigh--Schr\"{o}dinger type perturbation theory, we have
\begin{align}
    \bv{V}
    &\propto \bvec{v}_0 + \sum_{i=1}^{N-1}\frac{\bvec{u}_i \bvec{u}_i^\dag\tilde{\mathcal{R}}^{(2)}\bvec{v}_0}{1+(N-1)c-(1-c)}+\mathcal{O}(|\tilde{r}^{(2)}/c|^2)\\
    &= \bvec{v}_0 + \frac{(1-\bvec{v}_0 \bvec{v}_0^\dag)\tilde{\mathcal{R}}^{(2)}\bvec{v}_0}{Nc}+\mathcal{O}(|\tilde{r}^{(2)}/c|^2)\\
    &= \frac{1}{Nc}\mathcal{R}\bvec{v}_0-\bka{\frac{\bvec{v}_0^\dag\tilde{\mathcal{R}}^{(2)}\bvec{v}_0}{Nc}+\frac{1-c}{Nc}}\bvec{v}_0+\mathcal{O}(|\tilde{r}^{(2)}/c|^2)
\end{align}
up to the first order of $|\tilde{r}^{(2)}/c|\coloneqq \max_{1\le i,j\le N}|\tilde{r}_{ij}^{(2)}/c|$.
Since $(\mathcal{R}\bvec{v}_0)_i=\frac{1}{\sqrt{N}}\sum_{j=1}^N r_{ij}=\frac{1}{\sqrt{N}}(O_i-1)$, we obtain
\begin{align}\label{eq:relation_between_V_and_O}
    \bv{V}
    \propto \frac{1}{N^{3/2}c}\bv{O}-\gamma\times(1,1,\ldots,1)^T
    +\mathcal{O}(|\tilde{r}^{(2)}/c|^2),
\end{align}
where $\bv{O}=(O_i)_{i=1}^N$ is the vector consisting of the overlap centrality, and $\gamma$ is a constant which is expressed by
\begin{align}
    \gamma
    = \frac{1}{\sqrt{N}}\bka{\frac{\bvec{v}_0^\dag\tilde{\mathcal{R}}^{(2)}\bvec{v}_0}{Nc}+\frac{2-c}{Nc}}.
\end{align} 

Thus we find that the eigenvector centrality and the overlap centrality are equal up to multiplying by a constant and adding a vector in $\C(1,1,\ldots, 1)^T$. Note that the higher order terms of this perturbation can have a nontrivial $\bv{O}$-dependence, but are ignored in the approximation.

Let us apply this result to the case of the stationary state $\ket{P(\beta,p)}$ of the transition matrix $\TransOp{}{}(\beta,p)$.
Suppose that $p$ is small enough to be able to perform the perturbation expansion \eqref{ptheory}.
Under this assumption, we give the decomposition of $\ket{P(\beta,p)}$ as follows:
\begin{align}
    \ket{P(\beta,p)}
    &= \ket{P_1(\beta, p)}+\ket{P_2(\beta,p)},
    \label{eq:Pto2Vectors}\\
    \ket{P_1(\beta,p)}
    &= C(\beta,p)\ket{P_\text{can}(\beta)}, \\
    \ket{P_2(\beta,p)}
    &= C(\beta,p)\sum_{n=1}^\infty \bka{\GOp (\TransOp{}{}-\TransOp{0}{})}^n\ket{\Pcan(\beta)}.
\end{align}
Let us check that this decomposition satisfies the above condition (i) and (ii).
\begin{enumerate}[(i)]
    \item From \eqref{eq:Pcan_sym}, $\ket{P_1(\beta,p)}$ is symmetric under permutations. Since $\beta, E(\bvec{x})\in\R$, we have
\begin{align}
    c
    = \dfrac{C(\beta,p)}{Z_N(\beta)}\sum_{\bvec{x}\in X^N}\delta(x_1,x_2)e^{-\beta E(\bvec{x})}
    \neq 0.
\end{align}
    \item $\TransOp{}{}-\TransOp{0}{}=\mathcal{O}(p)$ leads to $\tilde{r}_{ij}^{(2)}=\mathcal{O}(p)$,  
    and $c$ is of order of unity in terms of $p$, which follows that $\abs{\tilde{r}^{(2)}/c}=\mathcal{O}(p)$.
\end{enumerate}
Hence, the relation \eqref{eq:relation_between_V_and_O} can be applied in the present case, and we obtain the expression \eqref{ov-centrality}, in which the term of $\mathcal{O}(\abs{\tilde{r}^{(2)}/c}^2)$ in \eqref{eq:relation_between_V_and_O} is replaced 
with $\mathcal{O}(p^2)$. 
Note that the manner of such a decomposition \eqref{eq:Pto2Vectors} is not unique. In particular, one can choose the coefficient of $\ket{\Pcan(\beta)}$ in $\ket{P_1(\beta,p)}$ as any form of $1+\mathcal{O}(p)$, though the values of $c$ and $\gamma$ depend on the manner.

\twocolumngrid


\begin{thebibliography}{99}
  \bibitem{Baxter} 
  R. J. Baxter, {\it Exactly Solved models in statistical mechanics} (Academic Press, London, 1982).
  
  \bibitem{Odor} 
  G. Ódor, 
  ``Universality classes in nonequilibrium lattice systems'', 
  Rev. Mod. Phys. \textbf{76}, 663 (2004).
  
  \bibitem{DLG} 
  K. -t. Leung, B. Schmittmann, and R. K. P. Zia, 
  ``Phase transitions in a driven lattice gas with repulsive interactions'', 
  Phys. Rev. Lett. \textbf{62}, 1772 (1989).
  
  \bibitem{ASEP} 
  G. Sch\"{u}tz and E. Domany, 
  ``Phase transitions in an exactly soluble one-dimensional exclusion process'', 
  J. Stat. Phys. \textbf{72}, 277 (1993).

\bibitem{ABC} 
M. Clincy, B. Derrida, and M. R. Evans, 
``Phase transition in the ABC model'', 
Phys. Rev. E \textbf{67}, 066115 (2003).

\bibitem{ZRP1} 
F. Spitzer, 
``Interaction of Markov processes'', 
Adv. Math. \textbf{5}, 246 (1970).

\bibitem{ZRP2} 
M. R. Evans and T. Hanney, 
``Nonequilibrium Statistical Mechanics of the Zero-Range Process and Related Models'', 
 	J. Phys. A: Math. Gen. \textbf{38}, R195 (2005).

\bibitem{Vicsek} 
T. Vicsek, A. Czir\'{o}k, E. Ben-Jacob, I. Cohen, and O. Shochet, 
``Novel Type of Phase Transition in a System of Self-Driven Particles'', 
Phys. Rev. Lett. \textbf{75}, 1226 (1995).

\bibitem{Lane1} 
J. Dzubiella, G. P. Hoffmann, and H. Lowen, 
``Lane formation in colloidal mixtures driven by an external field'', 
Phys. Rev. E \textbf{65}, 021402 (2002).

\bibitem{Bacteria}
A. G. Thompson, J. Tailleur, M. E. Cates, and R. A. Blythe, ``Lattice models of nonequilibrium bacterial dynamics'', J. Stat. Mech., P02029 (2011).
 
  \bibitem{V-Solon} 
  A. P. Solon and J. Tailleur, 
  ``Revisiting the Flocking Transition Using Active Spins'', 
Phys. Rev. Lett. \textbf{111}, 078101 (2013).

\bibitem{Lane2} H. Ohta, 
``Lane formation in a lattice model for oppositely driven binary particles'', 
Europhys. Lett. \textbf{99}, 40006 (2012).

\bibitem{Hydro}
M. Kourbane-Houssene, C. Erignoux, T. Bodineau, and J. Tailleur, ``Exact Hydrodynamic Description of Active Lattice Gases'', Phys. Rev. Lett. \textbf{120}, 268003 (2018).
 
  \bibitem{Freeman}
  L. C. Freeman, 
  ``Centrality in social networks: Conceptual clarification'', 
Soc. Netw. \textbf{1}, 215 (1979).
     
\bibitem{Bonacich} 
P. Bonacich, 
``Power and centrality: A family of measures'', 
Am. J. Sociol. \textbf{92}, 1170 (1987).
     
\bibitem{Barrat} 
A. Barrat, M. Bart\'{e}lemy, R. Pastor-Satorras, and A. Vespignani, 
``The architecture of complex weighted networks'', 
Proc. Natl. Acad. Sci. USA. \textbf{101}, 3747 (2004).
     
\bibitem{Newman} M. E. J. Newman, 
{\it Networks: An Introduction} 
(Oxford Univ. Press, Oxford, 2018).

\bibitem{primatebook} 
C. J. Campbell, A. Fuentes, K. C. MacKinnon, S. K. Bearder, and R. M. Stumpf,
{\it Primates in Perspective} 
(Oxford Univ. Press, Oxford, 2010).

\bibitem{pecentrality1}
C. Sueur, O. Petit, A. De Marco, A. T. Jacobs, K. Watanabe, and B. Thierry,
``A comparative network analysis of social style in macaques'',
Anim. Behav. \textbf{82}, 845 (2011).

\bibitem{pecentrality2} 
     L. J. Wooddell, S. S. K. Kaburu, and A. M. Dettmer,  
     ``Dominance rank predicts social network position across developmental stages in rhesus monkeys'', 
    Am. J. Primatol. \textbf{82}, e23024 (2020).

\bibitem{supplanting} 
A. O’Neill,
``Correlates between ovarian cycle phase and mating season behaviour in female Japanese macaques (\textit{Macaca fuscata})'' In chapter 7 in
{\it The Monkeys of Stormy Mountain} 
(Cambridge Univ. Press, Cambridge, 2012).

 \bibitem{Nakagawa} 
 N. Nakagawa, M. Nakamichi, and H. Sugiura, 
 {\it The Japanese Macaques}
 (Springer, Tokyo, 2010).

  \bibitem{Potts} 
  F. Y. Wu, 
  ``The Potts model'', 
Rev. Mod. Phys. \textbf{54}, 235 (1982).

\bibitem{Ellis} 
R. S. Ellis and K. Wang, 
``Limit theorems for the empirical vector of the Curie-Weiss-Potts model'', 
Stoch. Process. Their Appl. \textbf{35}, 59 (1990).  

  
  \bibitem{mean-field_Potts2013} 
  M. Ostilli and F. Mukhamedov, 
  ``Continuous- and discrete-time Glauber dynamics. First- and second-order phase transitions in mean-field Potts models'', 
  Europhys. Lett. \textbf{101}, 60008 (2013).
 
  \bibitem{mean-field_Potts2020} 
  M. Ostilli and F. Mukhamedov, 
  ``1D Three-state mean-field Potts model with first- and second-order phase transitions'', 
  Physica A \textbf{555}, 124415 (2020).

\bibitem{Brilloin-Wigner-type perturbation theory} I. Hubač and S. Wilson, 
 {\it Brillouin-Wigner Methods for Many-Body Systems}  (Progress in Theoretical Chemistry and Physics, vol 21. Springer, Dordrecht, 2010).
 
   \bibitem{V-Tasaki} 
   H. Tasaki, 
   ``Hohenberg-Mermin-Wagner-Type Theorems for Equilibrium Models of Flocking'', 
  Phys. Rev. Lett. \textbf{125}, 220601 (2020).
  
  \bibitem{rank_location1} 
  C. H. Janson, 
  ``Social correlates of individual spatial choice in foraging groups of brown capuchin monkeys, \textit{Cebus apella}'', 
  Anim. Behav. \textbf{40}, 910 (1990).
  
  \bibitem{rank_location2} 
  M. Heesen, S. Macdonald, J. Ostner, and O. Sch\"{u}lke,  
  ``Ecological and social determinants of group cohesiveness and within‐group spatial position in wild Assamese macaques'', 
  Ethology \textbf{121}, 270 (2015).
 
\bibitem{rank_distance1}
  H. Sugiura, Y. Shimooka, and Y. Tsuji, 
  ``Variation in spatial cohesiveness in a group of Japanese macaques (\textit{Macaca fuscata})'', 
  Int. J. Primatol. \textbf{32}, 1348 (2011).
  
\bibitem{rank_distance2}
  M. Nishikawa, M. Suzuki, and D. S. Sprague, 
  ``Activity and social factors affect cohesion among individuals in female Japanese macaques: A simultaneous focal-follow study'', 
  Am. J. Primatol. \textbf{76}, 694 (2014).
  

\end{thebibliography}
\end{document}